\newcommand{\bra}[1]{\langle #1|}
\newcommand{\ket}[1]{|#1\rangle}
\newcommand{\ketbra}[1]{| #1\rangle \langle #1|}
\newcommand{\ketbradot}[1]{| #1\rangle \langle \cdot |}
\newcommand{\be}{\begin{equation}}
\newcommand{\ee}{\end{equation}}
\newcommand{\eea}{\end{eqnarray}}
\newcommand{\bea}{\begin{eqnarray}}
\newcommand{\g}[2]{\ensuremath{\gamma_{#1}^{#2}}}
\newcommand{\eins}{\openone}
\newcommand{\comp}[1]{\ensuremath{\overline{#1}}}
\newcommand{\W}{\ensuremath{\mathcal{W}}}
\newcommand{\NN}{\ensuremath{\mathcal{N}}}
\newcommand{\AAA}{\ensuremath{\mathcal{A}}}
\newcommand{\BB}{\ensuremath{\mathcal{B}}}
\newcommand{\SSS}{\ensuremath{\mathcal{S}}}
\newcommand{\TT}{\ensuremath{\mathcal{T}}}
\newcommand{\DD}{\ensuremath{\mathcal{D}}}
\newcommand{\kommentar}[1]{}
\newcommand{\trace}{{\rm Tr}}
\newcommand{\NNN}{\ensuremath{\widetilde{\mathcal{N}}}}
\newcommand\T{\rule{0pt}{3.2ex}}
\newcommand\B{\rule[-1.2ex]{0pt}{0pt}}
\newcommand\Tsmall{\rule{0pt}{2.6ex}}
\newcommand\Bsmall{\rule[-1.2ex]{0pt}{0pt}}
\newcommand{\ie}{i.e.}
\renewcommand{\vr}{\ensuremath{\varrho}}
\newcommand{\forget}[1]{}
\newtheorem{lemma}{Lemma}
\newtheorem{corollary}[lemma]{Corollary}
\begin{document}
\title{Entanglement Witnesses for Graph States: General Theory and Examples}

\author{Bastian Jungnitsch}
\affiliation{Institut f\"{u}r Quantenoptik und Quanteninformation,
\"{O}sterreichische Akademie der Wissenschaften, Technikerstra{\ss}e
21A, A-6020 Innsbruck, Austria\\}
\author{Tobias Moroder}
\affiliation{Institut f\"{u}r Quantenoptik und Quanteninformation,
\"{O}sterreichische Akademie der Wissenschaften, Technikerstra{\ss}e
21A, A-6020 Innsbruck, Austria\\}
\author{Otfried G\"uhne}
\affiliation{Naturwissenschaftlich-Technische Fakult\"at, Universit\"at Siegen, Walter-Flex-Stra{\ss}e 3, D-57068 Siegen, Germany\\}
\affiliation{Institut f\"{u}r Quantenoptik und Quanteninformation,
\"{O}sterreichische Akademie der Wissenschaften, Technikerstra{\ss}e
21A, A-6020 Innsbruck, Austria\\}

\date{\today}
\begin{abstract}
We present a general theory for the construction of witnesses that detect genuine 
multipartite entanglement in graph states. First, we present explicit witnesses for 
all graph states of up to six qubits which are better than all criteria so far.
Therefore, lower fidelities are required in experiments that aim at the preparation 
of graph states. Building on these results, we develop analytical methods to 
construct two different types of entanglement witnesses for general graph states. 
For many classes of states, these operators exhibit white noise tolerances that converge 
to one when increasing the number of particles. We illustrate our approach for states 
such as the linear and the 2D cluster state. Finally, we study an entanglement 
monotone motivated by our approach for graph states.
\end{abstract}

\pacs{}
\maketitle

\section{Introduction}
The key role of entanglement is illustrated not only by its usefulness in many 
quantum-informational tasks, such as measurement-based quantum computation \cite{mqc} 
and high-precision metrology \cite{metrology}, but also by its fundamental importance 
for excluding certain models of nature in Bell tests \cite{belltests}. Nowadays, 
experiments have succeeded in the preparation of 14-qubit systems in ion traps 
\cite{blatt} and ten-qubit systems in photonic systems \cite{pan}, so the 
characterization of multipartite entanglement is of high interest. Especially 
in the case of the most interesting kind of entanglement, genuine multipartite 
entanglement, general treatments turn out to be difficult \cite{multient,ghzlimit,huberdicke, multientrev}.

In Ref.~\cite{ourpaper} we proposed an alternative approach to this 
characterization by considering a relaxed version of the problem, leading 
to a criterion for genuine multipartite entanglement. Besides being easily 
implementable as a semidefinite program, it also provides surprisingly strong
analytical entanglement criteria which can then be investigated further and 
generalized. As a first step, 
this has been done for the linear cluster state in Ref.~\cite{ourpaper}.

In this paper, we use this approach to develop a general theory of witnesses for 
graph states. Graph states are a family of multi-qubit states which are of 
eminent importance for tasks like measurement-based quantum computation \cite{mqc}
or quantum error correction \cite{errcorr}. These states have several interesting 
properties, for instance they are relatively robust against decoherence and 
violate certain Bell inequalities maximally \cite{heindiss}. Recently, several experiments
succeeded in preparing graph states of several qubits with photons \cite{pan, graphexp}, 
and also the theory of entanglement detection for such experiments has been
investigated in a number of papers \cite{entdecstab,entdecstabpra}.

The main results of our paper can be grouped into two parts: First, we provide 
entanglement criteria, so-called entanglement witnesses, for all graph states 
up to six qubits. These witnesses are optimal in the framework of Ref.~\cite{ourpaper},
they detect more states than the graph state witnesses known so far and thus require a 
lower fidelity when measured in an experiment. 


\begin{figure}
\includegraphics[width=0.9\columnwidth]{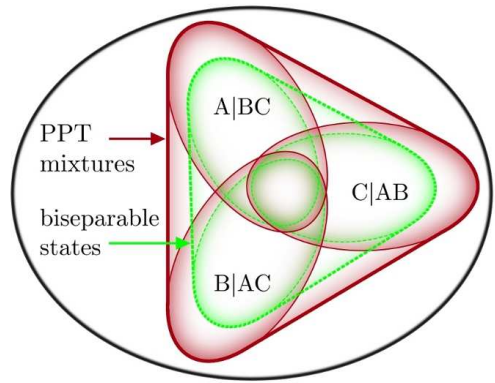}
\caption{\label{fig:PPTsets}In the case of three qubits, there are three convex sets of states that are separable with respect to a fixed bipartition, namely the bipartitions $A|BC$, $B|AC$ and $C|AB$ (green, dashed lines). The set of biseparable states (thick green, dashed line) is given by their convex hull. Each set of states that are separable with respect to a fixed bipartition is contained within the larger set of states that are PPT with respect to this bipartition (red, solid lines). The set of PPT mixtures (thick red, solid line) is then given by the convex hull of these larger sets.}
\end{figure}

Second, we extend our results to arbitrary qubit numbers by providing a general theory 
of how to construct witnesses for arbitrary graph states. In many cases, these witnesses 
improve the best known witnesses so far and have white noise tolerances that approach 
one for an increasing particle number. This implies that for this type of noise the 
state fidelity can decrease exponentially with the number of qubits, but still entanglement 
is present and can be detected. Moreover, this improvement comes with very low 
experimental costs, since it is realized by measuring one additional setting in the 
experiment. Furthermore, a similar improvement can be achieved for witnesses that 
require only two settings to be measured \cite{entdecstab}, which results in improved 
witnesses that consist of only two experimental settings in total. 

The paper is structured as follows. In Sec.~\ref{sec:setting}, we start by presenting 
the structure of entanglement in the multipartite case and introducing the notions 
that will be used later, such as entanglement witnesses and graph states. Then, 
we will briefly recall the criterion of Ref.~\cite{ourpaper} in Sec.~\ref{sec:criterion}.
We will show that it can be reduced to a linear program in the case of graph-diagonal 
states. 

Having laid the foundations, we first consider a certain class of witnesses, 
namely the class of fully decomposable witnesses \cite{ourpaper, lewenstein}. 
This is done in Sec.~\ref{sec:fdecwit}. We provide entanglement witnesses for 
all graph states of up to six qubits in Sec.~\ref{sec:6qubits}. 
Then, in Sec.~\ref{sec:anamethfdecwit}, we present analytical construction 
methods. We provide examples and give an extended construction for particular states in 
Sec.~\ref{sec:extconstr} including further examples. 

In Sec.~\ref{sec:fpptwit}, we move on to another class of witnesses, the fully 
PPT witnesses which are easier to characterize 
\cite{ourpaper}. Here, we do not only provide a construction method for witnesses of 
this class (in Sec.~\ref{sec:anamethfpptwit}), but we can extend it to an even 
larger number of graph states compared with the case of fully decomposable witnesses. 
We present this extension in Sec.~\ref{sec:exconstrfpptwit}. In order to illustrate 
that the presented methods can be exploited further, we supply a witness for the 2D 
cluster state (Sec.~\ref{sec:2dclfpptwit}). 

Finally, we discuss an entanglement monotone for genuine multiparticle entanglement
coming from the approach of Ref.~\cite{ourpaper} and show that graph states are the
maximally entangled states for this entanglement measure. In the conclusion, we disscuss 
our results and possible extensions for the future. In order to make this paper as 
readable as possible, we provide nearly all proofs in the Appendix. 

\section{Setting the stage}
\label{sec:setting}

\subsection{Multipartite entanglement}

First, we discuss the structure of the set of entangled states of multipartite systems, \ie, for systems of more than two particles \cite{multientrev}. For the sake of an easy illustration (cf. Fig. \ref{fig:PPTsets}), we consider the case of three particles here. Nevertheless, the generalization to a higher number of particles is straightforward.

A three-qubit state $\rho$ is {\it separable} with respect to some bipartition, say, $A|BC$, if it can be written in the form
\be
\label{eq:sep}
\vr = \sum_k q_k \ketbra{\phi^{k}_{A}} \otimes \ketbra{\psi^{k}_{BC}} \:.
\ee
Here, the coefficients $q_k$ form a probability distribution, \ie, they are positive and sum up to one. Let us denote states of this type by $\vr_{A|BC}^{\rm sep}$. Analogously, we define the sets of states which are separable with respect to other bipartitions and denote them by $\vr_{B|AC}^{\rm sep}$ and $\vr_{C|AB}^{\rm sep}$. In Fig. \ref{fig:PPTsets}, these three sets are drawn with a dashed, green border.

A state is called {\it biseparable}, if it can be written as a convex sum of states each of which is separable with respect to some bipartition. That is, any biseparable state $\vr^{\rm bs}$ can be written in the form
\be
\label{eq:bisep}
\vr^{\rm bs} = p_1 \vr_{A|BC}^{\rm sep} + p_2 \vr_{B|AC}^{\rm sep} + p_3 \vr_{C|AB}^{\rm sep}\:,
\ee
where, the $p_k$ form a probability distribution. Thus, the set of biseparable states is given by the convex hull of states that are separable with respect to some bipartition. In Fig. \ref{fig:PPTsets}, we show this convex hull with a dashed, thick green border. Any state that is not biseparable is called {\it genuinely multipartite entangled}.

Genuine multipartite entanglement is the strongest kind of entanglement, since biseparable states can be created by entangling, say, only two of three particles and then, to create a statistical mixture, forgetting to which pairs this operation was applied. In fact, in order to detect genuine multipartite entanglement, it is not enough to apply a bipartite criterion to every possible bipartition. Instead, in order to prove that a multipartite state is entangled, one has to show that it cannot be written in the form of Eq.~(\ref{eq:bisep}). 

There is, however, no efficient way to search through all possible decomposition of this form. Thus, it was the idea of Ref.~\cite{ourpaper} to relax the condition of being biseparable.

More precisely, for each fixed bipartition, we consider a superset of the set of separable states which can be characterized more easily than the set of separable states itself. There are different possible choices for supersets. However, in this paper, as a superset of the states that are separable with respect to, say, partition $A|BC$, we select the set of states that have a positive partial transposition (PPT) with respect to partition $A|BC$. A state $\vr=\sum_{ijkl} \vr_{ij,kl} \ket{i}\bra{j}\otimes \ket{k}\bra{l}$ is said to be a {\it PPT state} (with respect to $A|BC$), if its partial transposition 
\be
\label{eq:PT}
\vr^{T_A}=\sum_{ijkl} \vr_{ji,kl} \ket{i}\bra{j}\otimes \ket{k}\bra{l}
\ee
has no negative eigenvalues. We denote a state of this type by $\vr_{A|BC}^{\rm ppt}$ (and analogously for the other bipartitions). We refer to a state in the convex hull of these sets of PPT states as a {\it PPT mixture}. The set of PPT mixtures is therefore the set of states that can be written in the form
\be
\label{eq:PPTmix}
\vr^{\rm pmix} = p_1 \vr_{A|BC}^{\rm ppt} + p_2 \vr_{B|AC}^{\rm ppt} + p_3 \vr_{C|AB}^{\rm ppt}\:,
\ee
where, again, the $p_k$ form a probability distribution. In Fig. \ref{fig:PPTsets}, this set is shown with a solid, thick red border.

Since every separable state is necessarily PPT \cite{PPTcriterion, horodeckis}, every biseparable state is a PPT mixture. Therefore, showing that a state is no PPT mixture implies that it is not biseparable and therefore genuinely multipartite entangled. For some prominent states affected by white noise, such as the three- and the four-qubit GHZ state, the three-qubit W state and the four-qubit linear cluster state, being no PPT mixture happens to be necessary and sufficient for entanglement \cite{ourpaper}. This is also true for the case that the PPT states $\vr_{A|BC}^{\rm ppt}$, $\vr_{B|AC}^{\rm ppt}$ and $\vr_{C|AB}^{\rm ppt}$ live on a subspace of dimension $2 \otimes 2$ or $2 \otimes 3$. However, since there exist states that are PPT with respect to every bipartition and therefore of the form given in Eq.~(\ref{eq:PPTmix}), but are nevertheless genuinely multipartite entangled \cite{symmstates}, not every entangled state can be detected in this way.

By considering the set of PPT mixtures, we exploit that it can be characterized more easily than the set of biseparable states. Numerically, this characterization allows for the use of linear semidefinite programming (SDP) \cite{sdp} --- a standard problem of constrained convex optimization theory. As we will see later, for an important class of states, namely so-called graph-diagonal states, this characterization can be cast into the form of a linear program (LP) which is an even simpler program.

\subsection{Entanglement witnesses}
A useful tool that is often employed in experiments to show that a state is entangled are {\it entanglement witnesses}. A witness for genuine multipartite entanglement is an observable $W$ that has a non-negative expectation value on all biseparable states, but a negative expectation value on at least one entangled state. Therefore, measuring a negative expectation value for $W$ in an experiment proves the presence of entanglement.

Every entangled state is detected by at least one witness \cite{horodeckis}. Therefore, the question whether, for a given state $\vr$, there exists a witness that detects it is equivalent to the question whether $\vr$ is entangled. 

Let us now consider a certain subclass of witnesses that is central to our approach. In the case of two particles, A and B, a {\it decomposable witness} is defined as a witness $W$ that can be written as
\be
\label{eq:decwit}
W = P + Q^{T_A} \:,
\ee
where $P$ and $Q$ have no negative eigenvalues, \ie, are positive semidefinite, denoted by $P \geq 0$, $Q \geq 0$ \cite{lewenstein}. Furthermore, $T_A$ is the partial transposition with respect to $A$ as defined by Eq.~(\ref{eq:PT}).

It can be easily seen that observables of the form given by Eq.~(\ref{eq:decwit}) are positive (or zero) on all separable states $\vr^{\rm sep}$. Any separable state $\vr^{\rm sep}$ has a positive partial transpose and therefore 
\begin{align}
\label{eq:seppos}
\trace(\vr^{\rm sep}W)= \: &\trace(\vr^{\rm sep}P)+ \trace(\vr^{\rm sep}Q^{T_A})\\
= \: & \trace(\vr^{\rm sep}P)+ \trace[(\vr^{\rm sep})^{T_A}Q] \geq 0.
\end{align}

We can now generalize this definition to multipartite systems. A witness $W$ is called {\it fully decomposable}, if, for every strict subset $M$ of the set of all particles $\lbrace A, B, C, D, \dots \rbrace$, $W$ is decomposable with respect to the bipartition given by $M$ and its complement $\comp{M}$. In other words, there exist positive semidefinite operators $P_M$, $Q_M$, such that
\be
\label{eq:fullydecwit}
{\rm for \: all} \: M \subset \lbrace A, B, C, D, \dots \rbrace: \: W = P_M + Q_M^{T_M} \: .
\ee
For example, in the case of three qubits, a fully decomposable witness can be written in three ways,  
\be
\label{eq:3decwit}
W =\: P_A +  Q_A  ^{T_A} = \: P_B +  Q_B  ^{T_B} = \: P_C +  Q_C  ^{T_C} \:,
\ee
where all operators $P_M$ and $Q_M$ are positive semidefinite. Note that, e.g., the existence of the two positive operators $P_A$ and $Q_A$ implies the existence of two positive operators $P_{BC}$ and $Q_{BC}$. One simply has to set $P_{BC} = P_A$ and $Q_{BC} = Q_A^T$. Since $Q_M \geq 0 \Leftrightarrow Q_M^T \geq 0$, the two operators $P_{BC}$ and $Q_{BC}$ defined in this way are positive. Due to $Q_M^{T_M} = (Q_M^T)^{T_{\comp{M}}}$, they also obey Eq.~(\ref{eq:fullydecwit}).

Now, let us make the connection between the notions of PPT mixtures and fully decomposable witnesses by citing the following lemma of Ref. \cite{ourpaper}, which is based on \cite[Theorem 3]{lewenstein}. For the sake of completeness, we present it again here.
\begin{lemma}
\label{lem:1} 
$\varrho$ is a PPT mixture if and only if every fully decomposable witness $W$ is non-negative on $\varrho$.
\end{lemma}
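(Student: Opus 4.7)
The plan is to handle the two directions separately, using elementary linear-algebra manipulation for the forward direction and convex duality combined with Lewenstein's bipartite characterization (Theorem~3 of Ref.~\cite{lewenstein}) for the reverse.

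For the ``only if'' direction, suppose $\varrho = \sum_M p_M \varrho_M^{\rm ppt}$ is a PPT mixture and let $W$ be fully decomposable. For each bipartition $M|\comp{M}$ I would pick the decomposition $W = P_M + Q_M^{T_M}$ matching that bipartition and compute
\[
\tr(\varrho_M^{\rm ppt} W) = \tr(\varrho_M^{\rm ppt} P_M) + \tr\!\left[(\varrho_M^{\rm ppt})^{T_M} Q_M\right] \geq 0,
\]
which is manifestly non-negative since all four operators are positive semidefinite. Summing against the probabilities $p_M$ yields $\tr(\varrho W) \geq 0$.

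For the ``if'' direction I would argue by contrapositive: assume $\varrho$ is not a PPT mixture, and construct a fully decomposable $W$ with $\tr(W\varrho) < 0$. The set of PPT mixtures is convex and compact in finite dimensions, being the continuous image of the compact product of the probability simplex with the trace-one PPT sets for each bipartition. Hence the Hahn-Banach separating hyperplane theorem provides a Hermitian $W$ and $c \in \mathbb{R}$ such that $\tr(W\sigma) \geq c$ for every PPT mixture $\sigma$, while $\tr(W\varrho) < c$. Replacing $W$ by $W - c\,\openone$ and using that all states involved have trace one, I may assume $c=0$.

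To verify that this separating $W$ is fully decomposable, I would fix an arbitrary bipartition $M|\comp{M}$: every trace-one PPT state with respect to $M|\comp{M}$ is itself a PPT mixture, so $\tr(W\sigma_M) \geq 0$ for all such states and, by positive scaling, on the entire PPT cone for this bipartition. Lewenstein's theorem then supplies positive semidefinite $P_M, Q_M$ with $W = P_M + Q_M^{T_M}$. Since this holds for every $M$, $W$ satisfies Eq.~(\ref{eq:fullydecwit}) and is fully decomposable, yielding the promised contradiction with $\tr(W\varrho) < 0$. The main obstacle is the invocation of Lewenstein's duality: Hahn-Banach alone only yields an abstract $W$ non-negative on each PPT set, and it is precisely the bipartite duality between PPT states and decomposable operators that converts this into the explicit $P_M + Q_M^{T_M}$ form required by full decomposability simultaneously for every bipartition.
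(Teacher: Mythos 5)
Your proposal is correct and follows essentially the same route as the paper: the ``only if'' direction is the direct computation of Eq.~(\ref{eq:seppos}) applied term by term to the PPT mixture, and the ``if'' direction uses convexity and compactness of the set of PPT mixtures to separate $\varrho$ by a hyperplane, then invokes the bipartite result of Ref.~\cite{lewenstein} for each bipartition to obtain the decompositions $W = P_M + Q_M^{T_M}$. Your write-up merely makes the Hahn--Banach separation and the normalization $c=0$ explicit, which the paper leaves implicit.
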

\begin{proof}
``If'': Let us show that if a state is no PPT mixture, there is a fully decomposable witness that detects it. We note that the set of PPT mixtures is convex and compact. Therefore, for any state outside of it, there exists a witness that detects it and is positive on the set of PPT mixtures. Moreover, an operator which is positive on all states that are PPT with respect to a fixed (but arbitrary) bipartition is decomposable with respect to this fixed (but arbitrary) bipartition \cite{lewenstein}. Thus, $W = P_M + Q_M^{T_M}$ for any $M$.\\
``Only if'': A reasoning as in Eq.~(\ref{eq:seppos}) shows that fully decomposable witnesses are non-negative on any state that is PPT with respect to some bipartition. Therefore, these witnesses are also non-negative on all PPT mixtures.
\end{proof}

In the language of constrained optimization theory, the search for a fully decomposable witness with negative expectation value on $\vr$ is the dual problem to the search for a decomposition into PPT states as in Eq.~(\ref{eq:PPTmix}).

In the following, we will often use the subclass of the set of fully decomposable witnesses that we obtain when we require that ${\rm for \: all} \: M \subset \lbrace A, B, C, D, \dots \rbrace : \: P_M = 0$. We call them {\it fully PPT witnesses} and they are defined by
\be
\label{eq:fullyPPTwit}
{\rm for \: all} \: M \subset \lbrace A, B, C, D, \dots \rbrace : \: W^{T_M} \geq 0 \: .
\ee
Fully PPT witnesses are easier to characterize analytically than fully decomposable witnesses. This is due to the fact that, in order to show that an operator $W$ is a fully decomposable witness, one has to find a positive operator $P_M$ for every $M$ and prove the positivity of the corresponding $Q_M$ obtained from Eq.~\ref{eq:fullydecwit}. For fully PPT witnesses, it suffices to show that the partial transpose of $W$ with respect to any possible bipartition is positive.

Before we state the criterion for genuine multipartite entanglement of Ref. \cite{ourpaper}, we finish our introduction to the terms used in this paper by recalling some facts about the widely-used class of graph states.

\subsection{Graph states}
\label{sec:graph_states}
Graph states are defined by mathematical graphs in the following way \cite{heindiss}. 
Given a graph $G = (V,E)$ that is defined by a set $V$ of vertices which correspond to qubits and a set $E$ of edges that connect some of these vertices (cf. the examples in Table~\ref{tab:graphstates}). We denote the number of vertices by $n$.

Then, one can define a set of $n$ operators
\be
\label{eq:generators}
g_i = X_i \prod_{k \in \NN(i)} Z_k, \: i = 1, \dots ,n \:,
\ee
where $\NN(i)$ is the {\it neighborhood} of qubit $i$, \ie, the set of all qubits that are connected to qubit $i$ by an edge. Furthermore, $X_i$ and $Z_i$ are the Pauli operators $\sigma_x$ and $\sigma_y$, respectively, that act on qubit $i$.

\begin{table}[ht!!]
\centering
\begin{tabular}{c c c}
No. 1 --- Bell state & No. 2 --- $\rm GHZ_3$ & No. 3 --- $\rm GHZ_4$ \\
\parbox[c]{0.25\columnwidth}{\includegraphics[width=0.20\columnwidth]{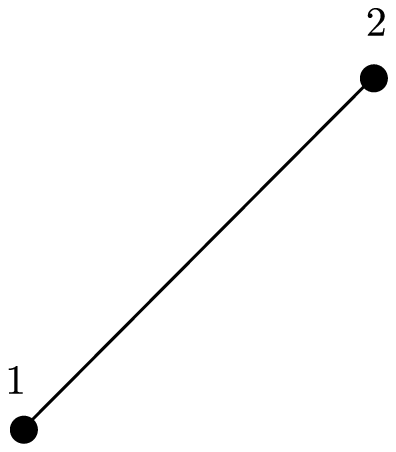}} & \parbox[c]{0.25\columnwidth}{\includegraphics[width=0.20\columnwidth]{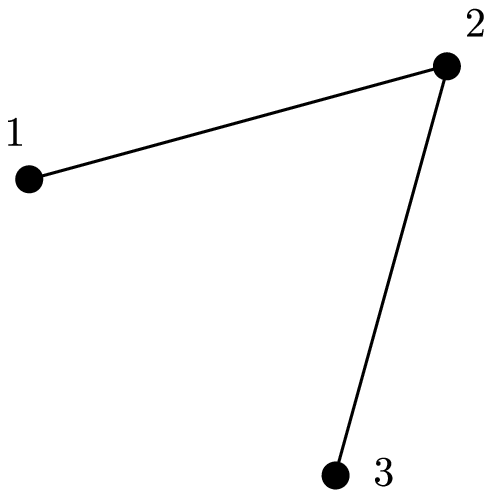}} &\parbox[c]{0.25\columnwidth}{\includegraphics[width=0.20\columnwidth]{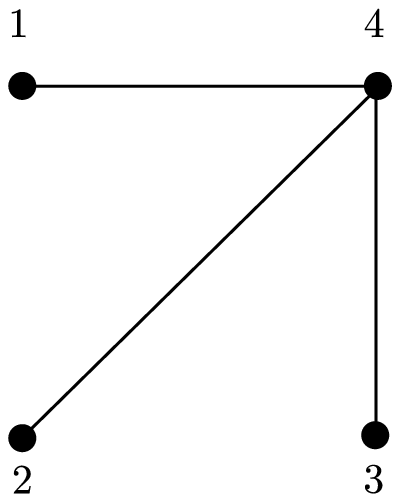}}\\
\T \B  No. 4 --- $\rm Cl_4$ & No. 5 --- $\rm GHZ_5$ & No. 6 --- $\rm Y_5$\\
\parbox[c]{0.25\columnwidth}{\includegraphics[width=0.20\columnwidth]{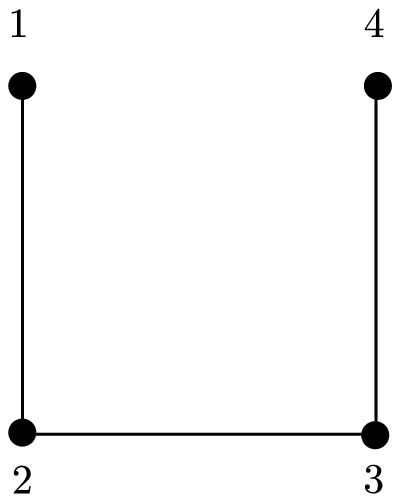}} & \parbox[c]{0.25\columnwidth}{\includegraphics[width=0.25\columnwidth]{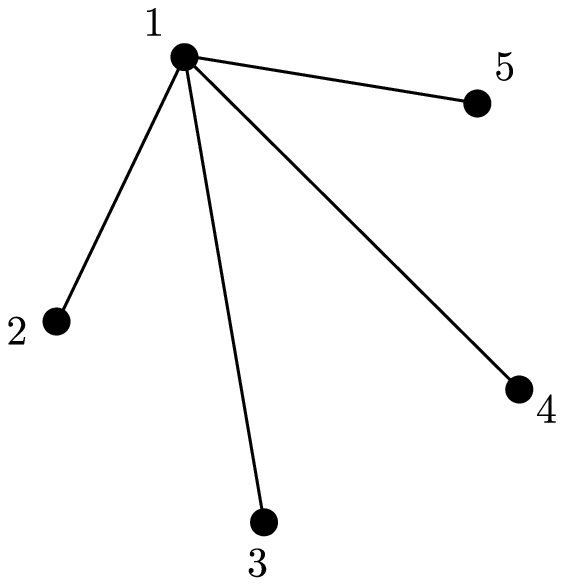}} &\parbox[c]{0.25\columnwidth}{\includegraphics[width=0.25\columnwidth]{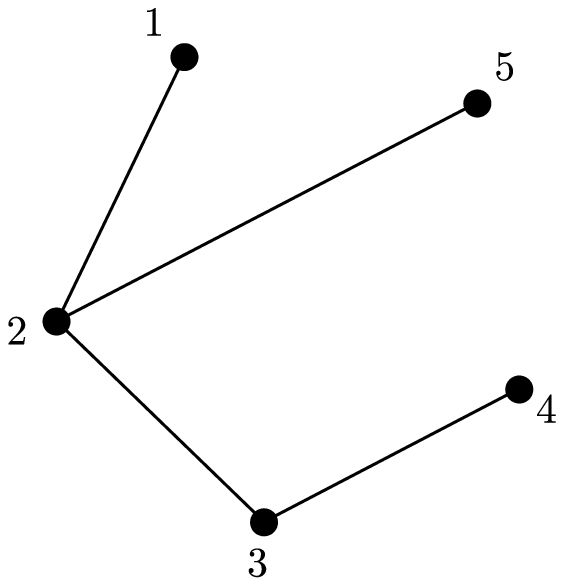}}\\
\T \B  No. 7 --- $\rm Cl_5$ & No. 8 --- $\rm R_5 $ & No. 9 --- $\rm GHZ_6$ \\
\parbox[c]{0.25\columnwidth}{\includegraphics[width=0.25\columnwidth]{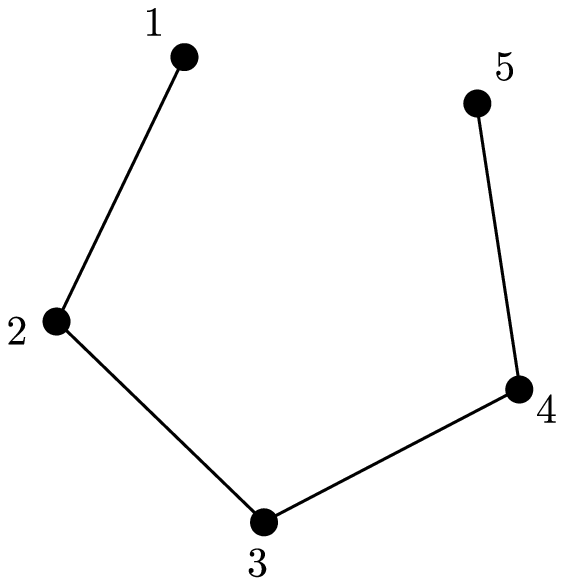}} & \parbox[c]{0.25\columnwidth}{\includegraphics[width=0.25\columnwidth]{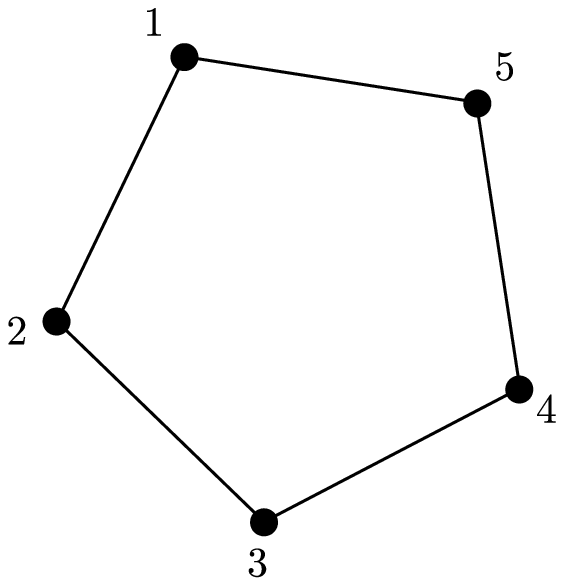}} &\parbox[c]{0.25\columnwidth}{\includegraphics[width=0.25\columnwidth]{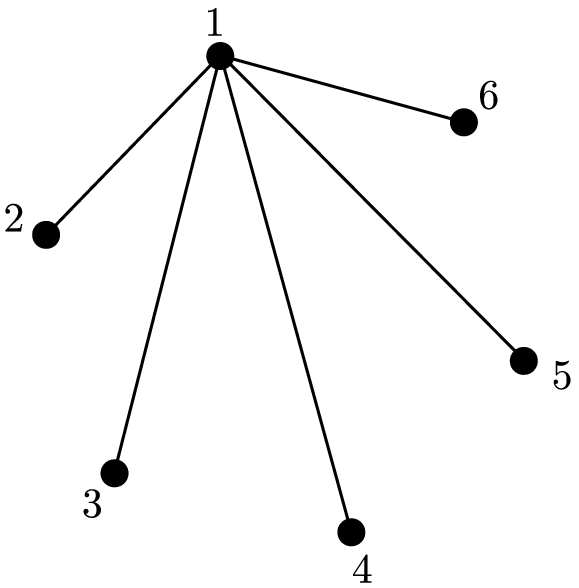}}\\
\T \B  No. 10 & No. 11 --- $\rm H_6$& No. 12 --- $\rm Y_6$\\
\parbox[c]{0.25\columnwidth}{\includegraphics[width=0.25\columnwidth]{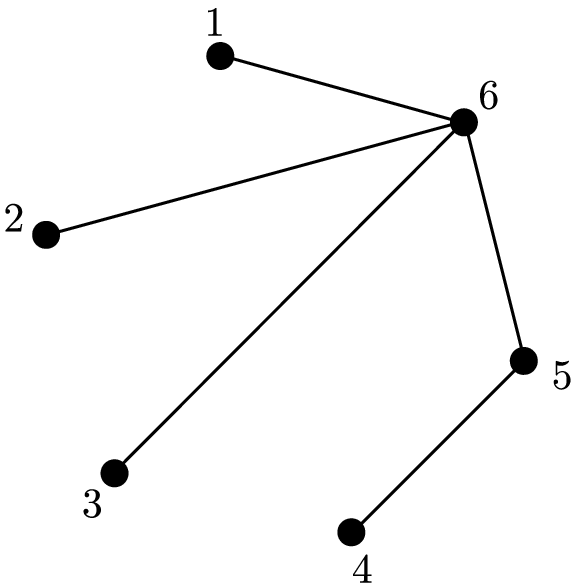}} & \parbox[c]{0.25\columnwidth}{\includegraphics[width=0.25\columnwidth]{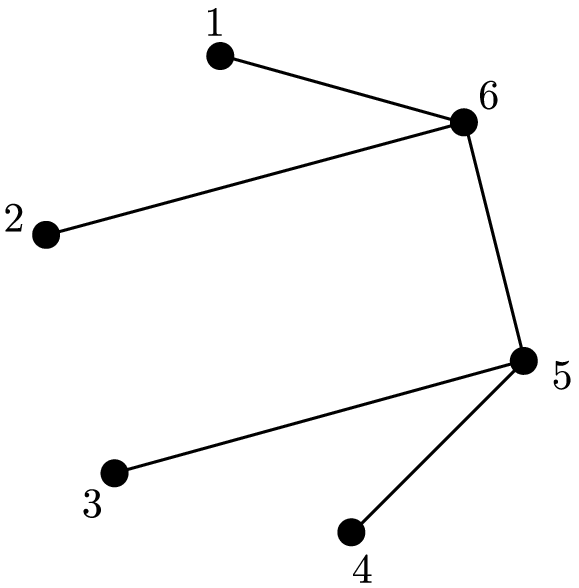}} &\parbox[c]{0.25\columnwidth}{\includegraphics[width=0.25\columnwidth]{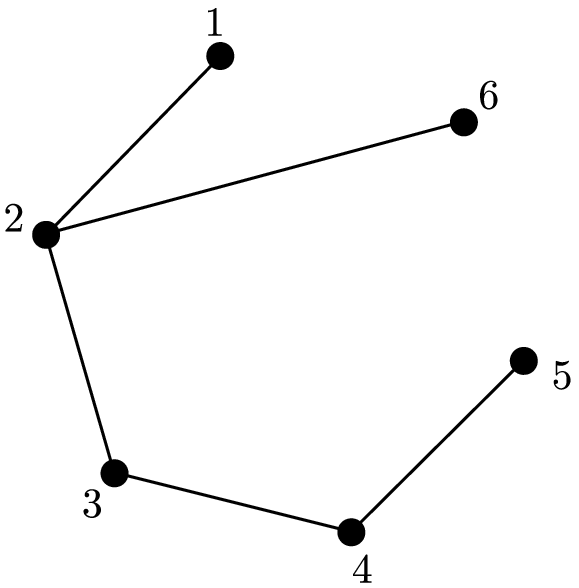}}\\
\T \B  No. 13 --- $\rm E_6$ & No. 14 --- $\rm Cl_6$& No. 15 \\
\parbox[c]{0.25\columnwidth}{\includegraphics[width=0.25\columnwidth]{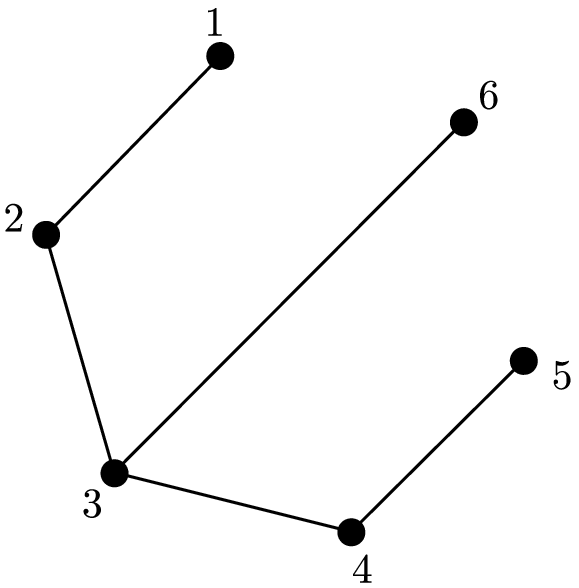}} & \parbox[c]{0.25\columnwidth}{\includegraphics[width=0.25\columnwidth]{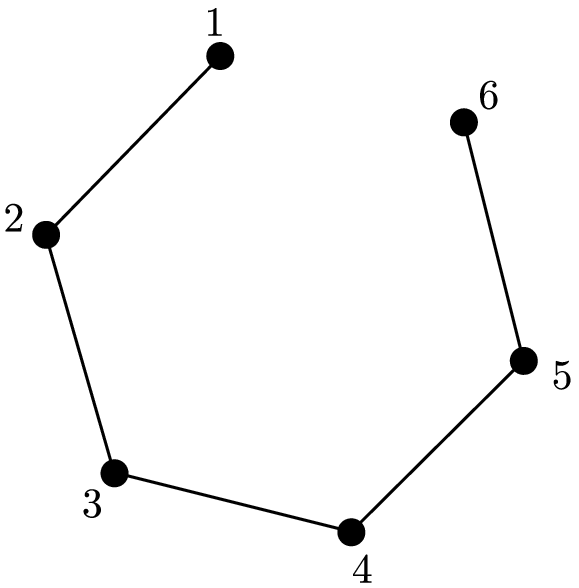}} &\parbox[c]{0.25\columnwidth}{\includegraphics[width=0.25\columnwidth]{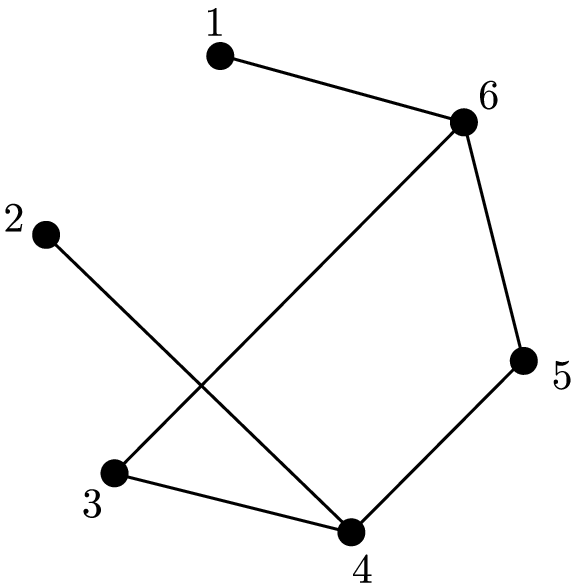}}\\
\T \B  No. 16 & No. 17 & No. 18 --- $\rm R_6$ \\
\parbox[c]{0.25\columnwidth}{\includegraphics[width=0.25\columnwidth]{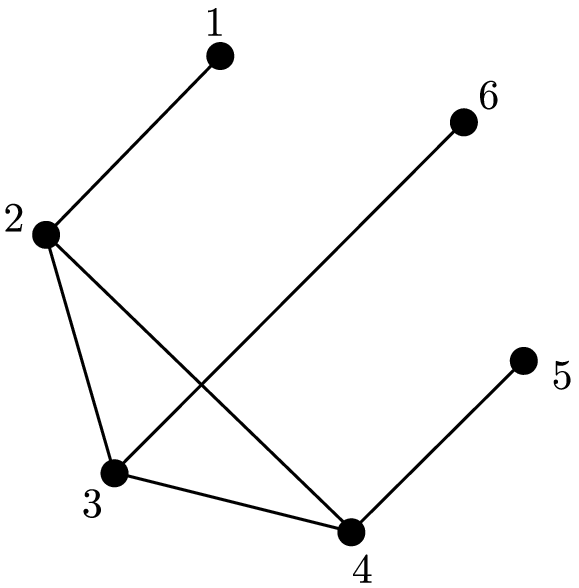}} & \parbox[c]{0.25\columnwidth}{\includegraphics[width=0.25\columnwidth]{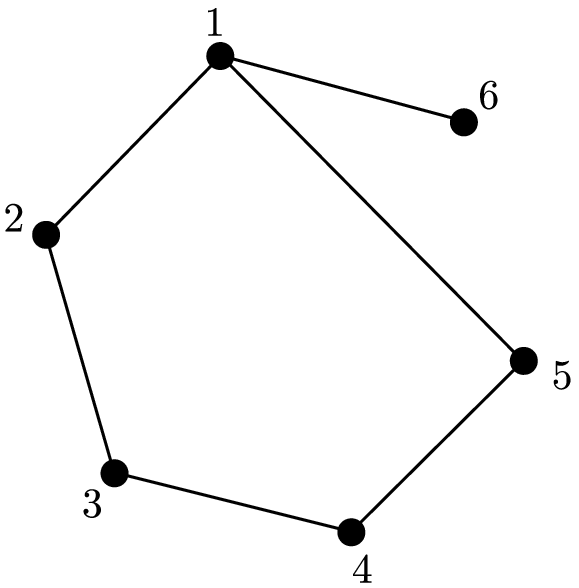}} &\parbox[c]{0.25\columnwidth}{\includegraphics[width=0.25\columnwidth]{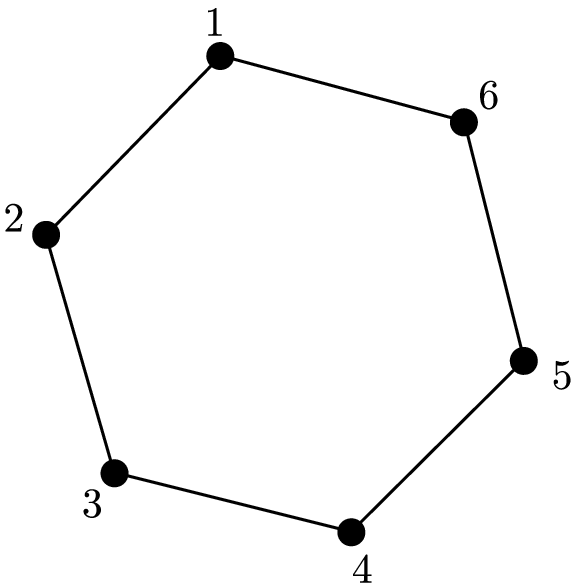}}\\
\T \B  No. 19 & &  \\
\parbox[c]{0.25\columnwidth}{\includegraphics[width=0.25\columnwidth]{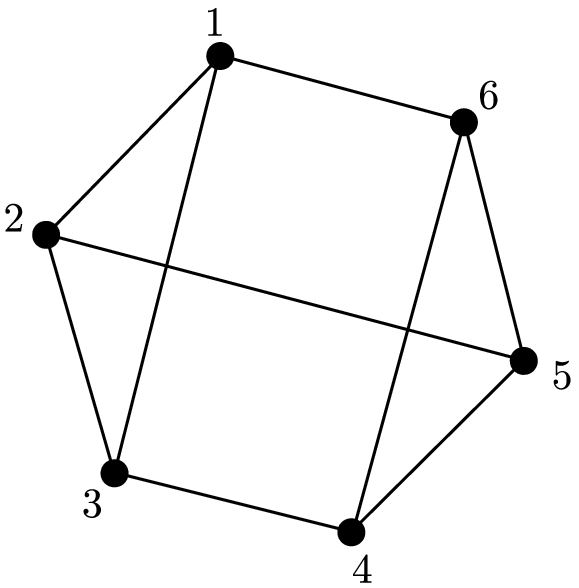}} &  &\\
 \end{tabular}
\caption{\label{tab:graphstates}The graph states of up to six qubits can be grouped into 19 LU equivalence classes. For each class, we show the representative state here.}
\end{table}

The operators $g_i$ commute and generate a set $\SSS$ of so-called {\it stabilizer operators} which consists of $2^n$ elements, \ie,
\be
\label{eq:stabgroupdef}
\SSS= \lbrace S_1 , \dots , S_{2^n} \rbrace = \left\lbrace \prod_{i = 1}^{n} g_i^{x_i} \vert \vec{x} \in \lbrace 0,1 \rbrace^n \right\rbrace \:.
\ee
This means that every operator $S_i \in \SSS$ can be written as a product of some generators $g_i$, in which every generators appears once or not at all. Note that due to $g_i g_i = \eins, \: i = 1, \dots, n$, also, e.g., the product of $g_i$ with itself is included in the definition of Eq.~(\ref{eq:stabgroupdef}). In particular, the identity operator is contained in $\SSS$.

To every graph $G$ we can then associate a {\it graph state $\ket{G}$} that is uniquely defined by
\be
\label{eq:graphstatedef}
g_i \ket{G} = \ket{G}, \; \forall \: i = 1, \dots ,n \:.
\ee
Thus, $\ket{G}$ is the unique state that is an eigenstate to eigenvalue $+1$ of all generators $g_i$. Moreover, every graph also defines a so-called {\it graph state basis}, whose elements are denoted by $\ket{a_1 \dots a_n}_G, \: a_i \in \lbrace 0,1 \rbrace$ and which are defined by
\be
g_i \ket{a_1 \dots a_n}_G = (-1)^{a_i} \ket{a_1 \dots a_n}_G, \; \forall \: i = 1, \dots ,n \:.
\ee
Consequently, $\ket{G} = \ket{0 \dots 0}_G$. Moreover, projectors on these vectors can be written as
\be
\label{eq:projectorform}
_G \ketbra{a_1 \dots a_n}_G = \prod \limits_{i = 1}^{n} \frac{(-1)^{a_i} g_i+\eins}{2} \:.
\ee
In the following, we will refer to states that are diagonal in a graph state basis as {\it graph-diagonal states}.

Note that two graph states that belong to two different mathematical graphs can still be physically equivalent, i.e., equivalent under local unitary transformations (LU-equivalent) and permutations of qubits \cite{hein, vandennest}. For example, this is the case for the star graph No.~9 of Table~\ref{tab:graphstates} and the fully connected graph, in which each of the six vertices is connected with every other vertex. Both graphs describe a state which is LU-equivalent to the GHZ state of six qubits.

It has been shown that, when taking into account states of up to six qubits, there are 19 LU-equivalence classes of connected graph states \cite{hein}. Note that the equivalence classes of up to eight qubits have been characterized in Ref.~\cite{cabellographstates}. Table~\ref{tab:graphstates} shows one representative state of each LU-equivalence class. Any graph state of six or less qubits can therefore be mapped by local unitaries and permutations onto the state associated to some graph in Table~\ref{tab:graphstates}. The local unitaries that one has to apply for this mapping are given in Ref.~\cite{hein}. In this way, one can also transform a witness for any state in a particular LU-equivalence class into a witness of any other state in the same class.

In order to improve readability, we will drop the subscript $G$ in the following and write $\ket{a_1 \dots a_n}_G = \ket{\vec{a}}$. Nevertheless, it is important to keep in mind that all partial transpositions $T_M$ are to be understood w.r.t the computational basis. This ensures that the only Pauli matrix that is changed under transposition is $Y$, whose transpose is $-Y$.

As before, we will refer to $M\vert \comp{M}$, where $M$ is a subset of the set of all qubits and $\comp{M}$ its complement, as a {\it bipartition} of our system.

Note that, for any graph state $\ket{G}$, the operator $W_{\rm proj} = \frac{1}{2} \eins - \ketbra{G}$ is a witness \cite{entdecstab}. We will refer to $W_{\rm proj}$ as the {\it projector witness} of $\ket{G}$ \cite{bourennane}.

\section{The Entanglement Criterion}
\label{sec:criterion}
In this section, we recall the criterion for genuine multipartite entanglement originally introduced in Ref.~\cite{ourpaper}. We then specialize it to graph-diagonal states in Lemma~\ref{lem:lp} which is our main result in this section.

Lemma~\ref{lem:1} naturally leads to an entanglement criterion which asks whether a given state is detected by a fully decomposable witness or not. Given a multipartite state $\vr$, we consider the optimization problem
\begin{align}
\label{eq:sdp1}
&\min \:\trace  (W \varrho)\\
&\begin{aligned}
{\mbox{s.t.}} \: &\trace(W) = 1 \: \mbox{and for all} \: M : \nonumber \\
&W = P_M + Q_M^{T_M}, Q_M \geq 0 ,\:P_M \geq 0 \nonumber \:.
\end{aligned}
\end{align}
In this minimization, the free parameters are given by $W$ and an operator $P_M$ for every strict subset $M$ of the set of all qubits. In practice, it is only necessary to ensure the existence of positive operators $P_M$ and $Q_M$ for $2^{n-1}-1$ partitions, since the two partitions $M\vert \comp{M}$ and $\comp{M} \vert M$ are equivalent, as argued before in the three-qubit case [cf. Eq.~(\ref{eq:3decwit})].

A negative minimum in Eq.~(\ref{eq:sdp1}) indicates that $\varrho$ is detected by the fully decomposable witness $W$ for which the minimum is obtained. Thus, it is entangled and, in particular, no PPT mixture. 

As mentioned before, this minimization can be performed numerically by an SDP. Variations of the program in Eq.~(\ref{eq:sdp1}) have been discussed in Ref. \cite{ourpaper}. There, it was applied to some important states as the W and the GHZ state for three and four qubits, and, for four qubits, the linear cluster state, the singlet and the Dicke state of two excitations. Its white noise tolerance turned out to be higher than in previous criteria. Moreover, the case in which no fully tomography, but only a restricted set of observables has been measured, was considered. In the next subsection, we will show that in the case of graph-diagonal states, the program reduces to an LP.

\subsection{Graph-diagonal states}
\label{sec:graphdiag}
If we are only interested in graph-diagonal states, the corresponding search for an optimal fully decomposable entanglement witness can w.l.o.g. be restricted to graph-diagonal witnesses, for which also the operators $P_M$ and $Q_M$ are graph-diagonal. This is summarized in the following lemma.
\begin{lemma}
\label{lem:lp}
For any graph diagonal state $\varrho_G=\sum_{\vec{k}} s_{\vec{k}} \ketbra{\vec{k}}$, the search for an optimal fully decomposable entanglement witness given by Eq.~(\ref{eq:sdp1}), can w.l.o.g. be restricted to a graph-diagonal form, \ie, to a linear program given by
\begin{align}
\label{eq:lp1}
&\min\: \trace(W_G \varrho_G)\\
&\begin{aligned} {\rm s.t.}\:& W_G =\sum w_{\vec{k}} \ketbra{\vec{k}}, \trace(W_G) = 1\: {\rm and \: for \: all} \: M :\\
& W_G = P_M + Q_M^{T_M}, P_M \geq 0, Q_M \geq 0, \\ 
& P_M =\sum p^M_{\vec{k}} \ketbra{\vec{k}}, Q_M = \sum q^M_{\vec{k}} \ketbra{\vec{k}} \: .
\end{aligned}
\end{align}
\end{lemma}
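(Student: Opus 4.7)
The plan is to exploit the symmetry of graph-diagonal states by ``twirling'' any feasible fully decomposable witness over the stabilizer group $\SSS$ to produce a graph-diagonal witness with the same objective value. Concretely, given any feasible $W$ of Eq.~(\ref{eq:sdp1}) with decompositions $W = P_M + Q_M^{T_M}$ for every strict subset $M$, I would define
\begin{equation}
W_G := \tfrac{1}{|\SSS|}\sum_{S\in\SSS} S W S^\dagger,\quad \tilde P_M := \tfrac{1}{|\SSS|}\sum_{S\in\SSS} S P_M S^\dagger,\quad \tilde Q_M := \tfrac{1}{|\SSS|}\sum_{S\in\SSS} S Q_M S^\dagger.
\end{equation}
Since every $S\in\SSS$ is unitary, $\tilde P_M,\tilde Q_M \geq 0$; and since each $\ket{\vec a}$ is a simultaneous eigenvector of all stabilizers, the twirled operators commute with every $S$ and are therefore diagonal in the graph-state basis.

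The core technical step is to show that twirling commutes with the partial transposition, i.e.\ that $W_G = \tilde P_M + \tilde Q_M^{T_M}$ holds for every $M$. Writing $S = S_M\otimes S_{\comp{M}}$, one uses that each $S_M$ is a tensor product of Pauli matrices (possibly with an overall sign), so $S_M^T = \pm S_M$ with the same sign appearing in $S_M^T$ and $(S_M^\dagger)^T$ because $S_M$ is Hermitian. Applying the general identity $[(A\otimes C)X(B\otimes D)]^{T_M}=(B^T\otimes C)X^{T_M}(A^T\otimes D)$ then lets the signs cancel, yielding the key relation $(SXS^\dagger)^{T_M} = S X^{T_M} S^\dagger$. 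Substituting $X = Q_M$ and averaging over $\SSS$ produces $\tfrac{1}{|\SSS|}\sum_S SQ_M^{T_M}S^\dagger = \tilde Q_M^{T_M}$, so $\tilde Q_M^{T_M}$ is in particular also graph-diagonal and the decomposition carries over.

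The remaining ingredients are routine. Since $\vr_G$ commutes with every stabilizer, $\trace(W_G\vr_G) = \trace(W\vr_G)$, and $\trace(W_G)=\trace(W)=1$ by unitarity. Thus $W_G$ is a graph-diagonal feasible point of Eq.~(\ref{eq:sdp1}) attaining the same objective value, and the minimum can be sought among graph-diagonal witnesses. Parameterising $W_G = \sum w_{\vec k}\ketbra{\vec k}$, $P_M=\sum p^M_{\vec k}\ketbra{\vec k}$, $Q_M=\sum q^M_{\vec k}\ketbra{\vec k}$, the positivity constraints reduce to $p^M_{\vec k},q^M_{\vec k}\geq 0$, while $W_G = P_M + Q_M^{T_M}$ becomes a system of linear equations in these coefficients (because $T_M$ is linear and the matrix entries of $Q_M^{T_M}$ in the graph basis are linear in $q^M_{\vec k}$). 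This reproduces precisely the LP in Eq.~(\ref{eq:lp1}).

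The main obstacle I anticipate is the sign bookkeeping in establishing $(SXS^\dagger)^{T_M} = SX^{T_M}S^\dagger$: this is the step where the specific Pauli-tensor-product structure of stabilizers is essential (the identity fails for arbitrary unitaries), and one must verify that the signs from transposing $Y$ factors on the $M$ side cancel between the left and right conjugating factors for every choice of bipartition.
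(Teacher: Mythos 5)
Your proposal is correct and is essentially the paper's own argument in a different presentation: averaging over the stabilizer group $\SSS$ is exactly the graph-basis dephasing $O \mapsto \sum_{\vec{k}}\ketbra{\vec{k}}O\ketbra{\vec{k}}$ used in the paper (the joint eigenspaces of $\SSS$ are precisely the one-dimensional spans of the $\ket{\vec{k}}$), and both proofs hinge on the same key fact that this projection preserves positivity, feasibility and the objective value and commutes with every partial transposition. The only difference is how that commutation is checked --- the paper expands $Q_M$ in the Pauli basis and notes that dephasing deletes non-stabilizer terms while $T_M$ only flips signs, whereas you verify $(S Q_M S^\dagger)^{T_M} = S Q_M^{T_M} S^\dagger$ for each Hermitian Pauli-product stabilizer via the sign cancellation; both are valid.
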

The proof is given in Sec.~\ref{sec:prooflp} of the Appendix.

This lemma has the following important implications: First, the optimization problems simplifies to a linear program, which are in general easier to solve than general semidefinite programs. Second, it provides a great simplification in order to derive analytic witnesses, because we know that there is an optimal witness which is diagonal in the graph state basis. Also, checking positivity of any operator simplifies to verifying non-negativity within the graph state basis. Instead of testing positivity of a whole matrix, it is enough to consider products of generators $g_i$ and sums thereof [cf. Eq.~(\ref{eq:projectorform})]. Third, let us point out that this lemma also implies that, if a state is a PPT mixture, each PPT state in its decomposition can be assumed to be graph-diagonal as well. Finally, note that a similar statement as Lemma \ref{lem:lp} holds for PPT witnesses as well.

\section{Fully decomposable witnesses}
\label{sec:fdecwit}

In this section, we present a general theory for fully decomposable witnesses of graph states. First, in Sec.~\ref{sec:6qubits}, we provide fully decomposable witnesses for all LU-equivalence classes of graph states up to six qubits. These witnesses are obtained by the criterion of Eq.~(\ref{eq:sdp1}). The graph states are given in Table~\ref{tab:graphstates}, while the witnesses' white noise tolerances are given in Table~\ref{tab:tolerances}. The witnesses can be found in Appendix~\ref{sec:graphstatewit}. 

Moreover, we introduce an analytical construction method for fully decomposable witnesses of general graph states in Sec.~\ref{sec:anamethfdecwit}. This construction method is a generalization of the linear cluster state witnesses in Ref.~\cite{ourpaper} and is, as one of this section's main results, formulated in Lemma~\ref{lem:witarbgraphstates}.

We provide specific examples in Sec.~\ref{sec:decex1}. Finally, we show how to construct witnesses that detect even more states using the witnesses of Lemma~\ref{lem:witarbgraphstates}. This result is given as Lemma~\ref{lem:witcomb} in Sec.~\ref{sec:extconstr}. Again, we give examples in Sec.~\ref{sec:decex2}.

\subsection{Graph states up to 6 qubits}
\label{sec:6qubits}
We now apply the criterion of Eq.~(\ref{eq:sdp1}) to certain graph states. To this end, we implemented it as a semidefinite program using the parser YALMIP \cite{yalmip} in combination with the solver modules SeDuMi \cite{sedumi} or SDPT3 \cite{sdpt3} in MATLAB. The program we wrote is called PPTMixer and can be found online \cite{matlabcentral}.

As mentioned before, there are 19 LU-equivalence classes of connected graph states up to six qubits. We apply our criterion to one state of each class (cf. Table~\ref{tab:graphstates}), obtaining the witnesses given in Appendix~\ref{sec:graphstatewit}. By applying the rules in Ref.~\cite{hein}, it is possible to transform these into witnesses for any graph state of up to six qubits.

Let us have a closer look at the witnesses of Appendix~\ref{sec:graphstatewit}. A widely-used indicator for how robust a witness is to noise in an experiment is the so-called {\it white noise tolerance}. It is defined in the following way: For a given state $\varrho$ and a given witness $W$, the white noise tolerance is the maximal amount $p_{\rm tol}$ of white noise, such that the state $\vr(p_{\rm tol}) = (1-p_{\rm tol})\vr +{p_{\rm tol}} \eins/ {2^n}$ is still detected by the witness $W$. Note that the criterion of Eq.~(\ref{eq:sdp1}) provides witnesses with the highest possible white noise tolerance among all fully decomposable witnesses. This can be seen by noting that both $\trace(W \ketbra{G})$ and $\trace[W \vr(p_{\rm tol})]$ reach their minimum for the same normalized witness $W$, since $\trace(W p_{\rm tol} \eins /{2^n}) = p_{\rm tol} /{2^n}$ is independent of $W$. Thus, the witness that one obtains for the state $\ket{G}$ is also a witness for $\vr(p_{\rm tol})$. In Table~\ref{tab:tolerances}, we give the witness tolerances of these witnesses.

Now, let us present some of these witnesses as examples. Note that the SDP yields witnesses whose trace is normalized to one. In order to make the structure of the witnesses more evident, we renormalized them for each state $\ket{G}$, such that $\bra{G}W \ket{G} = -1/2$.

For the GHZ states of three to six qubits (cf. states No.~2, No.~3, No.~5 and No.~9 in Table~\ref{tab:graphstates}), we obtain the well-known projector witnesses $W_{\rm proj} = \frac{1}{2} \eins - \ketbra{G}$. Since it is known that \mbox{$(1-p)\ketbra{GHZ_n} + p \eins/ 2^n$} is biseparable for \mbox{$p \geq \left[ 2 \left(1-2^{-n} \right) \right]^{-1}$} \cite{ghzlimit}, these witnesses have the maximal possible white noise tolerance.

The linear cluster state of four qubits $\ket{Cl_4}$, labelled as state No.~4, is detected by the witness
\be
\label{eq:cl4witness}
W_{\rm Cl4} = \frac{\eins}{2} - \ketbra{G} - \frac{1}{2} \g{1}{-} \g{4}{-} \:,
\ee
where we defined $\g{i}{\pm} = \left(\eins \pm g_i\right)/2$, $g_i$ being the generators of the stabilizer group of $\ket{\rm Cl_4}$, for the sake of a compact notation. Note that, alternatively, one can write $\g{1}{-} \g{4}{-} = \left(\eins - g_1\right)/2 \left(\eins - g_4\right)/2 = \sum_{i,j \in \lbrace 0,1\rbrace} \ketbra{1 i j 1}$ in the graph state basis.
We will gain a deeper understanding of the structure of this witness in the next section.

Strikingly, the similar state No.~6, which we call $Y_5$ state, is detected by a similar witness which has, however, some additional terms. The witness is given by
\be
\label{eq:WG6}
W_{G6} = \frac{\eins}{2} - \ketbra{G} - \frac{1}{2}\g{1}{-} \g{4}{-} - \frac{1}{2} \g{1}{+} \g{4}{-} \g{5}{-}\: .
\ee
For the symmetrized version of this state, state No.~11 (or $H_6$ state), we obtain a witness with even more terms, namely
\begin{align}
\label{eq:WG11}
W_{G11} = \frac{\eins}{2} - \ketbra{G} & - \frac{1}{2}\g{1}{-} \g{4}{-} - \frac{1}{2}  \g{1}{+} \g{2}{-} \g{4}{-}\nonumber \\
& - \frac{1}{2} \g{1}{-} \g{3}{-} \g{4}{+} - \frac{1}{2}  \g{1}{+} \g{2}{-} \g{3}{-} \g{4}{+} \: .
\end{align}

The special structure of these witnesses motivates an analytical inverstigation. In fact, we will gain more insight on the witness $W_{G6}$ and $W_{G11}$ in Section~\ref{sec:extconstr}.

\begin{table}[ht!!]
\centering
\begin{tabular}{|l|l|}
\hline
state & white noise tolerance \\
\hline
\Bsmall \Tsmall No. 1, Bell state & $p_{\rm tol} = \frac{2}{3} \approx 0.667$\\[2pt]
\hline
\Bsmall \Tsmall No. 2, ${\rm GHZ_3}$ & $p_{\rm tol} = \frac{4}{7} \approx 0.667$\\[2pt]
\hline
\Bsmall \Tsmall No. 3, ${\rm GHZ_4}$ & $p_{\rm tol} = \frac{8}{15} \approx 0.533$\\[2pt]
\hline
\Bsmall \Tsmall No. 4, ${\rm Cl_3}$& $p_{\rm tol} = \frac{8}{13} \approx 0.615$\\[2pt]
\hline
\Bsmall \Tsmall No. 5, ${\rm GHZ_5}$& $p_{\rm tol} = \frac{16}{31} \approx 0.516$\\[2pt]
\hline
\Bsmall \Tsmall No. 6, ${\rm Y_5}$ & $p_{\rm tol} = \frac{16}{25} =0.64 $\\[2pt]
\hline
\Bsmall \Tsmall No. 7, ${\rm Cl_5}$ & $p_{\rm tol} = \frac{16}{25} =0.64$\\[2pt]
\hline
\Bsmall \Tsmall No. 8, ${\rm R_5}$ & $p_{\rm tol} = \frac{12}{19} \approx 0.632$\\[2pt]
\hline
\Bsmall \Tsmall No. 9, ${\rm GHZ_6}$ & $p_{\rm tol} = \frac{32}{63} \approx 0.508$\\[2pt]
\hline
\Bsmall \Tsmall No. 10 & $p_{\rm tol} = \frac{32}{49} \approx 0.653$\\[2pt]
\hline
\Bsmall \Tsmall No. 11, ${\rm H_6}$ & $p_{\rm tol} = \frac{32}{45} \approx 0.711$\\[2pt]
\hline
\Bsmall \Tsmall No. 12, ${\rm Y_6}$ & $p_{\rm tol} = \frac{32}{45} \approx 0.711$\\[2pt]
\hline
\Bsmall \Tsmall No. 13, ${\rm E_6}$ & $p_{\rm tol} = \frac{32}{45} \approx 0.711$\\[2pt]
\hline
\Bsmall \Tsmall No. 14, ${\rm Cl_6}$& $p_{\rm tol} = \frac{128}{179} \approx 0.715$\\[2pt]
\hline
\Bsmall \Tsmall No. 15 & $p_{\rm tol} = \frac{32}{47} \approx 0.681$\\[2pt]
\hline
\Bsmall \Tsmall No. 16 & $p_{\rm tol} = \frac{8}{11} \approx 0.727$\\[2pt]
\hline
\Bsmall \Tsmall No. 17 & $p_{\rm tol} \approx 0.696$\\[2pt]
\hline
\Bsmall \Tsmall No. 18, ${\rm R_6}$& $p_{\rm tol}  \approx 0.667$\\[2pt]
\hline
\Bsmall \Tsmall No. 19 & $p_{\rm tol} = \frac{2}{3} \approx 0.667$\\[2pt]
\hline
\end{tabular}
\caption{\label{tab:tolerances}For graph states of up to six qubits, there are 19 classes of states which are equivalent under LU operations. Here, we show one state of each class. Using the presented criterion, one obtains a witness for each of these states (cf. Appendix~\ref{sec:graphstatewit}) which have the white noise tolerances given here.}
\end{table}

\subsection{Analytical construction methods}
\label{sec:anamethfdecwit}
In this section, we present an analytical method to construct fully decomposable witnesses for arbitrary graph states. This construction method results in witnesses which are a generalization of the linear cluster state witnesses in Eq.~(\ref{eq:cl4witness}). First, we recapitulate these witnesses in Sec.~\ref{sec:cln}, before we then generalize it to arbitrary graph states in Lemma~\ref{lem:witarbgraphstates} of Sec.~\ref{sec:arbgraphstates}.

\subsubsection{Linear cluster state}
\label{sec:cln}
We have pointed out that the witness $W_{\rm Cl4}$ of Eq.~(\ref{eq:cl4witness}) is a witness for the four-qubit linear cluster state. For the seven-qubit linear cluster state $\ket{\rm Cl_7}$ shown in Fig.~\ref{fig:cln} a), there exists a similar witness
\begin{align}
\label{eq:cl7witness}
W_{\rm Cl7} = &\frac{1}{2}\eins - \ketbra{\rm Cl_7} - \frac{1}{2} \left( \g{1}{-} \g{4}{-} \g{7}{-} \right. \nonumber \\
& +  \left. \g{1}{+} \g{4}{-} \g{7}{-}+\g{1}{-} \g{4}{+} \g{7}{-} +\g{1}{-} \g{4}{-} \g{7}{+} \right) \: .
\end{align}

\begin{figure}[ht!]
\includegraphics[height=1.2cm]{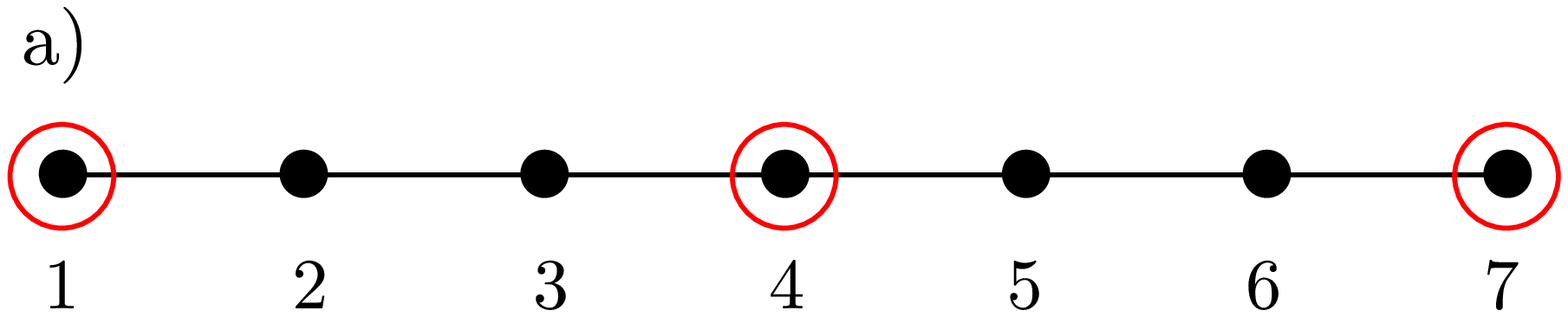}
\includegraphics[height=1.2cm]{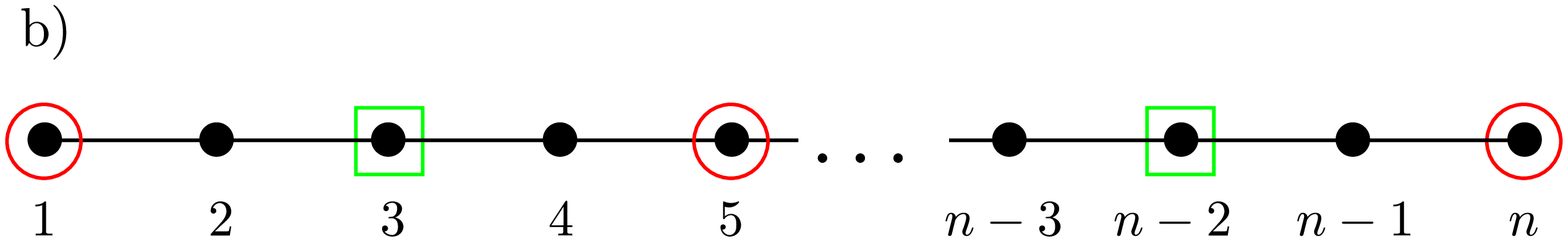}
\caption{\label{fig:cln}For the linear cluster state, we construct several witnesses. In a), the qubits in $\BB$ (marked by red circles) can be used to construct the fully decomposable witness of Eq.~(\ref{eq:cl7witness}) using Lemma~\ref{lem:witarbgraphstates}. b) illustrates the construction method of Lemma~\ref{lem:PPTwitcomb} which yields a fully PPT witness. Qubits in $\BB_1$ are marked by red circles, qubits in $\BB_2$ by green squares.}
\end{figure}

$W_{\rm Cl7}$ is a fully decomposable witness. However, since $W_{\rm Cl7}$ was not obtained from our SDP, but via Lemma~\ref{lem:witarbgraphstates}, there are --- most likely --- fully decomposable witness for $\ket{\rm Cl_7}$ with a higher white noise tolerance. This is in contrast to $W_{\rm Cl4}$ which was obtained by the semidefinite program and therefore has the maximal white noise tolerance among the fully decomposable witnesses.

$W_{\rm Cl7}$ has a very particular structure. The qubits $i$ whose generators $g_i$ appear in the witness are indicated with red circles in Fig.~\ref{fig:cln} a). Let us denote the set of these qubits by $\BB$. One can see that each two qubits in $\BB$ have at least two other qubits between them. Moreover, the terms $\g{1}{\pm} \g{4}{\pm} \g{7}{\pm}$ in Eq.~(\ref{eq:cl7witness}) all contain two or more minus signs. It turns out that witnesses of this kind can be constructed for general graph states.

\subsubsection{Arbitrary graph states}
\label{sec:arbgraphstates}
The construction in Eq.~(\ref{eq:cl7witness}) can be generalized in the following way:

\begin{lemma}
\label{lem:witarbgraphstates}
Given a connected graph state $\ket{G}$. Let $\BB = \lbrace \beta_i \rbrace$ be a subset of the set of all qubits such that any two qubits in $\BB$ are neither neighbors of each other nor have a neighbor in common. We define $b = \vert \BB\vert$. Let $\sum_{\vec{s}}$ be the sum over all vectors $\vec{s}$ of length $b$ with elements $s_i = \pm 1$ that contain at least two elements which equal $-1$, \ie, $\sum_{i=1}^{b} s_i \leq b -4$. In this case,
\be
\label{eq:decwitarbgraphstates}
W_{\rm G} = \frac{1}{2} \eins - \ketbra{G} - \frac{1}{2} \sum_{\vec{s}}\: \prod_{i \in \BB} \g{i}{s_i}
\ee
is a fully decomposable witness for $\ket{G}$.
\end{lemma}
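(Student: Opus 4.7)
The plan has two main components: first, to recast $W_G$ into a structurally transparent form, and second, to exhibit the fully-decomposable decomposition $W_G = P_M + Q_M^{T_M}$ explicitly for every bipartition $M|\comp{M}$.

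For the reformulation, I would use that $\{\prod_{i\in\BB}\g{i}{s_i}\}_{\vec{s}\in\{\pm1\}^b}$ is an orthogonal resolution of the identity, so isolating the vectors with zero and with exactly one minus sign and plugging into Eq.~(\ref{eq:decwitarbgraphstates}) gives
\begin{equation*}
W_G = \tfrac{1}{2} G_0 + \tfrac{1}{2} G_1 - \ketbra{G},
\end{equation*}
where $G_0 = \prod_{i\in\BB}\g{i}{+}$ and $G_1 = \sum_{j\in\BB}\g{j}{-}\prod_{i\in\BB\setminus\{j\}}\g{i}{+}$ are mutually orthogonal projectors. Because no two elements of $\BB$ are adjacent or share a neighbor, the operators $g_{\beta}$ ($\beta\in\BB$) are supported on pairwise disjoint sets of qubits; consequently the $\g{\beta}{\pm}$ factorise as tensor products, and by Eq.~(\ref{eq:projectorform}) the graph-state projector equals $\ketbra{G} = G_0 \prod_{i\notin\BB}\g{i}{+}$, lying entirely inside the range of $G_0$. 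This form immediately yields $\bra{G}W_G\ket{G}=-1/2$ and shows that, in the graph-state basis, this is the only negative diagonal entry of $W_G$, the others being $0$ or $+1/2$, so $W_G$ is a non-trivial witness candidate.

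For the fully-decomposable property, by Lemma~\ref{lem:lp} I may restrict attention to graph-diagonal $P_M, Q_M$. A crucial simplification is that each generator $g_i$ is a product of $X$ and $Z$ factors only, so every projector $\prod_{i\in\BB}\g{i}{s_i}$ is \emph{invariant} under any partial transposition $T_M$: the only non-invariant piece of $W_G$ is $-\ketbra{G}$. The task therefore reduces to expressing $\ketbra{G}^{T_M}$ as $\tfrac12 G_0 + \tfrac12 G_1 - P_M - Q_M^{T_M}$ for some graph-diagonal $P_M, Q_M \geq 0$. My plan is to pick an element $\beta^\star\in\BB$ whose support $\{\beta^\star\}\cup\NN(\beta^\star)$ straddles the bipartition (existence follows from connectedness of $\ket{G}$) and to assemble $Q_M$ from the corresponding $\g{\beta^\star}{-}$ factor multiplied by an operator matching the relevant portion of $\ketbra{G}^{T_M}$ in the graph basis. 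The at-least-two-minus-signs condition $\sum_i s_i\leq b-4$ is what guarantees that such a $\g{\beta}{-}$ factor is available on both sides of the cut simultaneously, explaining why single-minus terms had to be excluded from the defining sum.

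The step I expect to be the main obstacle is the positivity check on $Q_M$ itself (as opposed to on $Q_M^{T_M}$), since the partial transpose takes us out of the graph-diagonal algebra and one must analyse the spectrum of $Q_M$ directly across the cut. Here I would exploit the disjoint-neighborhood factorisation: the relevant part of $Q_M$ is supported on the union of neighbourhoods $\{\beta\}\cup\NN(\beta)$ for $\beta\in\BB$ sitting near the cut, so it splits as a tensor product of small operators whose positivity can be checked individually. The four-qubit linear-cluster witness of Eq.~(\ref{eq:cl4witness}) is the base case of this construction and Eq.~(\ref{eq:cl7witness}) is a three-site warm-up, with the general statement following by the same argument applied factor-by-factor along the bipartition.
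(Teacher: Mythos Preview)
Your reformulation $W_G=\tfrac12 G_0+\tfrac12 G_1-\ketbra{G}$ is correct and clarifying, and you are right that only the $-\ketbra{G}$ term fails to be invariant under $T_M$. But the second half of the plan has two genuine gaps. First, the existence claim for $\beta^\star$ is false: take a path on nine vertices with $\BB=\{1,4,7\}$ and $M=\{9\}$; no support $\{\beta\}\cup\NN(\beta)$ touches $M$ at all. Such bipartitions (and more generally those where at most one $\NNN(\beta)$ straddles the cut) require a separate argument. Second, and more seriously, $Q_M$ does \emph{not} split as a tensor product over the $\BB$-neighbourhoods: $Q_M^{T_M}=W_G-P_M$ contains $-\ketbra{G}$, which acts nontrivially on every qubit and is entangled across the whole graph, not just near the cut. (A side remark: the partial transpose \emph{does} preserve the graph-diagonal algebra, because each $\ketbra{\vec{k}}$ expands into products of generators and $T_M$ only flips signs of terms with an odd number of $Y$'s in $M$; so $Q_M$ stays graph-diagonal, but this does not give you a tensor factorisation.)

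What is actually needed, and what your plan is missing, is a quantitative control on the spectrum of $\ketbra{G}^{T_M}$. The paper obtains this by first applying local controlled-$Z$ gates to delete all intra-partition edges, yielding a graph $G'$ that is bipartite with respect to $M|\comp{M}$; it then constructs $P_M$ explicitly as a specific sum of $Z$-flipped copies of $\ketbra{G'}$ (an algorithm running over the $\beta_i$) chosen so that $P_M+\ketbra{G'}$ becomes invariant under the single-qubit transpositions $T_{\beta_i}$. Positivity of $Q_M$ is then reduced to the bound $\max_i\lambda_i^2\le 2^{-r}$ on the largest Schmidt coefficient of $\ket{G'}$, where $r$ counts the $\beta_i$ that still have neighbours in $G'$; this bound is proved by exhibiting an LOCC protocol that distills $r$ disjoint Bell pairs across the cut. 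The two-minus-signs condition enters precisely here: when $\bra{\vec k}P_+\ket{\vec k}=1$ there are at least two excited indices in $\BB$, and at least one of them can be used (via the invariance built into $P_M$) to force the relevant matrix element of $(\ketbra{G'}+P_M)^{T_M}$ to vanish. None of this is captured by a neighbourhood-by-neighbourhood tensor factorisation.
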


For the detailed proof, we refer to Sec.~\ref{sec:proofarbgraph} of the Appendix. Its main idea is to construct a suitable positive operator $P_M$ for every subset $M$, such that $\left(W-P_M\right)^{T_M} = Q_M$ is positive semidefinite.

Furthermore, the proof takes advantage of the fact that, besides $\ketbra{G}$, all terms in Eq.~(\ref{eq:decwitarbgraphstates}) are invariant under any partial transposition $T_M$, since the identity is diagonal in any basis and there are no two generators $g_i$ in the product that are neighbors of each other. However, products of non-neighboring generators are only tensor products of the Pauli matrices $X$, $Z$ and the identity all of which are invariant under transposition. Moreover, the proof is simplified by $W_{\rm G}$ being diagonal in the graph state basis.

Note that in many cases, the choice of subset $\BB$ is not unique. For the seven-qubit linear cluster state, instead of the choice $\BB = \lbrace 1, 4, 7\rbrace$ which results in the witness of Eq.~(\ref{eq:cl7witness}), the choices $\BB = \lbrace 1, 6 \rbrace$ or $\BB = \lbrace 2, 5 \rbrace$ would also be valid. However, these sets would lead to witnesses that have a lower white noise tolerance.

As for the linear cluster witnesses of Ref.~\cite{ourpaper}, it turns out that for many graph states, the white noise tolerances of witnesses constructed according to Lemma~\ref{lem:witarbgraphstates} converge to one for an increasing particle number. More precisely, this is the case for graph states that can be defined for an arbitrary number of qubits such that, when increasing the number of qubits, also the number of qubits in $\BB$ grows. This includes the 2D cluster state for $n$ qubits and the ring cluster state. It does not include GHZ states, since for any number of qubits, no set $\BB$ (of more than one qubit) that contains only qubits with non-overlapping neighborhoods can be defined on the GHZ state. Let us formulate this observation as a corollary:

\begin{corollary}
\label{cor:whitenoise}
Let $\ket{G_n}$ be a graph state of $n$ qubits and $\BB(n)$ a subset of these $n$ qubits with the properties as in Lemma~\ref{lem:witarbgraphstates}. Let $W_{\rm Gn}$ be a witness for $\ket{G_n}$ as in Eq.~(\ref{eq:decwitarbgraphstates}). Then, the white noise tolerance of $W_{\rm Gn}$ with respect to $\ket{G_n}$ is given by
\be
\label{eq:generalnoise}
p(n) = \left(1-2^{-n+1}+2^{-\vert \BB(n) \vert}(\vert \BB(n) \vert+1)\right)^{-1} \: .
\ee
For a family of graph states on any number of qubits $n$ with $\vert \BB(n) \vert \xrightarrow{n \rightarrow \infty} \infty$, this expression implies
\be
\label{eq:noiseconvergence}
p(n) \xrightarrow{n \rightarrow \infty} 1 \: .
\ee
For high particle numbers, the fidelity $F_{\rm req}$ required to detect the state $\vr = (1-p) \ketbra{G_n} + p \eins/2^n$ is given by $F_{\rm req} \approx \vert \BB(n) \vert 2^{-\vert \BB(n) \vert}$ and therefore vanishes exponentially fast.
\end{corollary}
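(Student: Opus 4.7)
The plan is to reduce the tolerance computation to evaluating two traces and then solving a single linear equation in $p$. The white noise tolerance $p_{\rm tol}$ is, by definition, the largest $p$ with $\trace[W_{Gn}\,\vr(p)]\le 0$, where $\vr(p)=(1-p)\ketbra{G_n}+p\,\eins/2^n$. Linearity gives
\[
\trace[W_{Gn}\,\vr(p)]=(1-p)\bra{G_n}W_{Gn}\ket{G_n}+\tfrac{p}{2^n}\trace(W_{Gn}),
\]
so once I know $\bra{G_n}W_{Gn}\ket{G_n}$ and $\trace(W_{Gn})$, the boundary $p(n)$ follows by setting the expression to zero and solving.

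First I would compute $\bra{G_n}W_{Gn}\ket{G_n}$. Since $\ket{G_n}$ is the $+1$ eigenstate of every generator $g_i$, one has $\g{i}{-}\ket{G_n}=0$; every vector $\vec{s}$ appearing in the sum of Eq.~(\ref{eq:decwitarbgraphstates}) has at least one $-1$, hence every product $\prod_{i\in\BB}\g{i}{s_i}$ annihilates $\ket{G_n}$. Therefore $\bra{G_n}W_{Gn}\ket{G_n}=\tfrac12-1=-\tfrac12$.

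Next I would compute $\trace(W_{Gn})$. Expanding $\prod_{i\in\BB}\g{i}{s_i}=2^{-b}\sum_{T\subseteq\BB}\prod_{i\in T}s_i g_i$ with $b=|\BB(n)|$, and using that the generators $\{g_i\}_{i\in\BB}$ are linearly independent in the stabilizer group so that only $T=\emptyset$ contributes a nonzero trace, gives $\trace\!\big[\prod_{i\in\BB}\g{i}{s_i}\big]=2^{n-b}$. Counting the binary strings of length $b$ with at least two $-1$'s yields $2^b-b-1$ terms in the sum, so
\[
\trace(W_{Gn})=2^{n-1}-1-\tfrac12(2^b-b-1)2^{n-b}=-1+(b+1)\,2^{n-b-1}.
\]
Inserting both values into $\trace[W_{Gn}\,\vr(p)]=0$ and solving for $p$ yields exactly Eq.~(\ref{eq:generalnoise}); the $2^{-n+1}$ contribution in the denominator comes from the $-1$ piece of the trace, while the $2^{-b}(b+1)$ piece comes from the product terms.

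The limit in Eq.~(\ref{eq:noiseconvergence}) is then immediate: if $|\BB(n)|\to\infty$ along a family, both $2^{-n+1}$ and $2^{-|\BB(n)|}(|\BB(n)|+1)$ vanish, so the denominator tends to $1$. Finally, the required fidelity $F_{\rm req}=\bra{G_n}\vr(p(n))\ket{G_n}=1-p(n)+p(n)/2^n$ can be expanded: writing $1-p(n)=\big[2^{-b}(b+1)-2^{-n+1}\big]/\big[1+2^{-b}(b+1)-2^{-n+1}\big]$ and discarding the subleading $2^{-n+1}$ and $p(n)/2^n$ terms in the regime $b\ll n$ leaves $F_{\rm req}\sim |\BB(n)|\,2^{-|\BB(n)|}$. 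The only mild subtlety is the asymptotic regime used in this last step, i.e.\ verifying that $2^{-n+1}$ is indeed negligible relative to $|\BB(n)|\,2^{-|\BB(n)|}$; this is automatic whenever $|\BB(n)|\le n-O(\log n)$, which is the case for the families of interest (linear, 2D, ring cluster states). Everything else is bookkeeping in the graph state basis, which is precisely the simplification Lemma~\ref{lem:lp} permits.
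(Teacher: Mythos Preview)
Your proof is correct and follows essentially the same route as the paper: compute $\bra{G_n}W_{Gn}\ket{G_n}=-\tfrac12$, compute $\trace(W_{Gn})$ by counting the $2^{b}-b-1$ sign vectors and using that each projector $\prod_{i\in\BB}\g{i}{s_i}$ has trace $2^{n-b}$, and then solve the linear equation for $p$. If anything, you are more explicit than the paper, which simply states the tolerance formula $p_{\rm tol}=[1-\trace(W)/(2^n\bra{G}W\ket{G})]^{-1}$ and the trace computation without further justifying the limit or the fidelity asymptotics.
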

For the proof, we refer to Sec.~\ref{sec:prooftol} of the Appendix.

Note that Lemma~\ref{lem:witarbgraphstates} has been proven for the special case of linear cluster states in Ref.~\cite{ourpaper}. Moreover, entanglement criteria for Dicke states that also exhibit a white noise tolerance which converges to one have been found recently in Ref.~\cite{huberdicke}.

\subsection{Examples}
\label{sec:decex1}
{\it 2D cluster state ---} Let us consider a 2D cluster state of 16 qubits as given in Fig.~\ref{fig:clnxn} a). To construct a witness according to Lemma~\ref{lem:witarbgraphstates}, one could choose $\BB = \lbrace 1, 4, 10, 16 \rbrace$ as indicated by red circles. However, it would also be possible to choose qubit $13$ instead of qubit $10$. In both cases, the white noise tolerance is $p_{\rm tol} = \left(1-2^{-15}+5 \cdot 2^{-4} \right)^{-1} \approx 0.762$.

\begin{figure}
\includegraphics[width=0.7\columnwidth]{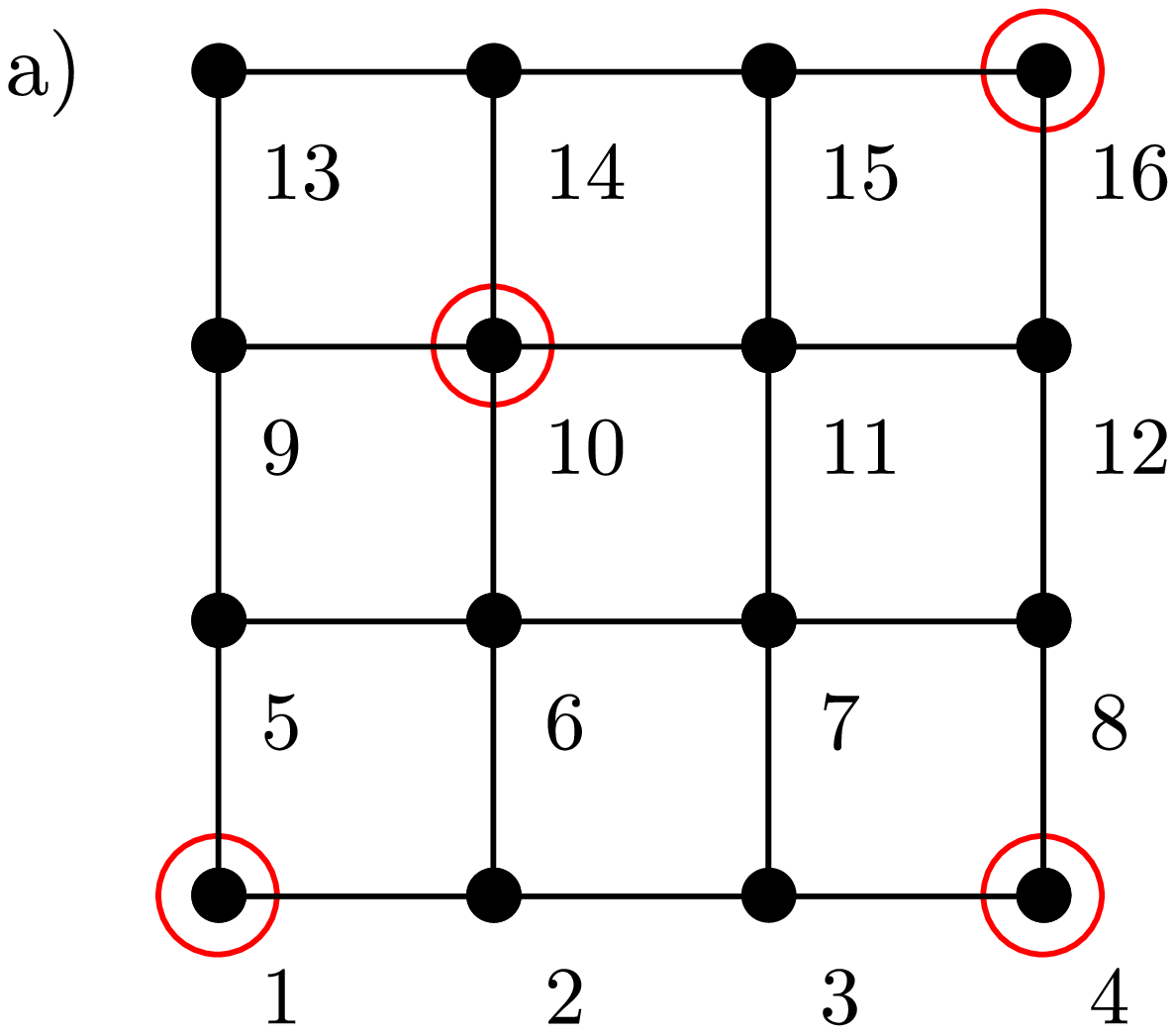}
\includegraphics[width=0.7\columnwidth]{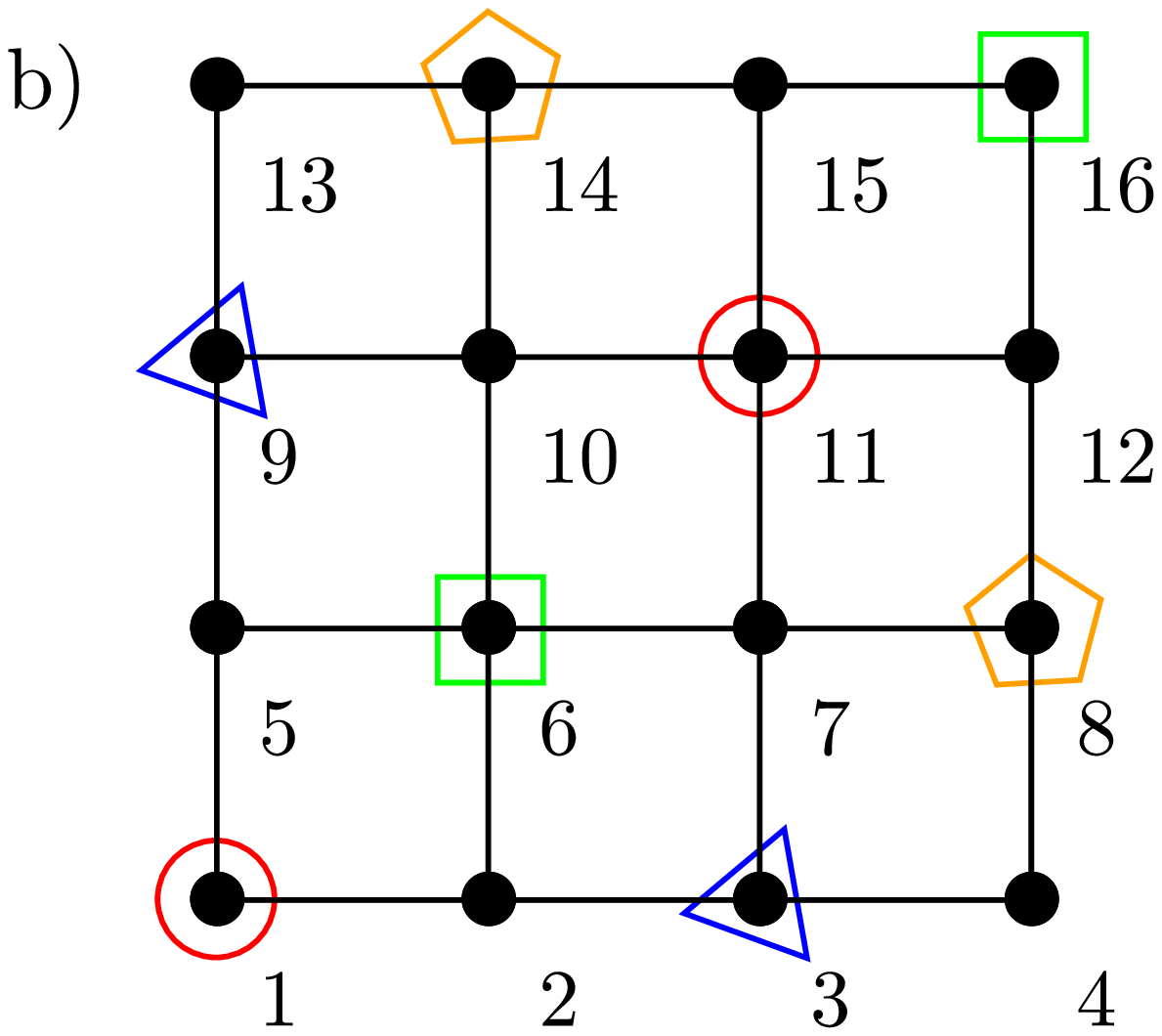}
\caption{\label{fig:clnxn} We illustrate two different ways to construct witnesses for the 2D cluster state. In a), the red circles mark qubits that belong to $\BB$, which can be used to construct a fully decomposable witness according to Lemma~\ref{lem:witarbgraphstates} (or a fully PPT witness using Lemma~\ref{lem:PPTwitarbgraphstates}). In b), we illustrate the method of Lemma~\ref{lem:PPTwitcomb} which results in a fully PPT witness. For this, one needs to define the sets $\BB_1$ (red circles), $\BB_2$ (blue triangles), $\BB_3$ (green squares) and $\BB_4$ (orange pentagons).}
\end{figure}

{\it Other graph states ---} Consider state No.~13, the $E_6$ state, of Table~\ref{tab:graphstates}. Here, $\BB = \lbrace 1, 5, 6 \rbrace$ would be a valid choice.

For state No.~11, the $H_6$ state,  $\BB = \lbrace 1, 4 \rbrace$ is a possible choice. However, one could have also selected $\BB = \lbrace 1, 3 \rbrace$, $\BB = \lbrace 2, 3 \rbrace$ or $\BB = \lbrace 2, 4 \rbrace$. Indeed, in the next section, we will see that all these choices can be combined to construct an even better witness, namely the witness of Appendix~\ref{sec:graphstatewit} which is obtained by our SDP. As mentioned before, the corresponding white noise tolerances are given in Table~\ref{tab:tolerances}.

\subsection{Extended construction method}
\label{sec:extconstr}

Although Lemma~\ref{lem:witarbgraphstates} can be applied to many graph states, for most graph states there exist witnesses with a higher white noise tolerance (cf. Appendix~\ref{sec:graphstatewit}). In this section, we provide an extended construction method that, for some states, allows one to subtract additional terms from the witnesses constructed by Lemma~\ref{lem:witarbgraphstates}. This extended method can be applied to, e.g., the states No.~6 ($Y_5$) and No.~11 ($H_6$) of Table~\ref{tab:graphstates} to obtain the witnesses of Eqs. (\ref{eq:WG6}) and (\ref{eq:WG11}).

\begin{lemma}
\label{lem:witcomb}
Given a connected graph state $\ket{G}$ and $m$ subsets $\BB_i$ of its qubits that fulfill the following two conditions:
\begin{enumerate}[(i)]
\item No two qubits in a set $\BB_i$ have a neighbor in common or are neighbors of each other.
\item Any two qubits $\beta_j^{(i)} \in \BB_i$ and $\beta_l^{(k)} \in \BB_k$ from two different subsets either have the same neighborhood or no common neighbor at all.
\end{enumerate}
Moreover, let $W_i$ be the fully decomposable witnesses that one can construct from the subsets $\BB_i$ according to Lemma~\ref{lem:witarbgraphstates}. Then,
\be
\label{eq:combwit}
W = \sum_{\vec{k} \in \lbrace 0,1 \rbrace^n} \ketbra{\vec{k}} \min_{i=1,\dots, m} \bra{\vec{k}} W_i \ket{\vec{k}}
\ee
is a fully decomposable witness. Note that $W$ is clearly better than any of the witnesses $W_i$ alone.
\end{lemma}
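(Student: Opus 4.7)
The plan is to construct, for every bipartition $M|\overline{M}$, graph-diagonal positive operators $P_M, Q_M$ satisfying $W = P_M + Q_M^{T_M}$, thereby establishing that $W$ is fully decomposable. Since $\bra{G}W\ket{G}=-1/2<0$ (each $W_i$ equals $-1/2$ on $\ket{G}$, and so does their pointwise minimum), this also shows $W$ is a non-trivial witness; by Lemma~\ref{lem:lp} we may restrict throughout to graph-diagonal operators. As a preliminary, I would rewrite $W$ explicitly using the form of $W_i=\tfrac{1}{2}\eins-\ketbra{G}-\tfrac{1}{2}\Pi^{(i)}$, where $\Pi^{(i)}=\sum_{\vec{s}}\prod_{j\in\BB_i}\g{j}{s_j}$ is the projector onto those graph-basis vectors $\ket{\vec{k}}$ with $\vec{k}|_{\BB_i}$ containing at least two ones. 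The $\Pi^{(i)}$ mutually commute (being polynomials in the commuting $g_j$'s), giving $W=\tfrac{1}{2}\eins-\ketbra{G}-\tfrac{1}{2}\Pi$ with $\Pi=\eins-\prod_i(\eins-\Pi^{(i)})$.

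The construction of $P_M$ mirrors the proof of Lemma~\ref{lem:witarbgraphstates} (Sec.~\ref{sec:proofarbgraph} of the Appendix), with conditions (i) and (ii) playing complementary roles. Condition (i) ensures that each $\Pi^{(i)}$ expands into products of generators containing only $X$ and $Z$ Paulis --- no $Y$'s appear because no two qubits in $\BB_i$ share a neighbor or are adjacent --- so $\Pi^{(i)}$ is $T_M$-invariant for every $M$. Condition (ii) lifts this invariance to the mixed products that arise when expanding $\prod_i(\eins-\Pi^{(i)})$: for qubits $\beta\in\BB_i$ and $\beta'\in\BB_k$ from different sets, their common $Z$-factors either cancel in pairs (same neighborhood) or are supported on disjoint qubits (no common neighbor), keeping all cross terms in the $Y$-free sector. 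Consequently the only term in $W$ with nontrivial partial transpose is the rank-one piece $-\ketbra{G}$.

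Given this reduction, I would choose $P_M$ graph-diagonal and supported on those basis vectors $\ket{\vec{k}}$ for which $\bra{\vec{k}}W\ket{\vec{k}}=1/2$, with coefficients tailored so that $P_M^{T_M}$ cancels the negative Schmidt components of $\ketbra{G}^{T_M}$ across $M|\overline{M}$, exactly as in the single-$\BB$ construction. Positivity of $Q_M=(W-P_M)^{T_M}$ then reduces to a sign check on graph-basis coefficients away from the cancellation sector, which is non-negative by design. The principal obstacle is that the $1/2$-sector of $W$ is strictly smaller than that of any single $W_i$ --- configurations hit by some $\Pi^{(k)}$ have been absorbed into the $0$-sector --- so one has to verify that the remaining basis vectors still provide enough room to absorb $\ketbra{G}^{T_M}$ entirely. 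The required combinatorial inequality follows from the observation that each $\BB_i$ individually already supplies a valid decomposition of $W_i$, together with the $T_M$-invariance guaranteed by conditions (i)--(ii); intuitively, adding more $\BB_i$'s can only enlarge the pool of $0$-sector vectors, never the $-1/2$ contribution from $\ket{G}$. Finally, the closing remark of the lemma follows from $W\leq W_i$ as commuting graph-diagonal operators, which implies $\tr(W\vr)\leq\tr(W_i\vr)$ for every state $\vr$, so any state detected by some $W_i$ is also detected by $W$.
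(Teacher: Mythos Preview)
Your outline correctly identifies the shape of the argument --- rewrite $W=\tfrac12\eins-\ketbra{G}-\tfrac12\Pi$, note that only $\ketbra{G}$ transforms nontrivially under $T_M$, and then build a graph-diagonal $P_M$ --- but the decisive step is missing. You say you would choose $P_M$ ``exactly as in the single-$\BB$ construction'' and that the remaining ``combinatorial inequality follows'' from the fact that each $\BB_i$ separately works plus $T_M$-invariance. This is precisely the point that requires work, and the paper's proof shows why. The single-$\BB$ algorithm of Lemma~\ref{lem:witarbgraphstates} produces a \emph{different} $P_M^{(i)}$ for each $\BB_i$, and none of these need satisfy $(W-P_M^{(i)})^{T_M}\ge 0$ once you have taken the elementwise minimum. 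The paper resolves this by first constructing, for each $M$, an auxiliary set $\AAA$ that merges the $\BB_i$'s using condition~(ii): after the bipartite reduction $G\to G'$, qubits from different $\BB_i$ with identical neighborhoods are collapsed to a single representative, while those with no common neighbor are kept separately (two per neighborhood class if they lie in opposite parts of $M|\comp{M}$). One then runs the Lemma~\ref{lem:witarbgraphstates} algorithm on $\AAA$ to obtain a \emph{single} $P_M$ for which $(W_i-P_M)^{T_M}\ge 0$ holds for \emph{every} $i$. The passage from ``$P_M$ works for each $W_i$'' to ``$P_M$ works for $W$'' is then handled by the device $R_i=\max_j\Pi^{(j)}-\Pi^{(i)}$, which is $T_M$-invariant and vanishes on the basis vector where the maximum is attained. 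Neither the $\AAA$-construction nor the $R_i$ argument appears in your proposal; without them, the claim that ``the remaining basis vectors still provide enough room to absorb $\ketbra{G}^{T_M}$'' is an assertion, not a proof.

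A smaller point: your argument that cross terms in $\prod_i(\eins-\Pi^{(i)})$ are $Y$-free tracks the $Z$-factors, but a $Y$ arises from an $X$--$Z$ overlap, i.e.\ from $\beta\in\NN(\beta')$. Condition~(ii) in the ``no common neighbor'' branch does not by itself exclude adjacency of $\beta$ and $\beta'$; one needs that no two qubits in $\cup_i\BB_i$ are neighbors (the paper obtains this via Lemma~\ref{lem:ptinvariance}, which explicitly assumes it). You should state and use this hypothesis rather than deriving $Y$-freeness from the behaviour of $Z$-factors alone.
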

Note that it is possible, according to conditions (i) and (ii), that a qubit is in more than one subset $\BB_i$. The proof of Lemma~\ref{lem:witcomb} is given in Sec.~\ref{sec:proofextconstr} of the Appendix. Let us present some examples of the witnesses constructed in this lemma.

\subsection{Examples}
\label{sec:decex2}
{\it State No.~6 ($Y_5$) ---} Consider state No.~6 of Table~\ref{tab:graphstates}. Here, the subsets $\BB_i$ are given by $\BB_1 = \lbrace 1, 4 \rbrace$ and $\BB_2 = \lbrace 5,4 \rbrace$, which fulfill the conditions of Lemma~\ref{lem:witcomb}, since $\NN(1) = \NN(5)$. Lemma~\ref{lem:witarbgraphstates} then yields the two witnesses
\begin{align}
W_1 = &\frac{\eins}{2} - \ketbra{G} - \frac{1}{2} \g{1}{-} \g{4}{-} \:, \\
W_2 = &\frac{\eins}{2} - \ketbra{G} - \frac{1}{2} \g{5}{-} \g{4}{-} \:.
\end{align}
Thus, performing the minimization of Eq.~(\ref{eq:combwit}) is tantamount to subtracting the terms $\g{1}{-} \g{4}{-}/2 = \sum_{i,j,k \in \lbrace 0,1\rbrace} \ketbra{1 i j 1 k} /2 $ and $\g{5}{-} \g{4}{-}/2 = \sum_{i,j,k \in \lbrace 0,1\rbrace} \ketbra{i j k 1 1} /2$ from the projector witness and then adding the terms which have been subtracted twice in this way, namely $\g{1}{-} \g{4}{-} \g{5}{-}/2$. This results in the witness given in Eq.~(\ref{eq:WG6}).

{\it State No.~11 ($H_6$) ---} Similarly, State No.~11 allows to define four subsets, namely $\BB_1 = \lbrace 1, 4\rbrace$, $\BB_2 = \lbrace 2, 4\rbrace$, $\BB_3 = \lbrace 1, 3\rbrace$ and $\BB_4 = \lbrace 2, 3\rbrace$. Applying Lemma \ref{lem:witcomb} leads to the witness of Eq.~(\ref{eq:WG11}).

\section{Fully PPT witnesses}
\label{sec:fpptwit}
In this section, we provide analytical construction methods for fully PPT witnesses of graph states. In Sec.~\ref{sec:anamethfpptwit}, Lemma~\ref{lem:PPTwitarbgraphstates} gives a method analogous to the fully decomposable witnesses in Lemma~\ref{lem:witarbgraphstates}. An example will be given in Sec.~\ref{sec:pptex1}. 

As in the last section, we then provide an extended method to construct even better witnesses using the witnesses of Lemma~\ref{lem:PPTwitarbgraphstates}. This is done in Sec.~\ref{sec:exconstrfpptwit}, with examples in Sec.~\ref{sec:pptex2}. This time, however, the extension is more general and can be applied to a larger family of states. Thus, our main results of this section are Lemmata~\ref{lem:PPTwitarbgraphstates} and \ref{lem:PPTwitcomb}. Finally, we provide a witness for the 2D cluster state in Sec.~\ref{sec:2dclfpptwit} which does not fit into the construction methods presented so far.

As mentioned before, fully PPT witnesses are easier to characterize, since they are fully decomposable witnesses with $P_M = 0$ for all $M$. This allows for a further generalization of the construction methods presented above --- however, only resulting in fully PPT witnesses --- and for the construction of a new witness for the 2D cluster state. 

\subsection{Arbitrary graph states}
\label{sec:anamethfpptwit}
Let us first give the analogon to Lemma~\ref{lem:witarbgraphstates} for fully PPT witnesses.

\begin{lemma}
\label{lem:PPTwitarbgraphstates}
Given a connected graph state $\ket{G}$. Let $\BB = \lbrace \beta_i \rbrace$ be a subset of the set of all qubits such that any two qubits in $\BB$ are neither neighbors of each other nor have a neighbor in common. We define $b = \vert \BB\vert$. Let $\sum_{\vec{s}}$ be the sum over all vectors $\vec{s}$ of length $b$ with elements $s_i = \pm 1$ that contain at least two elements which equal $-1$, \ie, $\sum_{i=1}^{b} s_i \leq b -4$. In this case,
\be
\label{eq:PPTwitarbgraphstates}
W_{\rm G} = \frac{1}{2} \eins - \ketbra{G} - \sum_{\vec{s}}\: \left( \frac{1}{2} - \frac{1}{2^{m(\vec{s})}} \right) \: \prod_{i \in \BB} \g{i}{s_i}
\ee
is a fully PPT witness for $\ket{G}$. Here, $m(\vec{s})$ is the number of elements $s_i = -1$ in $\vec{s}$, i.e., $m(\vec{s})=\left(b-\sum_{i=1}^{b} s_i \right)/2$.
\end{lemma}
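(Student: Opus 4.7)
The plan is to exploit a block-diagonal decomposition of $W_G$ induced by the commuting orthogonal projectors $R(\vec{s}) := \prod_{i\in\BB}\g{i}{s_i}$, $\vec{s}\in\{+,-\}^{|\BB|}$. The condition on $\BB$---no two qubits are neighbors or share a neighbor---forces the stabilizer generators $\{g_i\}_{i\in\BB}$ to act on pairwise disjoint sets of qubits. Hence each $R(\vec{s})$ is a tensor product of $X$'s, $Z$'s and identities, which (i) makes the $R(\vec{s})$ mutually orthogonal projectors with $\sum_{\vec{s}}R(\vec{s})=\eins$, and (ii) makes every $R(\vec{s})$ invariant under any partial transposition $T_M$ (no $Y$'s appear). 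Plugging the resolution of the identity into $\frac{1}{2}\eins$ and absorbing the subtracted sum over $\vec{s}$ with $m(\vec{s})\geq 2$ rewrites $W_G$ as
\be
W_G \;=\; \bigl(\tfrac{1}{2}\, R(\vec{+}) - \ketbra{G}\bigr) \;+\; \sum_{\vec{s}\neq\vec{+}} c(\vec{s})\, R(\vec{s}),
\ee
where $\vec{+}=(+,\dots,+)$ and $c(\vec{s})=1/2$ if $m(\vec{s})=1$, $c(\vec{s})=2^{-m(\vec{s})}$ if $m(\vec{s})\geq 2$; in particular $c(\vec{s})>0$. Since $\ket{G}=R(\vec{+})\ket{G}$, the operator $\ketbra{G}$ is contained in the range of $R(\vec{+})$, so the decomposition is genuinely block-diagonal.

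Applying $T_M$ preserves the block structure: $R(\vec{s})^{T_M}=R(\vec{s})$, and the identity $\ketbra{G}=R(\vec{+})\ketbra{G}R(\vec{+})$ partially transposes to $\ketbra{G}^{T_M}=R(\vec{+})\ketbra{G}^{T_M}R(\vec{+})$, keeping the subtracted piece inside the $\vec{+}$-block. The blocks with $\vec{s}\neq\vec{+}$ contribute $c(\vec{s})R(\vec{s})\geq 0$ and are finished. The whole lemma therefore reduces to the single operator inequality
\be
\tfrac{1}{2}\, R(\vec{+}) \;-\; \ketbra{G}^{T_M} \;\geq\; 0
\ee
for every non-trivial bipartition $M$.

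This inequality is the main obstacle and the only substantive step. Because $\ketbra{G}^{T_M}$ is supported on the range of $R(\vec{+})$, where $R(\vec{+})$ acts as the identity, the claim is equivalent to the operator-norm bound $\|\ketbra{G}^{T_M}\|_\infty \leq 1/2$. I would derive this from the well-known fact that a graph state has flat Schmidt coefficients across any bipartition: $\ket{G}=2^{-r/2}\sum_{i=1}^{2^r}\ket{u_i}_M\ket{v_i}_{\comp{M}}$ for some Schmidt rank $2^r$. A short swap-operator computation then gives $\|\ketbra{G}^{T_M}\|_\infty=2^{-r}$, and connectedness of $G$ guarantees $r\geq 1$ for every non-trivial bipartition, yielding the required bound. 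The detection property is immediate: $R(\vec{s})\ket{G}=\delta_{\vec{s},\vec{+}}\ket{G}$ kills every subtracted cross-term, leaving $\bra{G}W_G\ket{G}=\tfrac{1}{2}-1=-\tfrac{1}{2}<0$. Combined with $W_G^{T_M}\geq 0$ for all $M$, Lemma~\ref{lem:1} then certifies $W_G$ as a fully PPT entanglement witness for $\ket{G}$.
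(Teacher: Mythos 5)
Your algebraic rewriting of $W_G$ into blocks $R(\vec{s})=\prod_{i\in\BB}\g{i}{s_i}$, the observation that each $R(\vec{s})$ contains no $Y$ operators and is therefore invariant under every $T_M$, and the computation $\bra{G}W_G\ket{G}=-1/2$ are all correct and mirror ingredients of the paper's proof. The gap is the claim that the block structure survives partial transposition, i.e.\ that $\left(\ketbra{G}\right)^{T_M}=R(\vec{+})\left(\ketbra{G}\right)^{T_M}R(\vec{+})$. Partial transposition does not commute with conjugation in this way (in general $(ABC)^{T_M}\neq A^{T_M}B^{T_M}C^{T_M}$), and the statement is in fact false: $\left(\ketbra{G}\right)^{T_M}$ is graph-diagonal, but it has nonzero diagonal entries outside the $R(\vec{+})$ block. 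A concrete counterexample is the four-qubit linear cluster state with $\BB=\lbrace 1,4\rbrace$ and $M=\lbrace 2\rbrace$: expanding $\ketbra{G}=\frac{1}{16}\sum_{\vec{x}}\prod_i g_i^{x_i}$ and noting that $T_{\lbrace 2\rbrace}$ flips the sign exactly when $x_2=1$ and $x_1\oplus x_3=1$, one finds $\bra{1110}\left(\ketbra{G}\right)^{T_M}\ket{1110}=-1/2\neq 0$, an entry lying in the block with $s_1=-1$.

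Because of this spill-over, the lemma does not reduce to $\frac{1}{2}R(\vec{+})-\left(\ketbra{G}\right)^{T_M}\geq 0$. In a block labelled by $\vec{s}$ with $m(\vec{s})=m\geq 2$ the witness retains weight only $2^{-m}$, so for every graph-basis vector $\ket{\vec{k}}$ carrying $m$ ones on $\BB$ you must prove the entry-wise bound $\bra{\vec{k}}\left(\ketbra{G}\right)^{T_M}\ket{\vec{k}}\leq 2^{-m}$, not merely $\leq 1/2$. The uniform bound $\|\left(\ketbra{G}\right)^{T_M}\|_\infty=2^{-r}$, with $2^r$ the Schmidt rank across $M|\comp{M}$, cannot deliver this: for a single-qubit partition $r=1$, while $m$ can be as large as $b$. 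This refined, entry-wise estimate is exactly where the paper's proof invests its effort: it first makes the graph bipartite with respect to $M|\comp{M}$ by local controlled-$Z$ operations, and then shows that each relevant diagonal entry either vanishes (Lemma~\ref{lem:rule1}) or is bounded by $2^{-m}$ because $m$ Bell pairs can be distilled by LOCC using the qubits of $\BB$ that have neighbors across the cut (Lemmata~\ref{lem:rule2} and \ref{lem:rule4}). Without an argument at this level of granularity your proposal only covers the entries with $m(\vec{k})\leq 1$, for which the $1/2$ bound you prove suffices, and therefore does not establish the lemma.
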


The proof is similar to the proof of Lemma~\ref{lem:witarbgraphstates}, but some parts are easier. We present it in Sec.~\ref{sec:PPTproofarbgraph} in the Appendix. 

\subsection{Examples}
\label{sec:pptex1}
{\it 2D cluster state ---} When applying the presented construction to the 2D cluster state of Fig.~\ref{fig:clnxn}, one obtains the witness
\begin{align}
\label{eq:82}
W = \:& \frac{1}{2} \eins - \ketbra{Cl_{4 \times 4}} \nonumber \\
&- \frac{1}{4} \left( \g{1}{-} \g{4}{-}\g{10}{+} \g{16}{+}  + \g{1}{-} \g{4}{+}\g{10}{+} \g{16}{-} + \g{1}{+} \g{4}{+}\g{10}{-} \g{16}{-} \right. \nonumber \\
&  \qquad \left. +\:\g{1}{+} \g{4}{-}\g{10}{-} \g{16}{+} +\g{1}{+} \g{4}{-}\g{10}{+} \g{16}{-}+\g{1}{-} \g{4}{+}\g{10}{-} \g{16}{+} \right) \nonumber \\
&- \frac{3}{8} \left( \g{1}{-} \g{4}{-}\g{10}{-} \g{16}{+} +\g{1}{-} \g{4}{-}\g{10}{+} \g{16}{-} \right. \nonumber \\
&  \qquad \left. +\: \g{1}{-} \g{4}{+}\g{10}{-} \g{16}{-}+\g{1}{+} \g{4}{-}\g{10}{-} \g{16}{-} \right) \nonumber \\
&\: -\frac{7}{16} \g{1}{-} \g{4}{-}\g{10}{-} \g{16}{-} \: .
\end{align}
This witness has a white noise tolerance of $p_{\rm tol} = \frac{32768}{51455} \approx 0.637$.

\subsection{Extended construction method}
\label{sec:exconstrfpptwit}
We can now rewrite Lemma~\ref{lem:witcomb} for fully PPT witnesses. Although these witnesses have a smaller white noise tolerance, they can be handled easier analytically, which enabled us to relax the premises of Lemma~\ref{lem:witcomb}. Therefore, one can apply the new lemma to a larger class of states.

\begin{lemma}
\label{lem:PPTwitcomb}
Given a connected graph state $\ket{G}$ and $m$ subsets $\BB_i$ of its qubits that fulfill the following two conditions:
\begin{enumerate}[(i)]
\item No set $\BB_i$ contains two qubits that have a neighbor in common.
\item No two qubits in $\operatorname*{\cup}_{i=1}^{m} \BB_i$ are neighbors of each other.
\end{enumerate}
Moreover, let $W_i$ be the fully PPT witnesses that one can construct from the subsets $\BB_i$ according to Lemma~\ref{lem:PPTwitarbgraphstates}. Then,
\be
\label{eq:PPTcombwit}
W = \sum_{\vec{k} \in \lbrace 0,1 \rbrace^n} \ketbra{\vec{k}} \min_{i=1,\dots, m} \bra{\vec{k}} W_i \ket{\vec{k}}
\ee
is a fully PPT witness.
\end{lemma}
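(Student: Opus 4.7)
My plan is to verify the two defining properties of a fully PPT witness for $\ket{G}$: it has negative expectation on $\ket{G}$, and $W^{T_M}\geq 0$ for every bipartition $M|\comp{M}$, so that by the PPT analogue of Lemma~\ref{lem:1} it is non-negative on every biseparable state.

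\emph{Easy reductions.} I would first check that each $\BB_i$ alone satisfies the hypothesis of Lemma~\ref{lem:PPTwitarbgraphstates}: within $\BB_i\subseteq\cup_k\BB_k$ no two qubits are neighbors (by (ii)) and no two share a common neighbor (by (i)). Hence each $W_i$ is a fully PPT witness, so $W_i^{T_M}\geq 0$ for every $M$. For the expectation on $\ket{G}$: since $\ket{G}=\ket{\vec 0}$ in the graph state basis and $\g{j}{-}\ket{G}=0$, every subtracted term $\prod_{j\in\BB_i}\g{j}{s_j}$ with $m(\vec s)\geq 2$ annihilates $\ket{G}$. Thus $\ex{G|W_i|G}=\tfrac12-1=-\tfrac12$ for every $i$, so $w_{\vec 0}=-\tfrac12$ and $\ex{G|W|G}=-\tfrac12<0$.

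\emph{The main step.} Writing $W=\tfrac{1}{2}\eins-\ketbra{G}-R$ with $R=\sum_{\vec a}r_{\vec a}\ketbra{\vec a}$ graph-diagonal and $r_{\vec a}=\max_i(R_i)_{\vec a\vec a}$, the key observation is that each generator $g_j$ contains only Pauli operators $X$ and $Z$, so every $\g{j}{s_j}$, and hence every $R_i$ and $R$, is $T_M$-invariant in the computational basis. Therefore $W^{T_M}\geq 0$ is equivalent to the operator inequality
\[
\tfrac{1}{2}\eins-\ketbra{G}^{T_M}\geq R.
\]
The analogous bounds $\tfrac{1}{2}\eins-\ketbra{G}^{T_M}\geq R_i$ follow from $W_i^{T_M}\geq 0$, but since $R=\max_i R_i$ on the graph diagonal gives $R\geq R_i$ rather than $R\leq R_i$, they do not immediately combine. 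I would introduce the refined joint projectors $\Pi_{\vec s}=\prod_{j\in\cup_k\BB_k}\g{j}{s_j}$, which by condition (ii) act with $X$ on pairwise distinct qubits and form an orthogonal $T_M$-invariant resolution of the identity. On each subspace $\Pi_{\vec s}\HH$, $R$ acts as the scalar $\rho_{\vec s}$, which equals the value on that subspace of the specific $R_{i^\ast(\vec s)}$ attaining the pointwise maximum; hence the individual bound $\tfrac{1}{2}\eins-\ketbra{G}^{T_M}\geq R_{i^\ast(\vec s)}$ controls $R$ on each subspace.

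\emph{Main obstacle.} The crux is upgrading the per-subspace bounds to the global operator inequality, since $\ketbra{G}^{T_M}$ is not block-diagonal with respect to $\{\Pi_{\vec s}\}$ in general: the stabilizer expansion of $\ketbra{G}$ mixes generators indexed inside and outside $\cup_k\BB_k$, and after $T_M$ these cross-terms can couple distinct blocks. I expect the resolution to exploit that $\cup_k\BB_k$ is an independent set (by (ii)) to constrain which $Y$-Paulis can appear in these cross-terms and so to bound the off-block matrix elements, allowing the per-subspace bounds to close into the full operator inequality. This Pauli-level bookkeeping, parallel to the analysis used for the single-subset case in Sec.~\ref{sec:PPTproofarbgraph}, is where I expect the bulk of the technical work to lie.
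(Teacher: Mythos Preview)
Your setup is correct and closely parallels the paper's proof: you write $W=\tfrac12\eins-\ketbra{G}-R$ with $R=\max_i P_+^{(i)}$ (your $R_i$ is the paper's $P_+^{(i)}$), you note that each $P_+^{(i)}$ is $T_M$-invariant, and you introduce the joint projectors $\Pi_{\vec s}=\prod_{j\in\cup_k\BB_k}\g{j}{s_j}$, which by condition~(ii) contain no $Y$ Paulis and are therefore $T_M$-invariant. Showing that $R$ is constant on each $\Pi_{\vec s}$-block and hence itself $T_M$-invariant is exactly the content of the paper's Lemma~\ref{lem:ptinvariance}.

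Your ``main obstacle'', however, is a phantom. The assertion that $\ketbra{G}^{T_M}$ fails to be block-diagonal with respect to $\{\Pi_{\vec s}\}$ is false. In the stabilizer expansion $\ketbra{G}=2^{-n}\sum_{\vec x}\prod_i g_i^{x_i}$ every term is a tensor product of Pauli matrices; partial transposition merely flips the sign of those products carrying an odd number of $Y$'s on $M$. Hence $\ketbra{G}^{T_M}$ is still a signed sum of stabilizer elements and is therefore \emph{diagonal in the full graph state basis} $\{\ket{\vec k}\}$ (cf.\ the remark after Eq.~(\ref{eq:diagonal_positivity}) in Sec.~\ref{sec:proofarbgraph}), a fortiori block-diagonal for any coarser decomposition such as $\{\Pi_{\vec s}\}$. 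The ``cross-terms that couple distinct blocks'' you anticipate do not exist: a Pauli word with some $Y$'s replaced by $-Y$ is the same Pauli word up to a sign, not an off-diagonal operator.

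With this observation the proof closes in one line and the ``bulk of the technical work'' you expect is absent. Since $W^{T_M}$ is graph-diagonal, positivity reduces to $\bra{\vec k}W^{T_M}\ket{\vec k}\geq 0$ for every $\vec k$. Fix $\vec k$ and let $i_0$ attain $\max_i\bra{\vec k}P_+^{(i)}\ket{\vec k}$; then $W=W_{i_0}-\bigl(\max_j P_+^{(j)}-P_+^{(i_0)}\bigr)$, the bracket is $T_M$-invariant and vanishes at $\ket{\vec k}$, so $\bra{\vec k}W^{T_M}\ket{\vec k}=\bra{\vec k}W_{i_0}^{T_M}\ket{\vec k}\geq 0$. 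This is precisely the argument in Sec.~\ref{sec:PPTproofextconstr}.
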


The proof of Lemma~\ref{lem:PPTwitcomb} can be found in Sec.~\ref{sec:PPTproofextconstr} in the Appendix.

\subsection{Examples}
\label{sec:pptex2}

{\it Linear cluster state ---} Consider an $n$-qubit linear cluster state as shown in Fig.~\ref{fig:cln} b). We define a subset $\BB_1$ for the construction of a witness $W_1$ according to Lemma~\ref{lem:PPTwitarbgraphstates}) by picking the qubits $\BB_1 = \lbrace 1, 5, 9, ... \rbrace$. These are marked by red circles in Fig.~\ref{fig:cln} b). Analogously, the qubits marked by a green square belong to a second subset $\BB_2$ which is used to construct a witness $W_2$. Then, Lemma~\ref{lem:PPTwitcomb} implies that there is a witness $W$ as given in Eq.~(\ref{eq:PPTcombwit}).

Let us present this witness for a seven-qubit cluster state. Then, $\BB_1 = \lbrace 1, 5 \rbrace$ and $\BB_2 = \lbrace 3, 7 \rbrace$. Consequently,
\begin{align}
W_1 = &\frac{1}{2} \eins - \ketbra{Cl_7} - \frac{1}{4} \g{1}{-} \g{5}{-} \:,\\
W_2 = &\frac{1}{2} \eins - \ketbra{Cl_7} - \frac{1}{4} \g{3}{-} \g{7}{-} \:.
\end{align}
Since the only terms that $\g{1}{-} \g{5}{-}$ and $ \g{3}{-} \g{7}{-}$ have in common are given by $\g{1}{-} \g{3}{-}  \g{5}{-} \g{7}{-}$, Eq.~(\ref{eq:PPTcombwit}) can be expressed as

\begin{align}
\label{eq:cl7combwit}
W_{\rm Cl7,2} = \:& \frac{1}{2} \eins - \ketbra{Cl_7} - \frac{1}{4}\g{1}{-} \g{5}{-} - \frac{1}{4} \g{3}{-} \g{7}{-} \nonumber \\
&+ \frac{1}{4} \g{1}{-} \g{3}{-}  \g{5}{-} \g{7}{-} \: .
\end{align}

A fully PPT witness for the seven-qubit linear cluster state constructed according to Lemma~\ref{lem:PPTwitarbgraphstates} with $\BB = \lbrace 1, 4, 7\rbrace$ has a white noise tolerance of $p_{\rm tol} = 64/109 \approx 0.588$. The witness of Eq.~(\ref{eq:cl7combwit}), however, only has a tolerance of $p_{\rm tol} = 64/113 \approx 0.566$. While Lemma~\ref{lem:PPTwitcomb} does not allow to construct more robust witnesses for linear cluster states compared to simply using Lemma~\ref{lem:PPTwitarbgraphstates}, it still has some advantages.

First, for many graph states, e.g. the state No.~6 ($Y_5$) and the state No.~11 ($H_6$) of Table~\ref{tab:graphstates}, Lemma~\ref{lem:PPTwitcomb} {\it does} provide a method to construct witnesses that are more robust than witnesses constructed via Lemma~\ref{lem:PPTwitarbgraphstates} alone. We note that the fully decomposable witnesses of Lemma~\ref{lem:witcomb} are even more robust. However, as mentioned before, the prerequisites for Lemma~\ref{lem:witcomb} are more strict than those for Lemma~\ref{lem:PPTwitcomb} and therefore, there are graph states for which the former cannot be used, but the latter applies. For example, this is the case for the 2D cluster state of 16 qubits, to which Lemma~\ref{lem:PPTwitcomb} can be applied, as we will see at the end of this section, but Lemma~\ref{lem:witcomb} can not be used as there are no two qubits with the same neighborhood.

Second, witnesses constructed according to Lemma~\ref{lem:PPTwitcomb} using two sets $\BB_1$ and $\BB_2$ as shown in Fig.~\ref{fig:cln} b) can be used to improve the linear cluster state witnesses $\mathcal{W}^{(C_N)}$ of Ref.~\cite{entdecstab}, which results in a witness that only needs two experimental settings to be measured.

To illustrate this, we consider the seven-qubit linear cluster state and its witness $W_{\rm Cl7,2}$ of Eq.~(\ref{eq:cl7combwit}) again. The linear cluster state witness of Eq.~(9) in Ref.~\cite{entdecstab} is given by
\be
\mathcal{W}^{(C_N)} = \frac{3}{2} \eins - \left( \prod_{i = 1,3,5,7} \g{i}{+} +  \prod_{i = 2,4,6} \g{i}{+}\right) \:.
\ee

Due to the form of the generators, it can be measured locally using only two settings, namely the eigenbases of $X_1 Z_2 X_3 Z_4 X_5 Z_6 X_7$ and $Z_1 X_2 Z_3 X_4 Z_5 X_6 Z_7$. Since \mbox{$\mathcal{W}^{(C_N)} \geq \frac{1}{2} \eins - \ketbra{G}$}, one has

\begin{align}
W_{\rm Cl7,2} = \:& \frac{1}{2} \eins - \ketbra{Cl_7} - \frac{1}{4}\g{1}{-} \g{5}{-} - \frac{1}{4} \g{3}{-} \g{7}{-} \nonumber \\
&+ \frac{1}{4} \g{1}{-} \g{3}{-}  \g{5}{-} \g{7}{-} \nonumber \\
\leq \: &\mathcal{W}^{(C_N)} - \frac{1}{4}\g{1}{-} \g{5}{-} - \frac{1}{4} \g{3}{-} \g{7}{-} \nonumber \\
&+ \frac{1}{4} \g{1}{-} \g{3}{-}  \g{5}{-} \g{7}{-} \nonumber \\
= \: & \mathcal{W}^{(C_N)}_{\rm imp} \: ,
\end{align}
where the last equality sign defines the improved witness $\mathcal{W}^{(C_N)}_{\rm imp}$. This witness detects more states than $\mathcal{W}^{(C_N)}$ and also requires only two settings, since the additional terms can be determined through the measurement of $X_1 Z_2 X_3 Z_4 X_5 Z_6 X_7$. Note that this is not in contradiction with the result of Ref.~\cite{entdecstabpra} stating that $\mathcal{W}^{(C_N)}$ has the highest possible white noise tolerance amongst all stabilizer witnesses that can be measured using two settings, as only witnesses obeying \mbox{$\mathcal{W}^{(C_N)} \geq \alpha (\frac{1}{2} \eins - \ketbra{G})$} for some $\alpha > 0$ where considered in Ref.~\cite{entdecstabpra}.

Note that it is possible to construct a better witness for linear cluster state of seven qubits by adding a third witness $W_3$ constructed for the subset $\BB_3 =\lbrace 1,7\rbrace$. Then, the white noise tolerance increases to $p_{\rm tol} = \frac{64}{111}\approx 0.577$.

Finally, we apply the construction of Lemma~\ref{lem:PPTwitcomb} to the 2D cluster state of 16 qubits.

{\it 2D cluster state ---} Fig.~\ref{fig:clnxn} b) shows how to choose four subsets $\BB_i$ of qubits from a 2D cluster state $\ket{Cl_{4 \times 4}}$ made up of 16 qubits. $\BB_1$ is shown by red circles, $\BB_2$ by blue triangles, $\BB_3$ by green squares and $\BB_4$ by orange pentagons. The resulting witnesses $W_i$ can be combined as in Eq.~(\ref{eq:PPTcombwit}) to yield a witness that can be rewritten as
\begin{align}
\label{eq:95}
W =  \frac{1}{2}\eins &- \ketbra{Cl_{4 \times 4}} \nonumber \\
 - \frac{1}{4}  & \left( \g{1}{-}\g{11}{-} + \g{6}{-}\g{16}{-}+ \g{3}{-}\g{9}{-}+ \g{8}{-}\g{14}{-} \right) \nonumber \\
 + \frac{1}{4} & \left( \g{1}{-}\g{11}{-} \g{6}{-}\g{16}{-} +\g{6}{-}\g{16}{-}\g{3}{-}\g{9}{-}+\g{3}{-}\g{9}{-}\g{8}{-}\g{14}{-} \right. \nonumber \\
& \left.+\:\g{1}{-}\g{11}{-} \g{8}{-}\g{14}{-} +\g{6}{-}\g{16}{-}\g{8}{-}\g{14}{-}+\g{1}{-}\g{11}{-}\g{3}{-}\g{9}{-} \right)\nonumber\\
- \frac{1}{4} &\left(\g{1}{-}\g{11}{-} \g{6}{-}\g{16}{-} \g{3}{-}\g{9}{-}+\g{1}{-}\g{11}{-} \g{6}{-}\g{16}{-}  \g{8}{-}\g{14}{-} \right.\nonumber \\
& \left.+\: \g{1}{-}\g{11}{-}  \g{3}{-}\g{9}{-} \g{8}{-}\g{14}{-}+\g{6}{-}\g{16}{-} \g{3}{-}\g{9}{-} \g{8}{-}\g{14}{-}\right) \nonumber \\
+ \frac{1}{4} & \g{1}{-}\g{11}{-} \g{6}{-}\g{16}{-} \g{3}{-}\g{9}{-} \g{8}{-}\g{14}{-} \: .
\end{align}

This witness has a white noise tolerance of \mbox{$p_{\rm tol} = \frac{32768}{54335} \approx 0.603$}. As we noted for linear cluster state, there are even more subsets $\BB_i$ that one can use, such as $\BB_5 = \lbrace 1, 8 \rbrace$ and $\BB_6 = \lbrace 3,9,16 \rbrace$. In fact, there are 13 subsets of $\lbrace 1,3,6,8,9,11,14,16\rbrace$ that obey condition (ii) of Lemma~\ref{lem:PPTwitcomb}. Taking all of them into account, one obtains a witness with white noise tolerance $p_{\rm tol} = \frac{32768}{49791} \approx 0.658$ which is even better than the witness of Eq.~(\ref{eq:82}).

\subsection{2D cluster state}
\label{sec:2dclfpptwit}
Finally, we present a fully PPT witness for the 2D cluster state $\ket{\rm Cl_{4 \times 4}}$ of 16 qubits which does not fit into the framework of Lemma~\ref{lem:PPTwitarbgraphstates}. Although the construction can easily be generalized to $n \times n$ qubits, we present the witness for the $4 \times 4$ case here. To circumvent any problems that might occur due to the border, we consider this state on a torus, \ie, with periodic boundary conditions as shown in Fig.~\ref{fig:clnxntorus}.

\begin{figure}[h]
\includegraphics[width=0.7\columnwidth]{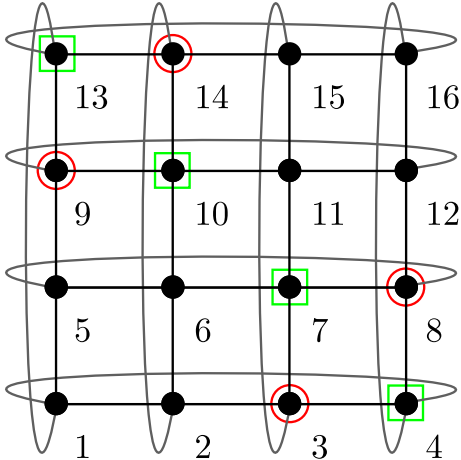}
\caption{\label{fig:clnxntorus}For the 2D cluster state on a torus, it is possible to define a fully PPT witness using the diagonals (cf. Lemma~\ref{lem:PPTwitCl4x4}).}
\end{figure}

The 2D cluster state has four parallel diagonals in one direction and, orthogonal to these, another set of four diagonals. All of these diagonals contain four qubits. The first set is made up of diagonals parallel to the diagonal $\lbrace 3,8,9,14 \rbrace$ which is indicated by red circles. We denote this set by 

\begin{align}
\label{eq:91}
\DD_{/} = \:\lbrace &\DD_{/}^{(j)} \rbrace \nonumber \\
= \: \lbrace &\lbrace 1, 6, 11, 16\rbrace, \lbrace 2, 7,12,13\rbrace, \nonumber \\
& \lbrace 3,8,9,14\rbrace, \lbrace 4,5,10,15\rbrace  \rbrace \:.
\end{align}
The second set contains diagonals parallel to the one marked by green squares, $\lbrace 4, 7, 10, 13\rbrace$. We define it as
\begin{align}
\label{eq:92}
\DD_{\backslash} = \: \lbrace & \DD_{\backslash}^{(j)} \rbrace \nonumber \\
= \: \lbrace &\lbrace 1,8,11,14\rbrace,\lbrace 2,5,12,15\rbrace ,  \nonumber \\
& \lbrace 3,6,9,16\rbrace , \lbrace 4,7,10,13\rbrace \rbrace \:.
\end{align}
We can now introduce the following witness.

\begin{lemma}
\label{lem:PPTwitCl4x4}
Given the 2D cluster state of 16 qubits with periodic boundary conditions $\ket{Cl_{4 \times 4}}$. By $\DD_{/}$ and $\DD_{\backslash}$, we denote the two sets of diagonals as defined above. For each pair of orthogonal diagonals that have no qubit in common, i.e. for each $(i,j)$ such that $\DD_{/}^{(i)} \cap\DD_{\backslash}^{(j)}= \lbrace \rbrace$, we define a projector 

\be
\label{eq:ddef}
D_{(i,j)} =   \frac{1}{2}(\eins - \prod_{k \in \DD_{/}^{(i)}} g_k) \frac{1}{2}(\eins - \prod_{l \in \DD_{\backslash}^{(j)}}g_l) \:.
\ee
Then,

\begin{align}
\label{eq:PPTwitCl4x4}
W_{4 \times 4} = & \: \frac{1}{2} \eins - \ketbra{Cl_{4 \times 4}} \nonumber \\
&- \frac{1}{4} \sum_{\vec{k}} \ketbra{\vec{k}} \max_{(i,j)}\: \bra{\vec{k}} D_{(i,j)} \ket{\vec{k}}
\end{align}
is a fully PPT witness for $\ket{Cl_{4 \times 4}}$. 

\end{lemma}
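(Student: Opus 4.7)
The plan is to show $W_{4\times 4}^{T_M}\geq 0$ for every bipartition $M\vert\comp M$ by reducing to a scalar check in the graph state basis. Every piece of $W_{4\times 4}$ --- the identity, $\ketbra G$, and each $D_{(i,j)}$ --- is a linear combination of stabilizer elements and is therefore graph-diagonal. Moreover, every stabilizer $S$ is a Pauli tensor product and hence satisfies $S^{T_M}=(-1)^{Y_M(S)}S$, where $Y_M(S)$ is the number of $Y$'s of $S$ on $M$, so the partial transpose of any stabilizer-spanned operator stays stabilizer-spanned and hence graph-diagonal. Thus $W_{4\times 4}^{T_M}\geq 0$ is equivalent to $\bra{\vec a}W_{4\times 4}^{T_M}\ket{\vec a}\geq 0$ for every $\vec a\in\lbrace 0,1\rbrace^{16}$.

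The next step exploits the torus geometry. A direct check shows that each non-diagonal neighbor of a qubit on any diagonal $\DD\in\DD_{/}\cup\DD_{\backslash}$ is in fact a neighbor of exactly two qubits of $\DD$, so the $Z$-factors in $\prod_{k\in\DD}g_k$ cancel pairwise and the product collapses to $X_\DD$. Hence $D_{(i,j)}=\tfrac14(\eins-X_{\DD_{/}^{(i)}})(\eins-X_{\DD_{\backslash}^{(j)}})$ contains only $X$'s and is invariant under every $T_M$. Since the mutually commuting graph-diagonal projectors $D_{(i,j)}$ satisfy $D_{\max}=\eins-\prod_{(i,j)}(\eins-D_{(i,j)})$ by inclusion--exclusion, the full $D_{\max}$ is likewise $T_M$-invariant. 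The PPT condition therefore reduces to
\begin{equation}
\bra{\vec a}(\ketbra G)^{T_M}\ket{\vec a}\leq\tfrac12-\tfrac14\chi(\vec a),
\end{equation}
where $\chi(\vec a)\in\lbrace 0,1\rbrace$ indicates whether $\vec a$ activates some $D_{(i,j)}$.

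Finally I would split on the cut rank $r_M$ of the graph across $M\vert\comp M$, using the stabilizer-state identity $\|(\ketbra G)^{T_M}\|_\infty=2^{-r_M}$. If $r_M\geq 2$ the bound $\leq 1/4$ follows from the operator norm alone. The hard case, which is the main obstacle, is $r_M=1$; a short calculation using the regular degree-4 structure shows this only happens for bipartitions with a single-qubit side (and, by the symmetry $W^{T_M}\geq 0\Leftrightarrow W^{T_{\comp M}}\geq 0$, also for their complements). For $M=\lbrace l\rbrace$ the diagonal entry can be computed explicitly: writing $\ketbra G=2^{-16}\sum_{\vec x}g^{\vec x}$ and $Y_l(g^{\vec x})\equiv x_l\,(\vec n\cdot\vec x)\pmod 2$ with $\vec n$ the indicator of $\NN(l)$, the exponential sum factorises and yields $\bra{\vec a}(\ketbra G)^{T_l}\ket{\vec a}=\tfrac12(\delta_{\vec a_{\comp l},\vec 0}+(-1)^{a_l}\delta_{\vec a_{\comp l},\vec n})$. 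This equals $+\tfrac12$ only for $\vec a\in\lbrace\vec 0,(1,\vec 0),(0,\vec n)\rbrace$, and $\chi=0$ on each: trivially for $\vec 0$; for $(1,\vec 0)$ because the only two diagonals with odd sum are those containing $l$, which intersect at $l$; and for $(0,\vec n)$ because every diagonal has even intersection with $\NN(l)$, a fact verified at $l=1$ and extended by translational symmetry. The value $-\tfrac12$ at $\vec a=(1,\vec n)$ satisfies the inequality trivially. Combined with $\bra G W_{4\times 4}\ket G=-\tfrac12$ (since $\ket G=\ket{\vec 0}$ lies outside every $D_{(i,j)}$'s support), this establishes that $W_{4\times 4}$ is a genuine fully PPT witness.
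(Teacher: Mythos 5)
Your overall skeleton matches the paper's: reduce positivity of $W_{4\times4}^{T_M}$ to diagonal entries in the graph basis, observe that the diagonal products collapse to pure $X$-strings so the extra term is invariant under every partial transposition, and then verify $\bra{\vec a}(\ketbra{Cl_{4\times4}})^{T_M}\ket{\vec a}\leq \tfrac12-\tfrac14\chi(\vec a)$ by splitting on the size of the cut. Your treatment of the single-qubit case is correct and in fact cleaner than the paper's: the exact formula $\bra{\vec a}(\ketbra G)^{T_l}\ket{\vec a}=\tfrac12(\delta_{\vec a_{\comp l},\vec 0}+(-1)^{a_l}\delta_{\vec a_{\comp l},\vec n})$ checks out, and your verification that the three $+\tfrac12$ vectors activate no $D_{(i,j)}$ (the two diagonals through $l$ intersect; every diagonal meets $\NN(l)$ in $0$ or $2$ qubits) reproduces, in one stroke, what the paper obtains from its Lemma~\ref{lem:rule1b} plus the geometric parity argument around $\NNN(q)$.

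The genuine gap is the sentence ``a short calculation using the regular degree-4 structure shows this [$r_M=1$] only happens for bipartitions with a single-qubit side.'' That claim is exactly the content of the paper's Lemma~\ref{lem:2dbellpairs} (via the flat stabilizer Schmidt spectrum, $\max_i\lambda_i^2=2^{-r_M}$, your statement is equivalent to: every cut with $|M|\geq 2$ and $|\comp M|\geq 2$ yields two distillable Bell pairs), and it is not a triviality to be waved through. A GF(2)-rank-one cut means the crossing edges form a complete bipartite graph $S\times T$, and besides the star case $|S|=1$ one must rule out $|S|=|T|=2$, i.e.\ the possibility that the only crossing edges are the four edges of a $K_{2,2}$ --- a face square or a wrap-around $4$-cycle of the torus. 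This is precisely the subtle situation the paper flags in case c) of its proof (the induced four-qubit ring is LU-equivalent to a \emph{single} Bell pair), and excluding it requires an actual argument about the $4\times4$ torus, e.g.\ the paper's case analysis showing that such a configuration forces additional crossing edges, or an equivalent combinatorial argument that no edge cut of size four separates two sides of two or more qubits each. Your proof is complete once you supply this step; as written, the central combinatorial lemma of the statement is asserted rather than proven.
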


The proof can be found in Sec.~\ref{sec:PPTwitCl4x4} of the Appendix. Note that Eq.~(\ref{eq:PPTwitCl4x4}) is easy to generalize to $n\times n$ qubits, as for a larger number of qubits only the definitions of Eqs. (\ref{eq:91}) and (\ref{eq:92}) would have to be changed. The proof provided in the Appendix works for $n\times n$ qubits with $n \geq 3$.

Specifically, the maximization in Eq.~(\ref{eq:PPTwitCl4x4}) is carried out over the operators $D_{(1,2)}$, $D_{(1,4)}$, $D_{(2,3)}$, $D_{(2,1)}$, $D_{(3,2)}$, $D_{(3,4)}$, $D_{(4,3)}$ and $D_{(4,1)}$. Similarly to Eq.~(\ref{eq:95}), this maximum can also be written as a polynomial in the operators $D_{(i,j)}$. Moreover, the expectation values of these operators can be determined by measuring one experimental setting, namely $X$-measurements on all qubits. Thus, the sum in Eq.~(\ref{eq:PPTwitCl4x4}) can be obtained by implementing one experimental setting.

In order to determine the terms that the witness $W_{4 \times 4}$ contains in addition to the projector witness, i.e. the sum in Eq.~(\ref{eq:PPTwitCl4x4}), one has to measure the operators $D_{(i,j)}$ that obey $\DD_{/}^{(i)} \cap\DD_{\backslash}^{(j)}= \lbrace \rbrace$. These are the operators $D_{(1,2)}$, $D_{(1,4)}$, $D_{(2,3)}$, $D_{(2,1)}$, $D_{(3,2)}$, $D_{(3,4)}$, $D_{(4,3)}$ and $D_{(4,1)}$. From these, one can determine the elementwise maximum in Eq.~(\ref{eq:PPTwitCl4x4}), as one can show that it can be written as a polynomial in the operators $D_{(i,j)}$. Moreover, the expectation values of all of these operators can be determined by measuring one experimental setting, namely $X$-measurements on all qubits. Thus, the additional term in Eq.~(\ref{eq:PPTwitCl4x4}) can be obtained by implementing one experimental setting. 

The white noise tolerance of $W_{4 \times 4}$ is given by $p_{\rm tol} = \frac{32768}{53503} \approx 0.612$.

\section{Entanglement monotone}
\label{sec:entmeas}
Finally, we consider another variation of the program in Eq.~(\ref{eq:sdp1}) which results in an entanglement monotone for genuine multipartite entanglement. In this section, we will present one lemma which traces this monotone back to the negativity in the two-particle case and another lemma that specifies the values that the monotone can take.

For a generic multipartite state $\vr$, consider the quantity
\begin{align}
\label{eq:monotone}
&N(\vr)= -\min_{W \in \mathcal{W}} \trace(\vr W),\\
\label{eq:setfullydec}
&\mathcal{W}= \left\{ W \big| \; \mbox{for all} \: M: \exists\; P_M, Q_M \: \mbox{such that} \right.  \nonumber \\
& \qquad \qquad \left. 0 \leq P_M, Q_M \leq \mathbbm{1}  \: {\rm and} \: W=P_M + Q_M^{T_M} \right\}\!, 
\end{align}
where $M$ is a strict subset of the set of all qubits. Note that the class $\mathcal{W}$ consists of fully decomposable witnesses which are only normalized in a different way than before. Then, the following lemma holds.

\begin{lemma}
\label{lem:equalsnegativity}
$N(\varrho)$ fulfills the following properties:
\begin{itemize}
\item $N(\varrho^{\rm bs})=0$ for all biseparable states $\varrho^{\rm bs}$.
\item $N[\Lambda_{\rm LOCC}(\varrho)]\leq N(\varrho)$ for all full LOCC operations.
\item $N(U_{\rm loc} \varrho U^\dag_{\rm loc})=N(\varrho)$ for local basis changes $U_{\rm loc}$.
\item $N(\sum_i p_i \varrho_i) \leq \sum_i p_i N(\varrho_i)$ holds for all convex combinations $\sum_i p_i \varrho_i$.
\end{itemize}
Thus, $N(\varrho)$ is a monotone for genuine multipartite entanglement. In the bipartite case, the monotone $N(\varrho)$ of Eq.~(\ref{eq:monotone}) equals the negativity.
\end{lemma}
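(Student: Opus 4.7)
The plan is to dispatch the easy bullets first and then concentrate on the LOCC monotonicity, which carries the main technical weight. I would begin by observing that $W=0$ lies in $\mathcal{W}$ (take $P_M=Q_M=0$ for every $M$), so $N(\vr)\geq 0$ for every state. For a biseparable $\vr^{\rm bs}$, Lemma~\ref{lem:1} gives $\trace(\vr^{\rm bs}W)\geq 0$ for every $W\in\mathcal{W}$, hence $N(\vr^{\rm bs})\leq 0$, and the first bullet follows. Convexity comes from super-additivity of the minimum, $\min_W\sum_i p_i\trace(\vr_iW)\geq\sum_i p_i\min_W\trace(\vr_iW)$, which translates into $N(\sum_i p_i\vr_i)\leq\sum_i p_i N(\vr_i)$ after inserting the minus sign.

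The core of the proof is LOCC monotonicity. The plan is to show that the dual $\Lambda^*$ of a trace-preserving LOCC channel $\Lambda_{\rm LOCC}=\sum_k A_k(\cdot)A_k^\dag$ maps $\mathcal{W}$ into itself; then
\[
N(\Lambda_{\rm LOCC}(\vr))=-\min_{W\in\mathcal{W}}\trace\bigl(\vr\,\Lambda^*(W)\bigr)\leq -\min_{W'\in\mathcal{W}}\trace(\vr\,W')=N(\vr).
\]
Fix $W=P_M+Q_M^{T_M}\in\mathcal{W}$ and a bipartition $M|\comp{M}$. Because each LOCC Kraus operator factorises as $A_k=A_k^M\otimes A_k^{\comp{M}}$, a direct computation (expand $Q_M=\sum_\alpha C_\alpha\otimes D_\alpha$ and apply $T_M$ inside the sum) gives $\Lambda^*(Q_M^{T_M})=\widetilde Q_M^{T_M}$ with
\[
\widetilde Q_M=\sum_k\bigl((A_k^M)^*\otimes A_k^{\comp{M}}\bigr)^\dag Q_M\bigl((A_k^M)^*\otimes A_k^{\comp{M}}\bigr)\geq 0.
\]
Since $\Lambda^*$ is completely positive and unital, $\widetilde P_M:=\Lambda^*(P_M)$ automatically satisfies $0\leq\widetilde P_M\leq\eins$. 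The remaining bound $\widetilde Q_M\leq\eins$ will follow by replacing $Q_M$ by $\eins$ inside the sum and then invoking trace preservation $\sum_k A_k^\dag A_k=\eins$ together with $\eins^{T_M}=\eins$. This shows $\Lambda^*(W)\in\mathcal{W}$.

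Invariance under local unitaries is now a corollary, since both $U_{\rm loc}(\cdot)U_{\rm loc}^\dag$ and its inverse are trace-preserving LOCC channels, so monotonicity in both directions forces equality. For the bipartite reduction, in the two-particle case the only non-trivial subset is $M=A$, so every $W\in\mathcal{W}$ takes the form $W=P+Q^{T_A}$ with $0\leq P,Q\leq\eins$, and $-\trace(\vr W)=-\trace(\vr P)-\trace(\vr^{T_A}Q)$. Because $\vr\geq 0$ the first term is maximised by $P=0$, and the second by choosing $Q$ as the projector onto the negative eigenspace of $\vr^{T_A}$, which yields $N(\vr)=\sum_{\lambda_i<0}|\lambda_i|$ --- precisely the negativity.

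The hardest step is the upper bound $\widetilde Q_M\leq\eins$. Positivity of $\widetilde Q_M$ is manifest from its sum-of-sandwiches form, but the upper bound requires combining (i) the tensor-product factorisation of the Kraus operators across $M|\comp{M}$, (ii) global trace preservation, and (iii) the invariance of the identity under partial transposition. Once this piece is in place, the remaining assertions reduce to routine duality arguments.
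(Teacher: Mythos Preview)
Your argument is correct. The paper itself does not actually prove the four monotone properties; it simply cites Ref.~\cite{ourpaper} for them and only spells out the bipartite reduction to the negativity. Your treatment of that last part is essentially identical to the paper's: rewrite $\trace(W\vr)=\trace(P\vr)+\trace(Q\,\vr^{T_A})$, then minimise by taking $P=0$ and $Q$ the projector onto the negative eigenspace of $\vr^{T_A}$.

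Where you go beyond the paper is in supplying an explicit proof of LOCC monotonicity via the duality argument $\Lambda^*(\mathcal{W})\subseteq\mathcal{W}$. Your key computation---that for Kraus operators $A_k=A_k^M\otimes A_k^{\comp{M}}$ one has $\Lambda^*(Q_M^{T_M})=\widetilde Q_M^{T_M}$ with $\widetilde Q_M=\sum_k\bigl((A_k^M)^*\otimes A_k^{\comp{M}}\bigr)^\dag Q_M\bigl((A_k^M)^*\otimes A_k^{\comp{M}}\bigr)$---is correct, and the bound $\widetilde Q_M\leq\eins$ indeed follows by replacing $Q_M$ with $\eins$ and recognising the resulting sum as $\bigl(\sum_k A_k^\dag A_k\bigr)^{T_M}=\eins$. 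The factorisation $A_k=A_k^M\otimes A_k^{\comp{M}}$ is legitimate because full LOCC Kraus operators are $n$-fold tensor products across all single parties, hence factorise across any bipartition. This is presumably the argument in Ref.~\cite{ourpaper} as well, so your route is not genuinely different, just self-contained; the benefit is that a reader does not have to chase the reference.
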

For the proof that $N(\varrho)$ is a monotone, we refer to Ref.~\cite{ourpaper}. Finally, it is interesting to know which values $N(\varrho)$ can take and what the maximally entangled states are.
\begin{lemma}
\label{lem:onehalf}
For any state $\vr$ of $n$ qubits,
\be
\label{eq:84}
N(\vr) \leq \frac{1}{2} \:.
\ee
For any connected graph state $\ket{G}$,
\be
N(\ketbra{G}) = \frac{1}{2} \: .
\ee
\end{lemma}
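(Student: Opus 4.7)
My plan is to establish the two bounds separately. The universal upper bound $N(\vr) \le 1/2$ will follow from restricting attention to a single-qubit bipartition together with the standard negativity bound for $2 \otimes D$ systems, while the matching lower bound for connected graph states will come from exhibiting the projector witness $W_{\rm proj} = \tfrac{1}{2}\eins - \ketbra{G}$ as an explicit element of $\mathcal{W}$.

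For the upper bound, fix any single-qubit cut $M = \{j\}$ so that $M|\comp{M}$ is $2 \otimes 2^{n-1}$. For any $W \in \mathcal{W}$ write $W = P_M + Q_M^{T_M}$ with $0 \le P_M, Q_M \le \eins$. Using the identity $\trace(\vr Q_M^{T_M}) = \trace(\vr^{T_M} Q_M)$ one obtains
\begin{equation}
\trace(\vr W) = \trace(\vr P_M) + \trace(\vr^{T_M} Q_M).
\end{equation}
The first summand is non-negative. For the second, I would expand $\vr^{T_M} = \sum_k \mu_k \ketbra{k}$ and use $0 \le \bra{k} Q_M \ket{k} \le 1$ to obtain $\trace(\vr^{T_M} Q_M) \ge \sum_{\mu_k < 0} \mu_k = -\NNN(\vr)$, where $\NNN(\vr)$ denotes the bipartite negativity across $M|\comp{M}$. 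Hence $N(\vr) \le \NNN(\vr)$. Since one factor has dimension $2$, the AM-GM estimate on the two Schmidt coefficients of any pure state gives $\sqrt{\lambda_1 \lambda_2} \le (\lambda_1+\lambda_2)/2 = 1/2$, and convexity of the trace norm extends this to all mixed states, yielding the desired bound.

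For the lower bound, note that $\trace(\ketbra{G} W_{\rm proj}) = -1/2$, so it suffices to check $W_{\rm proj} \in \mathcal{W}$. For each non-trivial bipartition $M$ I would choose $P_M = 0$ and $Q_M = W_{\rm proj}^{T_M} = \tfrac{1}{2}\eins - \ketbra{G}^{T_M}$. The key input is the flat Schmidt spectrum of graph states: for any non-trivial cut of a connected graph, $\ket{G}$ admits a Schmidt decomposition with $2^r$ equal coefficients $2^{-r/2}$, where $r \ge 1$ is the cut rank of the adjacency submatrix (connectedness forces at least one edge to cross the cut, so $r \ge 1$). A direct calculation from this form shows that $\ketbra{G}^{T_M}$ has spectrum contained in $[-2^{-r}, 2^{-r}] \subseteq [-\tfrac{1}{2}, \tfrac{1}{2}]$, so that $0 \le Q_M \le \eins$ holds. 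Combined with the upper bound, this gives $N(\ketbra{G}) = 1/2$.

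The main obstacle is the spectral bound on $\ketbra{G}^{T_M}$. Although the flat Schmidt spectrum of graph states is well known, one should verify carefully that every non-trivial cut of a connected graph yields $r \ge 1$ and that the corresponding partial transpose has only the eigenvalues $\pm 2^{-r}$ (plus zeros). This is cleanest to do by transforming to the Schmidt basis and recognizing that the partial transpose of a maximally entangled state of Schmidt rank $2^r$ acts as $2^{-r}$ times a swap, whose eigenvalues are $\pm 1$. All other steps reduce to routine trace manipulations and the standard negativity bound for a two-dimensional factor.
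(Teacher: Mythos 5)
Your argument is correct, and the upper bound is essentially the paper's: restrict to a single-qubit bipartition, reduce the optimization over $\mathcal{W}$ to the bipartite negativity across that cut (the paper invokes its Lemma~\ref{lem:equalsnegativity}, you rederive the needed inequality inline), and bound that negativity by $1/2$ for a $2\otimes 2^{n-1}$ split. Where you genuinely diverge is the lower bound, i.e.\ the verification that $W_{\rm proj}=\tfrac12\eins-\ketbra{G}\in\mathcal{W}$. The paper splits this into two soft estimates: positivity $W_{\rm proj}^{T_M}\ge 0$ is imported from the projector witness being a fully PPT witness (ultimately the largest Schmidt coefficient being at most $1/\sqrt2$, proved via LOCC distillation of a Bell pair across any cut), and the upper operator bound $W_{\rm proj}^{T_M}\le\eins$ follows from $\bra{\vec{k}}(\ketbra{G})^{T_M}\ket{\vec{k}}\ge -\lambda_i\lambda_j\ge -\lambda_i\sqrt{1-\lambda_i^2}\ge -1/2$, which uses only that a connected graph state has at least two nonzero Schmidt coefficients across every cut, plus graph-diagonality of the partially transposed projector. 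You instead use the exact stabilizer structure: the flat Schmidt spectrum with $2^r$ coefficients $2^{-r/2}$ ($r\ge1$ by connectedness), the swap representation of the partial transpose of a maximally entangled state, and local-unitary invariance of the PT spectrum, obtaining the full spectrum $\{0,\pm2^{-r}\}$ of $(\ketbra{G})^{T_M}$ and hence both operator inequalities in one stroke. Your route is sharper and more self-contained at this point (it does not presuppose the fully-PPT property of the projector witness), but it leans on a graph-state-specific fact; the paper's estimates are more elementary and would apply verbatim to any pure state that is entangled across every bipartition, which is why the paper can phrase the positivity part as a corollary of results it already has. Both are valid proofs of the lemma.
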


Therefore, connected graph states are maximally entangled states for this monotone. We note that, if the system does not only consist of qubits, but also of higher-dimensional particles, Eq.~(\ref{eq:84}) must be replaced by

\be
N(\vr) \leq \frac{1}{2} (d_{\rm min} -1)\:,
\ee
where $d_{\rm min} $ is the lowest dimension of any particle in the system.

The proofs of this section are given in Secs.~\ref{sec:proofentmeas} and \ref{sec:proofentmeas2} of the Appendix.

\section{Conclusion}

In this paper we presented general construction methods for graph state witnesses in the 
framework of PPT mixtures \cite{ourpaper}. These methods can be applied to a 
large class of graph states, resulting in witnesses that are significantly better 
than previously known witnesses. In many 
cases, the white noise tolerances approach one for an increasing particle number. This means 
that for many qubits, the state fidelity can decrease exponentially, but still entanglement is
present and can be detected. Moreover, the improvement of the witnesses comes with very low 
experimental costs, as the additional terms which are not part of the standard projector witness 
can be measured with one local setting.

For these reasons, we believe that the presented entanglement witnesses will prove to be 
useful in experiments, also for future experiments involving larger qubit numbers. 
Furthermore, the applied methods can serve as starting points for the construction 
of even better entanglement criteria.

For future work, there are several open questions. First, as we have seen, the approach
of Ref.~\cite{ourpaper} results in strong separability conditions for noisy graph states.
It would be interesting to find out whether these conditions are already necessary sufficient 
for entanglement, or whether they can still be improved.

Second, there are many other interesting families of multi-qubit states besides graph states, 
e.g., Dicke states or singlet states. It would be desirable to similarly develop witnesses 
for these families of states using the framework developed here.

We thank M.~Kleinmann, S.~Niekamp, M.~Hofmann and G.~T\'oth for discussions and acknowledge support 
by the FWF (START Prize and SFB FOQUS).

\appendix

\section{Proofs}
\subsection{Linear program for graph-diagonal states (Lemma~\ref{lem:lp})}
\label{sec:prooflp}
\begin{proof}
Let us define a simplifying notation for this proof: For any operator $O$ we define its graph-diagonal form as $\overline{O}= \sum_{\vec{k}} \ketbra{\vec{k}} O \ketbra{\vec{k}} $. Note that any state $\varrho$ can be transformed into its graph-diagonal form $\overline{\varrho}$ by local operations. Now suppose that the operator $W$ is the fully decomposable entanglement witnesses that minimizes the expectation value for the graph diagonal state $\varrho_G$ according to the original problem of Eq.~(\ref{eq:sdp1}). Then its graph-diagonal operator $\overline{W}$ has the same expectation value $\trace(W\rho_G)=\trace(\overline{W} \rho_G)$ as the original witness. Given any valid decomposition $W=P+Q^{T_M}$ for a particular chosen bipartition $M$, the operator $\overline{W} = \overline{P} + \overline{Q^{T_M}}$ can be expressed in its corresponding graph-diagonal operators $\overline{P}$ and $\overline{Q^{T_M}}$ due to linearity, but note that $\overline{Q^{T_M}}$ stands for the graph-diagonal form of the partially transposed operator. 

However, it is straightforward to see that this operator is actually identical to the partial transpose of the graph-diagonal operator, \ie, $\overline{Q^{T_M}} = \overline{Q}^{T_M}$, as follows: The mapping of $Q \mapsto \overline{Q}$ is achieved by expanding $O$ in the Pauli basis, $Q = \sum_{\vec{x} \in \lbrace 0, 1,2,3 \rbrace^n }\alpha_{\vec{x}} \operatorname*{\otimes}_{i=1}^{n} \sigma_{x_i}$, and then setting to zero all coefficients $\alpha_{\vec{x}}$ of Pauli matrix products which are no stabilizers of the given graph state. Note that $\sigma_{1}$, $\sigma_{2}$, $\sigma_{3}$ denote the Pauli matrices and $\sigma_{0}$ is the identity. In this picture, the partial transposition only corresponds to flipping the sign of coefficients $\alpha_{\vec{x}}$ of Pauli matrix products which change under partial transposition. These are the Pauli matrix products in which there is an odd number of $\sigma_{2}$s, i.e. of $Y$s, in the set $M$, since $Y^T = -Y$ and all other Pauli matrices are invariant under transposition.

Then, it is clear that the partial transposition and the mapping $Q \mapsto \overline{Q}$ commute. Thus the witness decomposition simplifies to $\overline{W} = \overline{P} + \overline{Q}^{T_M}$. Since the operator $P\geq0$ is positive semidefinite, the overlap with any basis element $\bra{\vec{k}} P \ket{\vec{k}}\geq0$ is non-negative. However this is equivalent to $\overline{P} \geq 0$ because $\overline{P}$ is diagonal in exactly this basis. The same argument applies to the operator $Q$, which concludes the proof.   
\end{proof}

\subsection{Fully decomposable witnesses for arbitrary graph states (Lemma~\ref{lem:witarbgraphstates})}
\label{sec:proofarbgraph}

\begin{proof}
Consider an arbitrary, connected graph $G = (V,E)$ consisting of a set $V$ of vertices/qubits and a set $E$ of edges that connect some of these vertices. 

In the following, $\NNN(i) = \NN(i) \cup \lbrace i \rbrace$ denotes the union of qubit $i$ and its neighborhood. Moreover, all states in the following are given in the graph state basis of the corresponding graph.

Let us first cite four lemmata to prepare the main proof. For the proofs of the first three of these lemmata, we refer to Ref.~\cite{ourpaper}. The proof of the fourth one will be given here.

The first of these lemmata shows which kind of partial transposition one can apply to one of two orthogonal vectors without affecting their orthogonality. The second one can be used to estimate the eigenvalues of a partially transposed state. More precisely, it provides an upper bound on these eigenvalues in terms of the state's Schmidt coefficients. The third lemma demonstrates that certain expressions are invariant under partial transpositions on a single qubit. Finally, the fourth lemma helps to estimate the largest Schmidt coefficient of a graph state. In order to prove it, we will count the Bell pairs that can be distilled from it using local operations and classical communication (LOCC).

We will then apply these lemmata to prove that the operator $W_{\rm G}$ of Eq.~(\ref{eq:decwitarbgraphstates}) is a fully decomposable witness.

\begin{lemma} \textnormal{\cite[Appendix E, Lemma 1]{ourpaper}}
Given a graph $G = (V,E)$ of $n$ qubits and an arbitrary bipartition $M\vert \comp{M}$ of these qubits. Let $\ket{\vec{a}}$ and $\ket{\vec{c}}$ be two arbitrary states in the associated graph state basis. If there is a qubit $i$ with $\NNN(i) \subseteq M$ or $\NNN(i) \subseteq \comp{M}$, such that $c_i \neq a_i$, then
\label{lem:rule1}
\be
\bra{\vec{c}} \left( \ketbra{\vec{a}}\right)^{T_M} \ket{\vec{c}} = 0 \: .
\ee
\end{lemma}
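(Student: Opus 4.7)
The plan is to exploit the stabilizer eigenvalue equation $g_i\ket{\vec a}=(-1)^{a_i}\ket{\vec a}$, which immediately gives the operator identity $\ketbra{\vec a} = (-1)^{a_i}\,\ketbra{\vec a}\,g_i$. The key structural observation is that if $\NNN(i)\subseteq M$, then $g_i = X_i\prod_{k\in\NN(i)}Z_k$ acts as the identity on $\comp{M}$, and its nontrivial factors on $M$ are only $X$ and $Z$ Pauli matrices. Since both $X$ and $Z$ are real symmetric, they are invariant under (partial) transposition, so $g_i^{T_M} = g_i$. This is precisely the feature of graph-state generators that the whole mechanism rests on.

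First I would apply the partial transposition $T_M$ to the identity above. Using $(AB)^{T_M} = B^{T_M}A^{T_M}$ together with $g_i^{T_M}=g_i$, the factor $g_i$ gets moved to the right of the transposed projector, yielding
\begin{equation}
(\ketbra{\vec a})^{T_M} = (-1)^{a_i}\,(\ketbra{\vec a})^{T_M}\,g_i.
\end{equation}
Next I would sandwich this between $\bra{\vec c}$ on the left and $\ket{\vec c}$ on the right. Since $g_i\ket{\vec c}=(-1)^{c_i}\ket{\vec c}$, the right-hand side produces a prefactor $(-1)^{a_i+c_i}$, so
\begin{equation}
\bra{\vec c}(\ketbra{\vec a})^{T_M}\ket{\vec c}=(-1)^{a_i+c_i}\,\bra{\vec c}(\ketbra{\vec a})^{T_M}\ket{\vec c}.
\end{equation}
The hypothesis $a_i\neq c_i$ makes the prefactor equal to $-1$, which forces the matrix element to vanish.

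The symmetric case $\NNN(i)\subseteq\comp{M}$ is handled by an identical argument: now $g_i$ acts trivially on $M$, so again $g_i^{T_M}=g_i$ (this time because the transposition touches only the identity factors of $g_i$), and one can equivalently work from the identity $\ketbra{\vec a}=(-1)^{a_i}g_i\,\ketbra{\vec a}$ and pass the generator through the transpose from the left. The only delicate points I expect are bookkeeping ones: keeping track of the order reversal under $(AB)^{T_M}$, and verifying carefully that $g_i$ is transposition-invariant on $M$ — this hinges crucially on the absence of $Y$ (which would satisfy $Y^T=-Y$) in the generators. Both points are straightforward once the hypothesis $\NNN(i)\subseteq M$ is used to confine $g_i$ entirely to one side of the bipartition.
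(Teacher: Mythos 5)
Your proof is correct: since the hypothesis confines $g_i$ entirely to one side of the bipartition and $g_i$ contains no $Y$, it indeed commutes with the partial transposition, and sandwiching with $\ket{\vec{c}}$ produces the factor $(-1)^{a_i+c_i}=-1$, forcing the matrix element to vanish — this is essentially the stabilizer argument underlying the lemma, whose proof the paper defers to Ref.~\cite{ourpaper}. The only slip is a bookkeeping one: starting from $\ketbra{\vec{a}}=(-1)^{a_i}\ketbra{\vec{a}}\,g_i$ with $g_i$ supported on $M$, the order reversal puts $g_i$ to the \emph{left} of $(\ketbra{\vec{a}})^{T_M}$, not the right; your displayed identity with $g_i$ on the right is nevertheless also true (it follows from $\ketbra{\vec{a}}=(-1)^{a_i}g_i\,\ketbra{\vec{a}}$), and either placement gives the same conclusion, so the argument stands.
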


The states in the following lemma are generic states and no graph state basis vectors.
\begin{lemma} \textnormal{\cite[Appendix E, Lemma 2]{ourpaper}}
\label{lem:rule2}
Given a state $\ket{\psi}$ and its Schmidt decomposition $\ket{\psi} = \sum_{i=1}^{d_1} \lambda_i \ket{\mu_i} \otimes \ket{\nu_i}$ with respect to some bipartition $M|\comp{M}$, where $\lambda_i \geq 0$, $d_1 =  {\rm dim}(M)$, $d_2 = {\rm dim}(\comp{M})$ and w.l.o.g. $d_1 \leq d_2$. Then, for any state $\ket{\phi}$,
\be
\bra{\phi} \left( \ketbra{\psi} \right)^{T_M} \ket{\phi} \leq \max_{i} \lambda_i^2 \: .
\ee
\end{lemma}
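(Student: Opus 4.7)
The plan is to compute the spectrum of $(\ketbra{\psi})^{T_M}$ explicitly in the Schmidt basis and then invoke the variational characterisation of the largest eigenvalue of a Hermitian operator, namely that $\bra{\phi} X \ket{\phi} \leq \lambda_{\max}(X)$ for every normalised $\ket{\phi}$.

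First, I would insert the Schmidt decomposition into $\ketbra{\psi}$ and take the partial transpose on $M$, which yields
\begin{equation}
(\ketbra{\psi})^{T_M} = \sum_{i,j=1}^{d_1} \lambda_i \lambda_j \, \ket{\mu_j}\bra{\mu_i} \otimes \ket{\nu_i}\bra{\nu_j}.
\end{equation}
The key structural observation is that this operator leaves invariant each subspace $V_{kl} = \operatorname{span}\{\ket{\mu_k}\otimes\ket{\nu_l},\,\ket{\mu_l}\otimes\ket{\nu_k}\}$ associated with an unordered pair $\{k,l\}\subseteq\{1,\dots,d_1\}$, as well as the ``extra'' subspace spanned by $\ket{\mu_i}\otimes\ket{\tilde{\nu}}$ where $\ket{\tilde{\nu}}$ is any vector orthogonal to $\operatorname{span}\{\ket{\nu_i}\}_{i=1}^{d_1}$ (these only appear when $d_1<d_2$).

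Next, I would compute the action on product vectors: using $\langle\mu_i|\mu_k\rangle=\delta_{ik}$ and $\langle\nu_j|\nu_l\rangle=\delta_{jl}$, one obtains $(\ketbra{\psi})^{T_M}\,\ket{\mu_k}\otimes\ket{\nu_l} = \lambda_k\lambda_l\,\ket{\mu_l}\otimes\ket{\nu_k}$. The $k=l$ cases give honest eigenvectors with eigenvalue $\lambda_k^2\geq 0$. For $k\neq l$ the operator acts on $V_{kl}$ as $\lambda_k\lambda_l$ times the flip, whose eigenvalues are $\pm\lambda_k\lambda_l$. Vectors in the ``extra'' subspace are annihilated. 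Collecting everything, the full spectrum of $(\ketbra{\psi})^{T_M}$ consists of $\{\lambda_k^2\}_{k=1}^{d_1}$, the values $\{\pm\lambda_k\lambda_l\}_{k<l}$, and zeros.

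Finally, since $\lambda_k\lambda_l\leq\tfrac{1}{2}(\lambda_k^2+\lambda_l^2)\leq\max_i\lambda_i^2$ by AM--GM, every eigenvalue satisfies $|\cdot|\leq\max_i\lambda_i^2$ and in fact the largest one is attained, equalling $\max_i\lambda_i^2$. Hermiticity of $(\ketbra{\psi})^{T_M}$ and the variational principle then give the claimed bound for every $\ket{\phi}$. There is no genuine obstacle here; the only point to handle with care is the case $d_1<d_2$, where one must explicitly verify that components of $\ket{\phi}$ in the orthogonal complement of $\operatorname{span}\{\ket{\nu_i}\}_{i=1}^{d_1}\subset\HH_{\comp{M}}$ contribute zero, which follows immediately from the Schmidt expansion above.
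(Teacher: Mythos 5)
Your proof is correct and is essentially the argument the paper relies on: the paper does not reprove this lemma but cites it from Ref.~\cite{ourpaper}, whose proof is precisely this spectral computation --- the partially transposed pure state has eigenvalues $\lambda_k^2$ and $\pm\lambda_k\lambda_l$, all bounded by $\max_i\lambda_i^2$ via AM--GM, and the variational principle for Hermitian operators finishes the job. The only cosmetic point is that the partial transposition (taken in the computational basis) sends $\ket{\mu_i}\bra{\mu_j}$ to $\ket{\overline{\mu_j}}\bra{\overline{\mu_i}}$ with complex-conjugated vectors rather than to $\ket{\mu_j}\bra{\mu_i}$, but since $\lbrace \ket{\overline{\mu_i}}\rbrace$ is still an orthonormal set the block structure, the spectrum, and hence your bound are unaffected.
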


Let us now return to the graph state basis and recall that the application of the Pauli operator $Z_k$ to a graph state basis vector results in a bit flip on bit $k$, \ie,
\be
\label{eq:flipprop}
Z_k \ket{\vec{a}} = \ket{a_1 \dots a_{k-1} \: a_k \oplus 1 \: a_{k+1} \dots a_n} \: .
\ee

\begin{lemma} \textnormal{\cite[Appendix E, Lemma 3]{ourpaper}}
\label{lem:rule3}
Given a graph $G$. Then, in the associated graph state basis,
\begin{align}
\label{eq:rule3}
\left( \ketbra{\vec{a}} + \ketbra{\vec{c}}\right)^{T_k} =  \ketbra{\vec{a}} + \ketbra{\vec{c}}\:,
\end{align}
i.e. $\ketbra{\vec{a}} + \ketbra{\vec{c}}$ is invariant under partial transposition on qubit $k$, if
\be
\ket{\vec{c}} = \prod_{i \in \NN(k)}Z_i \ket{\vec{a}} \: .
\ee
\end{lemma}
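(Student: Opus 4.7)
My plan is to expand both projectors in the stabilizer basis of $\ket{G}$, classify each stabilizer element by its Pauli factor on qubit $k$, and then show that the only pieces that are \emph{not} invariant under $T_k$ enter $\ketbra{\vec{a}}$ and $\ketbra{\vec{c}}$ with opposite signs and hence cancel.

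Concretely, Eq.~(\ref{eq:projectorform}) together with the mutual commutativity of the $g_i$ yields
\begin{equation}
\ketbra{\vec{a}}=\frac{1}{2^n}\sum_{S\subseteq\{1,\dots,n\}}(-1)^{\sum_{i\in S}a_i}\,s_S,\qquad s_S\equiv\prod_{i\in S}g_i ,
\end{equation}
and the analogous formula with $\vec{a}$ replaced by $\vec{c}$. Each $s_S$ is, up to a global $\pm 1$, a tensor product of Pauli operators. On qubit $k$ the contributing generators are precisely $g_k$ (an $X_k$, present iff $k\in S$) and every $g_i$ with $i\in\NN(k)$ (a $Z_k$, one per element of $S\cap\NN(k)$); so the factor at $k$ takes the form $X_k^{[k\in S]}Z_k^{|S\cap\NN(k)|}$. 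Since $X_k$, $Z_k$ and $\eins$ are symmetric while $Y^T=-Y$, we obtain $s_S^{T_k}=-s_S$ exactly when $k\in S$ and $|S\cap\NN(k)|$ is odd, and $s_S^{T_k}=s_S$ otherwise. I denote the first family of subsets by $\SSS_-$.

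Substituting into $\ketbra{\vec{a}}+\ketbra{\vec{c}}$ and subtracting its partial transpose, the complement of $\SSS_-$ drops out and only the $\SSS_-$ terms survive, giving
\begin{equation}
\bigl(\ketbra{\vec{a}}+\ketbra{\vec{c}}\bigr)^{T_k}-\bigl(\ketbra{\vec{a}}+\ketbra{\vec{c}}\bigr)=-\frac{2}{2^n}\sum_{S\in\SSS_-}\!\bigl[(-1)^{\sum_{i\in S}a_i}+(-1)^{\sum_{i\in S}c_i}\bigr]s_S .
\end{equation}
The hypothesis $\ket{\vec{c}}=\prod_{i\in\NN(k)}Z_i\ket{\vec{a}}$ combined with Eq.~(\ref{eq:flipprop}) gives $c_i=a_i\oplus[i\in\NN(k)]$, whence $\sum_{i\in S}c_i-\sum_{i\in S}a_i\equiv|S\cap\NN(k)|\pmod 2$. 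By the very definition of $\SSS_-$ this parity is odd, so the two signs in the bracket are opposite and the sum vanishes term by term, proving the claim.

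The only real obstacle I anticipate is bookkeeping: because $X_k$ and $Z_k$ anticommute, reducing the qubit-$k$ factor of $s_S$ to canonical Pauli form picks up a phase. I will have to argue that this phase is irrelevant, which it is because the classification into $\SSS_\pm$ depends only on the presence or absence of a $Y_k$ content of $s_S$, and the cancellation uses only the \emph{relative} sign between $s_S$ and $s_S^{T_k}$, never the absolute normalization of either.
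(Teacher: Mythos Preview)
Your argument is correct. The stabilizer expansion of the projectors, the classification of $s_S$ by the $Y$-content of its qubit-$k$ factor, and the parity relation $\sum_{i\in S}c_i\equiv\sum_{i\in S}a_i+|S\cap\NN(k)|\pmod 2$ together give exactly the cancellation you describe, and your remark about the anticommutation phase is accurate: only the \emph{relative} sign $s_S^{T_k}=\pm s_S$ matters, and that is fixed by whether the qubit-$k$ Pauli content is $Y$-like, regardless of the global $\pm i$ picked up when collapsing the product to canonical form.

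As for comparison: the present paper does not actually reproduce a proof of this lemma; it merely cites it from Ref.~\cite{ourpaper} (see the sentence before Lemma~\ref{lem:rule1}, ``For the proofs of the first three of these lemmata, we refer to Ref.~\cite{ourpaper}''). Your proof is therefore a self-contained substitute. It is also fully consistent with the paper's internal methodology---compare the discussion in Appendix~\ref{sec:prooflp}, where partial transposition is described as sign-flipping precisely those Pauli products carrying an odd number of $Y$'s on the transposed qubits. The natural proof in Ref.~\cite{ourpaper} proceeds along the same stabilizer-expansion lines, so your route is essentially the intended one.
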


\begin{lemma}
\label{lem:rule4}
Let $\ket{G}$ be a graph state that is defined by a bipartite graph $G = (V,E)$, i.e. the qubits can be grouped into two partitions $M$ and $\comp{M}$, such that no two qubits in the same partition are connected with each other. Let $\lambda_{i}$ be the Schmidt coefficients of $\ket{G}$ with respect to the bipartition $M | \comp{M}$. If there exists a subset $\BB = \lbrace \beta_i \rbrace$ of $m$ qubits which have at least one neighbor and are chosen in such a way that no two qubits in $\BB$ have a neighbor in common or are neighbors of each other, then 
\be
\label{eq:rule4}
\max_{i} \lambda_{i}^2 \leq 2^{-m}\: .
\ee
\end{lemma}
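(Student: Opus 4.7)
My plan is to follow the hint and exhibit an LOCC protocol that deterministically converts $\ket{G}$ into a state whose maximal Schmidt coefficient (squared) across $M | \comp{M}$ is at most $2^{-m}$; then Nielsen's majorization theorem propagates this bound back to $\ket{G}$. The protocol is: for each $\beta_i \in \BB$ pick a neighbor $\nu_i$ of $\beta_i$ (which exists by hypothesis and is unique to $\beta_i$ because the neighborhoods of the $\beta_i$'s are pairwise disjoint), and then perform $Z$-measurements on every qubit outside the set $V' = \BB \cup \{\nu_1, \dots, \nu_m\}$. By the standard rule for $Z$-measurements on graph states, the post-measurement state on $V'$ is (after correcting outcome-dependent local Pauli byproducts via classical communication) the graph state $\ket{G'}$ of the induced subgraph on $V'$. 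This is a deterministic LOCC map $\ket{G} \to \ket{G'}$ across the bipartition, so Nielsen's theorem gives $(\lambda^{G})^{\downarrow\, 2} \prec (\lambda^{G'})^{\downarrow\, 2}$, and in particular $\max_i (\lambda_i^G)^2 \leq \max_i (\lambda_i^{G'})^2$.

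The heart of the proof is showing $\max_i (\lambda_i^{G'})^2 \leq 2^{-m}$. I will invoke the standard stabilizer fact that for any graph state the Schmidt coefficients across any bipartition are all equal to $2^{-r/2}$, where $r$ is the $\mathbb{F}_2$-rank of the cut adjacency matrix $\Gamma_{M \cap V',\, \comp{M} \cap V'}$; so it suffices to show this cut has rank at least $m$. I will exhibit an $m \times m$ submatrix with nonzero determinant, with rows and columns indexed by $i \in \{1,\dots,m\}$: row $i$ is the member of $\{\beta_i,\nu_i\}$ that lies in $M$ (well-defined because the graph is bipartite, so $\beta_i$ and $\nu_i$ lie on opposite sides), and column $i$ is the member lying in $\comp{M}$. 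The diagonal entries are $1$ since $\beta_i \sim \nu_i$. For off-diagonal entries $i \neq j$, a short case split on whether $\beta_i,\beta_j$ are in $M$ or $\comp{M}$ shows: entries corresponding to pairs $(\beta_i,\beta_j)$ vanish because $\BB$ contains no neighbors, and entries corresponding to pairs $(\beta_i,\nu_j)$ vanish because $\nu_j$ would then be a common neighbor of $\beta_i,\beta_j$, violating the hypothesis. Only the entries of type $(\nu_i,\nu_j)$ with $\beta_i \in \comp{M}$, $\beta_j \in M$ can be nonzero. Reordering indices so that all $i$ with $\beta_i \in M$ come first puts the submatrix in block lower-triangular form $\begin{pmatrix} I & 0 \\ X & I \end{pmatrix}$, whose determinant is $1$ over $\mathbb{F}_2$.

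The main obstacle I anticipate is the off-diagonal entries of the last type: namely, the $\nu_i \sim \nu_j$ edges when $\beta_i$ and $\beta_j$ lie on opposite sides of $M | \comp{M}$. These need not vanish, so a priori the submatrix is not diagonal. The resolution is the observation above that these entries live only in one off-diagonal block, which preserves invertibility; equivalently, after deleting outside qubits one still has $m$ independent matching edges crossing the cut, and the extra cross-cut edges among the $\nu_i$'s only add entanglement rather than destroying it. Two smaller items to verify carefully are: (i) the Schmidt-uniformity fact for graph states, which I will cite from the stabilizer formalism by writing $\rho_{M \cap V'} = 2^{-|M \cap V'|} \sum_{S \in \SSS_{M \cap V'}} S$ and recognizing the sum as a scaled projector of rank $2^{|M \cap V'|-k}$ with $k = |M \cap V'|-r$; and (ii) the deterministic nature of the LOCC step, which is standard since $Z$-measurement outcomes on graph states are equivalent up to Pauli corrections that can be undone by classical communication.
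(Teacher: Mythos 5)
Your proposal is correct, and up to the final step it mirrors the paper's own argument: both pick one neighbor $\nu_i$ for each $\beta_i\in\BB$ (a well-defined matching because the neighborhoods are pairwise disjoint), remove all other qubits by $Z$-measurements, and use that the largest Schmidt coefficient cannot decrease under the resulting deterministic LOCC map (Nielsen). The two proofs part ways in how they bound the entanglement of the residual $2m$-qubit graph state across $M|\comp{M}$: the paper eliminates the leftover cross edges among the partner qubits $\nu_i$ by explicit graph surgery --- a controlled-Z within one side of the cut followed by local complementation on a $\beta_i$ --- so as to end up with $m$ disconnected Bell pairs, whereas you compute the Schmidt spectrum exactly, invoking the standard stabilizer fact that it is flat with all squared coefficients equal to $2^{-r}$, where $r$ is the GF(2) rank of the cut adjacency matrix, and certifying $r\geq m$ with your block-triangular submatrix: the $(\beta_i,\beta_j)$ and $(\beta_i,\nu_j)$ entries vanish by the no-adjacency and no-common-neighbor hypotheses, the surviving $(\nu_i,\nu_j)$ entries occupy a single off-diagonal block, and the determinant is $1$. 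Your case analysis is sound (the $2m$ vertices are pairwise distinct, the diagonal entries are $1$, and ordering the indices by which side $\beta_i$ lies on gives the claimed triangular form). The trade-off: the paper's route needs only elementary graph-state transformation rules and LOCC monotonicity, while yours needs the cut-rank formula but avoids the local-complementation bookkeeping and yields the exact (flat) spectrum. In fact, your rank argument makes the measurement/Nielsen preamble redundant: the same $m\times m$ submatrix, with rows $\lbrace\beta_i,\nu_i\rbrace\cap M$ and columns $\lbrace\beta_i,\nu_i\rbrace\cap\comp{M}$, already sits inside the full cut adjacency matrix of $G$ itself, so $\max_i\lambda_i^2=2^{-r}\leq 2^{-m}$ follows directly without any measurements.
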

\begin{proof}
Note that any graph can be made bipartite with respect to a fixed bipartition $M\vert \comp{M}$ using operations which are local with respect to $M\vert \comp{M}$. These operations are controlled-Z between two qubits $i,j$ of the same partition and they correspond to a deletion of the edge between qubits $i$ and $j$ \cite{hein}.

In order to prove that the square of the largest Schmidt coefficient of $\ket{G}$ is smaller than (or equal to) $2^{-m}$, it is sufficient to show that $\ket{G}$ can be converted into at least $m$ Bell pairs via local operations and classical communication. Since the largest Schmidt coefficient does not decrease under LOCC \cite{nielsen} and a Bell pair has Schmidt coefficients $\lbrace 1/\sqrt{2}, 1/\sqrt{2} \rbrace$, this implies the given bound.

In the first step, we choose a set of edges $F = \lbrace (\beta_i, w_i) \rbrace \subseteq E$ by selecting, for every qubit $\beta_i$ in $\BB$, a neighboring $w_i$. The edge $(\beta_i, w_i)$ between them then belongs to $F$. Since no qubit $w_i$ can be a neighbor of two different qubits in $\BB$ according to the assumptions, every qubit in the graph is endpoint of at most one of the edges in $F$. A set with this property is also called a {\it matching}. For our proof, each edge in the matching $F$ marks two qubits between which we will create a Bell pair which is disconnected from the rest of the graph.

As a second step, we measure every qubit, which is not an end point of an edge in $F$, in the $Z$-basis. In terms of the graph, this deletes all edges that are incident on a measured qubits. Fig.~\ref{fig:belltrafo1} shows an example of a graph that emerges from these measurements. There are two kinds of edges left: edges that are contained in the matching (shown as thick, red lines in Fig.~\ref{fig:belltrafo1}) and edges that connect a qubit $w_i$ to a qubit $w_j$ in the opposite partition, which are not in the matching (drawn thinner and in black). Note that, after the measurements, the qubits $\beta_i$ are only connected through edges of the matching. Any other edge would either contradict the fact that the graph is bipartite with respect to $M \vert \comp{M}$ or the condition that qubits in $\BB$ have no neighbor in common. As seen in Fig.~\ref{fig:belltrafo1}, some qubits $\beta_i$ are in $M$, some are in $\comp{M}$. This distinction, however, is of no importance in this proof. Also, there might be other, isolated qubits. These are not shown in Fig.~\ref{fig:belltrafo1}, since they do not play any role in the proof.

\begin{figure}
\includegraphics[width=0.8\columnwidth]{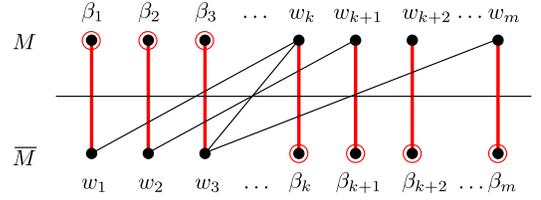}
\caption{\label{fig:belltrafo1} After measuring out all qubits that are not needed for the creation of Bell pairs, one obtains a graph as the one shown. Edges of the matching are indicated by red, thick lines, while other edges are shown in black and with thin lines.}
\end{figure}

Finally, we need to delete all edges that are not in the matching, i.e. the edges $(w_i,w_j)$. Consider an edge, say $(w_1, w_k)$ (cf. Fig.~\ref{fig:belltrafo1}). It can be deleted using the following steps: 

First, connect $\beta_1$ and $w_k$. Such a creation of an edge corresponds to an application of a local unitary to the graph state, namely a controlled-Z gate acting on the two qubits to be connected.

Second, apply a local complementation operation on qubit $\beta_1$. This operation corresponds to a local unitary and inverts the neighborhood graph of $\beta_1$. More precisely, all edges between neighbors of $\beta_1$ are deleted and all neighbors of $\beta_1$ which are not connected become connected \cite{hein}. Since $w_1$ and $w_k$ are the only neighbors of $\beta_1$, this means that the edge $(w_1, w_k)$ is deleted.

Finally, delete the edge $(\beta_1,w_k)$ again. The described steps now have to be repeated for all other edges that do not belong to the matching. After that, one ends up with $m$ pairs of connected qubits which are disconnected from the rest of the graph. These $m$ Bell pairs have a largest Schmidt coefficient of $\sqrt{2}^{-m}$ and the performed LU and LOCC operations cannot have decreased it \cite{nielsen}. Thus, the square of the largest Schmidt coefficient of $\ket{G}$ must be smaller than $2^{-m}$.
\end{proof}

Let us now start with the main part of the proof in which Lemmata \ref{lem:rule1} - \ref{lem:rule4} will be used.

For the sake of brevity, we define $P_{+} =  \sum_{\vec{s}}\: \prod_{i \in \BB} \g{i}{s_i}$ [cf. Eq.~(\ref{eq:decwitarbgraphstates})]. Note that $P_{+}$ is a sum of all projectors onto graph state basis vectors that contain at least two excitations in $\BB$, \ie, two bits $\beta_i$ that equal one. For example, in the case of a linear cluster state (cf. Fig.~\ref{fig:cln}), we can choose $\BB = \lbrace 1, 4, 7, \dots \rbrace$. Then,
\begin{align}
\label{eq:6}
P_{+} = \sum_{\vec{x} \in \lbrace 0,1 \rbrace^{n-b}} & \left( \ketbra{0 x_1 x_2 1 x_3 x_4 1 \dots} \right.\nonumber \\
+ \: &\ketbra{1 x_1 x_2 0 x_3 x_4 1 \dots} \nonumber \\
+ \: &\ketbra{1 x_1 x_2 1 x_3 x_4 0 \dots} \nonumber \\
+ \: &\ketbra{1 x_1 x_2 1 x_3 x_4 1 \dots} \nonumber \\
+ \: & \left. \dots \right).
\end{align}
Note that the following proof is an extension of the proof for linear cluster states in Ref.~\cite{ourpaper}.

{\it Main part of the proof of Lemma~\ref{lem:witarbgraphstates} --- } In order to prove that $W_G$ is a fully decomposable witness, we have to show that, for every strict subset $M$, there exists a positive operator $P_M$ such that 
\be
\label{eq:10}
Q_M = \left(W_G-P_M\right)^{T_M} \geq 0 \:.
\ee
We proceed in two steps. First, for a given $M$, we transform our problem for the graph state $\ket{G}$ into a problem for another graph state $\ket{G'}$ in which some edges have been deleted by local operations. Second, in the main part of the proof, we provide an algorithm for a given $M$ to construct a positive operator $P_M$ that obeys Eq.~(\ref{eq:10}).

{\it First step: Transformation of the graph state ---} The goal of the first step is to transform the graph $G$ to a graph $G'$ by deleting all edges that connect qubits in the same partition. A graph, in which the vertices can be divided into two subsets $M$ and $\comp{M}$ such that only vertices of different subsets are connected with each other, is called {\it bipartite}. As we will see later, this property will be useful, since it allows us to make use of Lemma \ref{lem:rule3}.

We start by noting that any operator $O$ that is diagonal in a graph basis can be written in the form 
\be
O= \sum_{\vec{x}} c_{\vec{x}} \prod_{i=1}^{n} g_i^{x_i} \:,
\ee
where the sum runs over the set of binary vectors $\vec{x} \in \lbrace 0,1 \rbrace^n$. Moreover, $c_{\vec{x}}$ are coefficients that depend on the operator $O$. Since any partial transposition can at most introduce minus signs in some terms of this sum, such operators remain diagonal under any partial transposition.

As both $W_G$ of Eq.~(\ref{eq:decwitarbgraphstates}) is graph-diagonal and we restrict ourselves to operators $P_M$ which are also graph-diagonal, it is enough to prove that 
\be
\label{eq:diagonal_positivity}
\bra{\vec{k}}\left(W_{G} - P_M\right)^{T_M}\ket{\vec{k}} \geq 0
\ee
holds for all $M$ and all graph state basis vectors $\ket{\vec{k}}$.

Now, we perform the graph transformation $G \mapsto G'$ by deleting all edges that connect qubits in the same partition. This corresponds to applying a controlled-Z operation $C_{j,l}$ to all such pairs of qubits $j,l$. Altogether, this results in a unitary $A = \prod_{(j,l)} C_{j,l}$ that acts on $M$, where the product runs over all edges $(j,l)$ that connect qubits in $M$, and an analogous unitary $B = \prod_{(j,l)} C_{j,l}$, where the product includes edges in $\comp{M}$ and which acts on $\comp{M}$.

Since the controlled-Z operation is real and diagonal, we have

\be
\label{eq:89}
A = A^{\ast} = A^{\dagger} = A^{T}
\ee
and analogously for $B$.

Together with the unitarity of $A$ and $B$, these equalities imply the equivalence 

\begin{align}
\label{eq:12}
& \bra{\vec{k}} \left(W_G-P_M\right)^{T_M} \ket{\vec{k}} \geq 0\\
\Leftrightarrow \: &_{G'}\bra{\vec{k}}  A \otimes B \left(W_G-P_M\right)^{T_M} A^{\dagger} \otimes B^{\dagger}\ket{\vec{k}}_{G'} \geq 0 \\
\Leftrightarrow \: &_{G'}\bra{\vec{k}}  A \otimes B^{\ast} \left(W_G-P_M\right)^{T_M} A^{\dagger} \otimes B^{T}\ket{\vec{k}}_{G'} \geq 0 \\
\Leftrightarrow \: &_{G'}\bra{\vec{k}} \left[ A \otimes B \left(W_G-P_M\right) A^{\dagger} \otimes B^{\dagger} \right]^{T_M} \ket{\vec{k}}_{G'} \geq 0\\
\label{eq:11}
\Leftrightarrow \: &_{G'}\bra{\vec{k}} \left(W_{G'}-P_M'\right)^{T_M} \ket{\vec{k}}_{G'} \geq 0
\end{align}
where $\ket{\vec{k}}_{G'} = A \otimes B \: \ket{\vec{k}}$ are the basis vectors that are associated to the transformed generators ${g_i}' = (A \otimes B) g_i (A^{\dagger} \otimes B^{\dagger})$. Also, the transformed witness is given by $W_{G'} = (A \otimes B) W_G (A^{\dagger} \otimes B^{\dagger}) =\frac{1}{2}\eins - \ketbra{G'} - \frac{1}{2}\sum_{\vec{k}^2 > 1}\: \prod_{i=1}^{\vert \BB \vert} \frac{\eins + (-1)^{k_{i}} g_{\beta_i}'}{2}$.

Thus, the transformed Eq.~(\ref{eq:11}) has the same form as Eq.~(\ref{eq:diagonal_positivity}). Keep in mind that one needs to prove Eq.~(\ref{eq:11}) for all subsets $M$ and all basis vectors $\ket{\vec{k}}_{G'}$. 

For better readability, we drop the subscript $G'$ of the graph basis vectors: $\ket{\vec{k}}_{G'} \mapsto \ket{\vec{k}}$. Every state in the remainder of this proof is to be understood in the graph basis of graph $G'$. 

Finally, we note that the most important thing to keep in mind from this step is that the graph $G'$ is bipartite with respect to the two sets $M$ and $\comp{M}$.

{\it Second step: Algorithm to construct $P_M'$ ---} Let us now provide an algorithm to construct $P_M'$ for any given $M$. Note that we order the qubits $\beta_i$ in a canonical way such that $\beta_i < \beta_{i+1}$. 

\begin{enumerate}
\item Start with $P^{(0)}_M = \ketbra{G'} = \ketbra{0 \dots 0}$.
\item Set $i=1$.
\item If $\beta_i$ has no neighbors (in graph $G'$), set $P^{(i)}_M = P^{(i-1)}_M$. If $\beta_i$ has neighbors, define $P^{(i)}_M$ as $P^{(i)}_M = P^{(i-1)}_M+ \left(\prod_{j \in \NN(\beta_i)}Z_j\right) P^{(i-1)}_M \left(\prod_{j \in \NN(\beta_i)}Z_j\right)$.
\item If $i \leq b$, increase $i$ by one and repeat step 3. Otherwise, proceed with step 5.
\item Let $r$ be the number of qubits in $\BB$ that have neighbors (in graph $G'$), \ie, the number of steps in which $P^{(i)}_M$ changed.\\
If $r \leq 1$, define
\be
\label{eq:P_def1}
P_M' = 0 \:.
\ee
Let $t$ be the value of $i$ for which $P^{(i)}_M$ was changed the last time, \ie, $P^{(i)}_M = P^{(t)}_M \: \forall \: i> t$. If $r > 1$, define 
\be
\label{eq:P_def}
P_M' = P^{(t-1)}_M -\ketbra{G'} \:.
\ee
\end{enumerate}

Note that the operator $P_M'$ constructed via the given algorithm is either zero or a sum of one-dimensional projectors onto basis states, \ie, 

\be
\label{eq:formofP}
P_M' = \sum_{\vec{a}} \ketbra{\vec{a}} \:.
\ee

This can be seen by the fact that $P_M^{(0)} = \ketbra{G'} = \ketbra{0 \dots 0}$, the application of $Z$ only flips a bit and finally $\ketbra{G'}$ is subtracted again. Let us illustrate the algorithm by a concrete example.

\textbf{Example of the algorithm:} Consider state No.~16 of Table \ref{tab:graphstates} and the bipartition given by $M = \lbrace 1,2,5,6\rbrace$. Then, the transformation in the first step of the proof deletes the edges $(1,2)$ and $(3,4)$, since $1,2 \in M$ and $3,4 \in \comp{M}$.

Let us choose set $\BB = \lbrace 1,5,6\rbrace$. Thus, the algorithm produces the following operators. From step 1, we have

\be
P_M^{(0)} = \ketbra{000000} \: .
\ee
As qubit 1 does not have any neighbors, since edge $(1,2)$ has been deleted, step 2 does not change the operator $P_M^{(0)}$ and therefore results in

\be
P_M^{(1)} = \ketbra{000000} \: .
\ee
Then, the loop in step 3 produces

\begin{align}
P_M^{(2)} =\: & \ketbra{000000} + \ketbra{000100}\:, \\
P_M^{(3)} =\: & \ketbra{000000} + \ketbra{000100} \nonumber \\
& +\ketbra{001000} + \ketbra{001100} \:.
\end{align}

$P_M^{(i)}$ was changed in two steps or, in other words, two qubits in graph $G'$ which are also in $\BB$, namely qubits 5 and 6, have a neighbor. Thus, $r=2$. Moreover, as $P_M^{(i)}$ was changed in the third step, we have $t=3$ and therefore

\begin{align}
P_M' = \: & P_M^{(2)} - \ketbra{000000} \nonumber \\
= \: & \ketbra{000100} \: .
\end{align}
Therefore, in this example, the sum in Eq.~(\ref{eq:formofP}) has only one term.

Let us now return to the general case and understand the properties of the operator $P_M'$ for an arbitrary $M$. The construction uses Lemma \ref{lem:rule3} to ensure that, in every step, either

\begin{subequations}
\label{eq:P_invariance1}
\be
\label{eq:P_invariance1a}
\left(P^{(i)}_M\right)^{T_{M_{i}}} = \left(P^{(i)}_M\right)^{T_{\NNN(\beta_i)}}
\ee
or
\be
\label{eq:P_invariance1b}
\left(P^{(i)}_M\right)^{T_{M_{i}}} = P^{(i)}_M
\ee
\end{subequations}
hold, where we defined $M_k = M \cap \NNN(\beta_k)$. Therefore, as we will se later, the qubits $\beta_i$ can be treated as if they had no neighbor in the opposite partition.

To see that Eqs.~(\ref{eq:P_invariance1}) hold, assume that $\beta_i \in M$. Since qubits that were neighbors of $\beta_i$ in graph $G$ and were also in $M$ are not connected to $\beta_i$ in graph $G'$ anymore, we know that $\NN(\beta_i) \subseteq \comp{M}$. Then, the given algorithm sets

\be
P_M^{(i)} = \sum_{\vec{c}} \left[ \ketbra{\vec{c}} +  \left(\prod_{j \in \NN(\beta_i)}Z_j\right) \ketbra{\vec{c}} \left(\prod_{j \in \NN(\beta_i)}Z_j\right)\right] \: .
\ee

This expression is invariant under the partial transposition $T_{\beta_i}$ due to Lemma~\ref{lem:rule3}. Therefore, Eq.~(\ref{eq:P_invariance1b}) holds. Similarly, in the case $\beta_i \in \comp{M}$, Eq.~(\ref{eq:P_invariance1a}) holds.

Eqs.~(\ref{eq:P_invariance1}) hold in every step, \ie, for $i = j$ and for $i=k$, where $j \neq k$. According to the premise of non-overlapping neighborhoods of the qubits in $\BB$, we have $\NNN(\beta_j) \cap \NNN(\beta_k) = \lbrace \rbrace$. Therefore, the partial transpositions in Eqs.~(\ref{eq:P_invariance1}) for $i=j$ always affect qubits different from the ones that are affected by the partial transpositions for $i=k$. For this reason, Eqs.~(\ref{eq:P_invariance1}) for $P_M^{(t-1)}$ hold with respect to every value of $k$, except for $k=t$. More precisely,

\begin{subequations}
\label{eq:P_invariance}
\be
\label{eq:P_invariance2a}
\left(P_M^{(t-1)}\right)^{T_{M_{k}}} = \left(P_M^{(t-1)}\right)^{T_{\NNN(\beta_k)}} 
\ee
or
\be
\label{eq:P_invariance2b}
\left(P_M^{(t-1)}\right)^{T_{M_{k}}} =P_M^{(t-1)} 
\ee
\end{subequations}
is true for every $k \neq t$.
We will use this important property later.

Let us proceed with the proof. Since $P_M'$ is zero or has the form of Eq.~(\ref{eq:formofP}), we know that $P_M' \geq 0$. Thus, it remains to show that Eq.~(\ref{eq:11}) holds.

Note that the transformed operator $P_{+}' =  \left(A \otimes B \right) P_{+} \left(A \otimes B\right)$ is invariant under any partial transposition. This can be seen using Eq.~(\ref{eq:89}) and the fact that $P_{+}$ is invariant under any partial transposition. $P_{+} = \sum_{\vec{s}}\: \prod_{i \in \BB} \g{i}{s_i}$ is invariant, since it only contains generators of qubits that have no neighbor in common and are no neighbors of each other. Thus, the form of the generators as given in Eq.~(\ref{eq:generators}) implies that $P_{+}$ does not contain any $Y$ operators which are the only Pauli matrices that change under transposition.

Together with the explicit form of the witness given in Eq.~(\ref{eq:decwitarbgraphstates}), we can therefore rewrite Eq.~(\ref{eq:11}) as

\be
\label{eq:diagonal_positivity2}
\frac{1}{2} - \frac{1}{2}  \bra{\vec{k}} P_{+}' \ket{\vec{k}}- \bra{\vec{k}}\left(\ketbra{G'} + P_M'\right)^{T_M}\ket{\vec{k}} \geq 0\:.
\ee
In order to prove this, we distinguish two different cases:
\begin{enumerate}
\item $\boxed{\bra{\vec{k}} P_{+}' \ket{\vec{k}} \neq 0 \Leftrightarrow \bra{\vec{k}} P_{+}' \ket{\vec{k}} = 1}$\\
\vskip 0.1cm
Note that this equivalence is due to the form of $P_{+}'$ as shown in Eq.~(\ref{eq:6}). Also, this form implies that, in the vectors $\ket{\vec{k}}$ with non-zero overlap, there must be at least two qubits $i_0,j_0 \in \BB$, with $i_0 \neq j_0,$ such that $k_{i_0} = k_{j_0} = 1$.

In the case $P_M' = 0$, Eq.~(\ref{eq:diagonal_positivity2}) and $\bra{\vec{k}} P_{+}' \ket{\vec{k}} = 1$ are equivalent to

\be
\label{eq:0}
- \bra{\vec{k}}\left(\ketbra{G'}\right)^{T_M}\ket{\vec{k}}  \geq 0 \:.
\ee

To see that the left-hand side always vanishes for all $M$ and all $\ket{\vec{k}}$, one uses that $P_M' = 0$ is equivalent to $r \leq 1$, \ie, $\NNN(\beta_i) \subseteq M$ or $\NNN(\beta_i) \subseteq \comp{M}$ holds for all qubits $\beta_i \in \BB$ with at most one exception, namely $\beta_t$. With $k_{i_0} = k_{j_0} = 1$, Lemma 1 can be applied to see that the left-hand side of Eq.~(\ref{eq:0}) vanishes.

In the case $P_M' \neq 0$, Eq.~(\ref{eq:diagonal_positivity2}) can be simplified using $\bra{\vec{k}} P_{+}' \ket{\vec{k}} = 1$ to

\begin{align}
\label{eq:1}
& - \bra{\vec{k}}\left(\ketbra{G'} + P_M'\right)^{T_M}\ket{\vec{k}}  \geq 0\nonumber \\
\Leftrightarrow & \: - \bra{\vec{k}}\left(P^{(t-1)}_M\right)^{T_M}\ket{\vec{k}}\geq 0\:.
\end{align}

Here, the definition of $P_M'$, Eq.~(\ref{eq:P_def}), has been used.

Now, $P_M'$ and therefore $P_M^{(t-1)}$ consists of a sum of projectors onto graph basis states $\ket{\vec{a}}$ [see Eq.~(\ref{eq:formofP})]. Since the algorithm starts with $P_M^{(0)} = \ketbra{0 \dots 0}$ and never flips any bits on the qubits $\beta_i \in \BB$, these states $\ket{\vec{a}}$ obey $a_{\beta_i} = 0, \; \forall \: i = 1, \dots, b$. Also, depending on whether $i_0 = t$ or $j_0 = t$, Eqs.~(\ref{eq:P_invariance}) can be applied to whichever of these two qubits is different from $t$. Let us assume that $i_0 \neq t$. Then, one can use Eq.~(\ref{eq:P_invariance2a}) or (\ref{eq:P_invariance2b}) to replace $M$ by a slightly modified subset $M'$ with $\NNN(\beta_{i_0}) \subseteq M'$ or $\NNN(\beta_{i_0}) \subseteq \comp{M'}$, respectively. Thus, Lemma \ref{lem:rule1} applied to $i_0$ yields

\begin{align}
\label{eq:49}
\bra{\vec{k}}\left(P^{(t-1)}_M\right)^{T_M}\ket{\vec{k}} &= \bra{\vec{k}}\left(P^{(t-1)}_M\right)^{T_{M'}}\ket{\vec{k}} \nonumber \\
 &= 0
\end{align}

and therefore Eq.~(\ref{eq:1}) holds.

\item $\boxed{\bra{\vec{k}} P_{+}' \ket{\vec{k}} = 0}$\\
\vskip 0.1cm
To show that Eq.~(\ref{eq:diagonal_positivity2}) holds, we need to prove that 

\be
\label{eq:7}
\bra{\vec{k}}\left(\ketbra{G'} + P_M' \right)^{T_M}\ket{\vec{k}} \leq  \frac{1}{2} \: .
\ee

In the case $P_M' \neq 0$, $P_M'$ is given by Eq.~(\ref{eq:formofP}) and Eq.~(\ref{eq:7}) is equivalent to 

\be
\label{eq:8}
\bra{\vec{k}}\left(\ketbra{G'} + \sum_{\vec{a}} \ketbra{\vec{a}} \right)^{T_M}\ket{\vec{k}} \leq  \frac{1}{2} \: .
\ee

Note that $\ketbra{G'} + \sum_{\vec{a}} \ketbra{\vec{a}} = P_M^{(t-1)}$ consists of $2^{r-1}$ terms, as one starts with one term and doubles this number $(r-1)$ times to obtain $P_M^{(t-1)}$. Therefore, it is enough to prove the upper bounds

\be
\bra{\vec{k}}\left(\ketbra{G'}\right)^{T_M}\ket{\vec{k}} \leq  2^{-r}
\ee

and

\be
\bra{\vec{k}}\left(\ketbra{\vec{a}}\right)^{T_M}\ket{\vec{k}} \leq  2^{-r} \: \forall \: \ket{\vec{a}} \: .
\ee

We will show these bounds using Lemma \ref{lem:rule2}. However, since the vectors $\ket{\vec{a}}$ are basis vectors of the graph state basis of $\ket{G'}$, $\ket{\vec{a}}$ and $\ket{G'}$ are LU-equivalent. Therefore, they have the same Schmidt coefficients and Lemma \ref{lem:rule2} results in the same upper bounds. For this reason, it suffices to show only one of these upper bounds, namely

\be
\label{eq:9}
\bra{\vec{k}}\left(\ketbra{G'}\right)^{T_M}\ket{\vec{k}} \leq  2^{-r} \:.
\ee

In order to apply Lemma~\ref{lem:rule2}, we need the largest Schmidt coefficient of $\ket{G'}$. According to Lemma~\ref{lem:rule4}, the largest Schmidt coefficient is smaller than (or equal to) $2^{-r}$, since $r$ is the number of qubits in $\BB$ that have at least one neighbor. Note that the conditions of Lemma~\ref{lem:rule4} are met, since $G'$ is a bipartite graph. Thus, Eq.~(\ref{eq:9}) holds.

In the case $P_M' = 0$, we need to show that 

\be
\label{eq:3}
\bra{\vec{k}}\left(\ketbra{G'}\right)^{T_M}\ket{\vec{k}} \leq \frac{1}{2} \:.
\ee

Since $M \neq \lbrace 1, \dots, n\rbrace$, there is at least one Bell pair in $G$ that connects a qubit in $M$ and a qubit in $\comp{M}$. Since the transformation $G \rightarrow G'$ only deletes connections between qubits in the same partition, this pair is also connected in graph $G'$. Then, however, deleting all edges besides the one of this pair by measuring all other qubits leads to one Bell pair. One Bell pair with Schmidt coefficients $\lbrace \frac{1}{\sqrt{2}}, \frac{1}{\sqrt{2}} \rbrace$ is enough to show that Eq.~(\ref{eq:3}) holds (using Lemma~\ref{lem:rule2}).

This finishes the proof of Lemma~\ref{lem:witarbgraphstates}.
\end{enumerate}
\end{proof}

\subsection{White noise tolerance of fully decomposable witnesses (Corollary~\ref{cor:whitenoise})}
\label{sec:prooftol}

\begin{proof} The definition of the white noise tolerance $p_{\rm tol}$ for state $\ket{G}$ and witness $W$ implies that
\be
\label{eq:tolform}
p_{\rm tol} = \left[ 1- \frac{\trace(W)}{2^n \bra{G} W \ket{G}} \right]^{-1} \:.
\ee
Since $\bra{G} W \ket{G} = - 1/2$, it remains to calculate

\begin{align}
\trace(W) = & \:2^{n-1} - 1 - \frac{1}{2} 2^{n-\vert \BB \vert}\sum_{j=2} \binom{\vert \BB \vert}{j} \nonumber \\
=&\: 2^{n-1} - 1 - 2^{n-\vert \BB \vert -1} \left( 2^{\vert \BB \vert} - \vert \BB \vert - 1\right)
\end{align}
Together with Eq.~(\ref{eq:tolform}), this results in Eq.~(\ref{eq:generalnoise}).
\end{proof}

\subsection{Extended construction of fully decomposable witnesses (Lemma~\ref{lem:witcomb})}
\label{sec:proofextconstr}

Before we begin with the proof of Lemma~\ref{lem:witcomb}, let us state the following lemma, which we will need later in this proof and also in Sec.~\ref{sec:PPTproofextconstr}.

\begin{lemma}
\label{lem:ptinvariance}
Given a graph state $\ket{G}$ of $n$ qubits, the associated generators $g_i$ and the projectors $\g{i}{\pm} = \left(\eins \pm g_i\right)/2$. Let $\BB$ be any subset of all qubits in which no two qubits are neighbors of each other. Let $\BB_i$ for $ i = 1, \dots, m$, be some arbitrary subsets of $\BB$ and $P_i$ for $ i = 1, \dots, m$, some operators that can be written as

\be
\label{eq:69}
P_i = \sum_{\vec{s}} \alpha_{\vec{s}} \prod_{j \in \BB_i} \g{j}{s_j} \:,
\ee
where $\sum_{\vec{s}}$ sums over some subset of $\lbrace -1, +1 \rbrace ^{\vert \BB_i \vert }$, i.e. over vectors of length $\vert \BB_i \vert$ with elements $\pm 1$, and $\alpha_{\vec{s}}$ are some coefficients. Then, the operator

\be
\label{eq:70}
\max_{i=1,\dots, m} P_i = \sum_{\vec{k}} \ketbra{\vec{k}} \max_{i=1,\dots, m} \bra{\vec{k}} P_i \ket{\vec{k}}
\ee
is invariant under any partial transposition.
\end{lemma}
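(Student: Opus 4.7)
My plan is to show that $\max_{i} P_i$ can be rewritten as a linear combination of products of generators $g_j$ with $j \in \BB$, and then to observe that any such product is a tensor product of $X$'s, $Z$'s, and identities, which is manifestly invariant under any partial transposition.

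First I would unpack how each $P_i$ acts in the graph state basis. Since $\g{j}{\pm}$ is the projector onto the $\pm 1$ eigenspace of $g_j$, we have $\bra{\vec{k}}\prod_{j \in \BB_i}\g{j}{s_j}\ket{\vec{k}} = \prod_{j \in \BB_i}\delta_{(-1)^{k_j}, s_j}$, so $\bra{\vec{k}} P_i \ket{\vec{k}}$ depends on $\vec{k}$ only through the bits $(k_j)_{j \in \BB_i} \subseteq (k_j)_{j \in \BB}$. Consequently, the pointwise maximum $c_{\vec{k}} := \max_{i}\bra{\vec{k}} P_i \ket{\vec{k}}$ depends only on $(k_j)_{j \in \BB}$. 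Using that $\{\prod_{j \in \BB}\g{j}{s_j}\}_{\vec{s}\in\{\pm 1\}^{\vert \BB\vert}}$ is a resolution of the identity into mutually orthogonal projectors, one defines $\beta_{\vec{s}} := c_{\vec{k}(\vec{s})}$ for any $\vec{k}(\vec{s})$ compatible with $\vec{s}$ on $\BB$ and verifies
\begin{equation}
\max_{i} P_i \;=\; \sum_{\vec{s} \in \{\pm 1\}^{\vert \BB \vert}} \beta_{\vec{s}}\, \prod_{j\in\BB}\g{j}{s_j}.
\end{equation}
Expanding each $\g{j}{s_j} = (\eins + s_j g_j)/2$ shows that the right-hand side is a linear combination of operators of the form $\prod_{j\in S} g_j$ for subsets $S \subseteq \BB$.

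The key step is then to verify that each such product $\prod_{j\in S} g_j$, with $S$ a set of pairwise non-neighboring qubits, contains no factor of $Y$ on any qubit. Recall $g_j = X_j \prod_{k\in \NN(j)} Z_k$. In the product over $j\in S$, the factors $X_j$ sit on distinct qubits $j \in S$. A $Y$ could only appear if for some qubit $q$ both an $X$ and an odd number of $Z$'s land on $q$; the $X$ on $q$ requires $q \in S$, while a $Z$ on $q$ coming from another factor $g_{j'}$ with $j'\in S$ would require $q \in \NN(j')$, i.e.\ $j$ and $j'$ would be neighbors, contradicting the hypothesis on $\BB$. Hence each $\prod_{j\in S} g_j$ is a tensor product of $X$, $Z$ and $\eins$ operators, and therefore invariant under the transpose on any subset of qubits.

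Combining the two observations, $(\max_{i} P_i)^{T_M} = \sum_{\vec{s}} \beta_{\vec{s}} \prod_{j\in\BB}\g{j}{s_j} = \max_{i} P_i$ for every $M$, which is the claim. The only nontrivial part is the combinatorial check that the non-adjacency of $\BB$ rules out $Y$'s in the expansion; everything else is bookkeeping with graph-diagonal operators and the eigenvalue structure of the $\g{j}{\pm}$.
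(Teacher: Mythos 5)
Your proposal is correct and follows essentially the same route as the paper: you express $\max_i P_i$ as a linear combination of the projectors $\prod_{j\in\BB}\g{j}{s_j}$ by observing that the diagonal entries in the graph basis depend only on the bits on $\BB$, and then note that such products of non-neighboring generators contain no $Y$ operators and are hence invariant under any partial transposition. The only differences are cosmetic (you verify the $\BB$-dependence via the explicit Kronecker-delta form of the matrix elements rather than via commutation with $Z_j$ on qubits outside $\BB$, and you spell out the no-$Y$ combinatorics that the paper merely asserts), plus a harmless typo where ``$j$ and $j'$ would be neighbors'' should read ``$q$ and $j'$''.
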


\begin{proof}
We prove the invariance by showing that $\max_{i=1,\dots, m} P_i$ can be written as a linear combination of operators

\be
\label{eq:71}
T_{\vec{s}} = \prod_{j \in \BB} \g{j}{s_j} \:,
\ee
where $\vec{s} \in \lbrace -1,+1\rbrace^{\vert \BB \vert}$. $T_{\vec{s}}$ is graph-diagonal and

\be
\label{eq:72}
\bra{\vec{k}} T_{\vec{s}} \ket{\vec{k}} =
\begin{cases}
& 1, \: {\rm if} \: (-1)^{k_j} = s_j \:\mbox{for all} \: j \in \BB\\
& 0, \: {\rm otherwise}
\end{cases}
\: .
\ee
Now, we note that 

\be
\label{eq:73}
\bra{\vec{k}} P_i \ket{\vec{k}}  = \bra{\vec{l}} P_i \ket{\vec{l}}, \: {\rm if} \: k_j = l_j \: \mbox{for all} \: j \in \BB \: .
\ee
This follows from the fact that, if $\ket{\vec{k}}$ and $\ket{\vec{l}}$ have the same bit values on all qubits in $\BB$, it is possible to obtain $\ket{\vec{k}}$ from $\ket{\vec{l}}$ by applying operators $Z_j$ on qubits $j \notin \BB$. Since $P_i$ only has $Z$-operators (or $\eins$) on these qubits, it commutes with $Z_j$, $j \notin \BB$, and one has

\be
\bra{\vec{k}} P_i \ket{\vec{k}} = \bra{\vec{l}} ( \prod_{j} Z_j ) P_i ( \prod_{j} Z_j ) \ket{\vec{l}}  = \bra{\vec{l}} P_i \ket{\vec{l}} \: .
\ee

Equation~(\ref{eq:73}) implies that $\bra{\vec{k}} \max P_i \ket{\vec{k}}$ only depends on the bit values $k_j$ with $j \in \BB$. We can therefore set $\alpha_{\vec{s}} = \bra{\vec{k}} \max P_i \ket{\vec{k}}$, where $\vec{s} \in \lbrace -1,+1\rbrace^{\vert \BB \vert}$ and $s_j = (-1)^{k_j}$ for all $j \in \BB$. Then, we have

\be
\label{eq:75}
\max_{i=1,\dots, m} P_i = \sum_{\vec{s}} \alpha_{\vec{s}} T_{\vec{s}} \:.
\ee

The operators $T_{\vec{s}}$ are invariant under partial transposition, since $\BB$ only consists of qubits that are not neighbors of each other [cf. Eq.~(\ref{eq:71})]. Thus, $\max_{i=1,\dots, m} P_i$ is invariant under any partial transposition.
\end{proof}

Let us now come to the main part of the proof of Lemma~\ref{lem:witcomb}.

\begin{proof} 
We write the given fully decomposable witnesses $W_i$ in the form

\be
\label{eq:49b}
W_i = \frac{1}{2} \eins - \ketbra{G} - \frac{1}{2} P_{+}^{(i)} \:,
\ee
where

\be
\label{eq:50}
P_{+}^{(i)} = \sum_{\vec{s}}\: \prod_{j \in \BB_i} \g{j}{s_j}\:.
\ee
If we introduce the shorthand notation

\be
\label{eq:21}
\max_{i} P_{+}^{(i)} = \sum_{\vec{k} \in \lbrace 0,1 \rbrace^n} \ketbra{\vec{k}} \max_{i=1,\dots, m} \bra{\vec{k}} P_{+}^{(i)} \ket{\vec{k}}\:,
\ee
we can write the operator of Eq.~(\ref{eq:combwit}) as

\be
\label{eq:22}
W = \frac{1}{2} \eins - \ketbra{G} - \frac{1}{2} \max_{i} P_{+}^{(i)} \:.
\ee

We need to prove that this is indeed a fully decomposable witness. In order to do so, we proceed in two steps. First, we prove that there is a positive operator $P_M$ for every $M$ which is independent from $i$ such that $\left( W_i - P_M \right)^{T_M} \geq 0$ holds for all $i$. Second, we use the positive operators $P_M$ of the first step to prove that the operator of Eq.~(\ref{eq:22}) is a fully decomposable witness.

{\it First step ---} Let us show that, for a given $M$, there exists a positive operator $P_M$ independent from $i$ that obeys

\be
\label{eq:86}
\left( W_i - P_M\right)^{T_M} \geq 0 
\ee
for all $i$. In order to prove this, we apply the algorithm for the construction of such operators given in the proof of Lemma~\ref{lem:witarbgraphstates}. However, instead of applying it to any of the sets $\BB_i$ directly, we construct a new set $\AAA$ out of these sets $\BB_i$. Although the set $\AAA$ will contain at least as many qubits as the largest one of the sets $\BB_i$, in most cases even more qubits, it still obeys the condition that no two qubits in $\AAA$ have a neighbor in common. Therefore, we can then apply the algorithm to it.

First, we assume that for any qubit $\beta_j^{(i)}$ from any subset $\BB_i$, there is, in every other set $\BB_k$ a qubit $\beta_l^{(k)}$ that has the same neighborhood as $\beta_j^{(i)}$. In principle, according to condition (ii) of Lemma~\ref{lem:witcomb}, there can also be subsets $\BB_k$ in which no qubit has the same neighborhood as $\beta_j^{(i)}$. However, adding the qubit $\beta_j^{(i)}$ itself to such a subset $\BB_k$ causes the mentioned assumption to hold. After this addition, $\BB_k$ still fulfills condition (i), since there was no qubit in $\BB_k$ that had a neighbor in common with $\beta_j^{(i)}$ before the addition according to condition (ii).

Furthermore, for a more convenient notation, we relabel the qubits $\beta_j^{(i)}$ in such a way that the two qubits $\beta_j^{(i)} \in \BB_i$ and $\beta_j^{(k)} \in \BB_k$ with the same subscript $j$ also have the same neighborhood. According to the assumption in the last paragraph, there is exactly one qubit in $\BB_k$ that has the same neighborhood as $\beta_j^{(i)}$.

Before constructing $\AAA$ and applying the algorithm, we perform the aforementioned transformation $G \rightarrow G'$ (cf. Sec.~\ref{sec:proofarbgraph}) for the given partition $M$ by deleting all edges that connect qubits in the same partition. Note that two qubits that had the same neighborhood in $G$ do not need to have the same neighborhood in $G'$ anymore.

As argued after Eq.~(\ref{eq:diagonal_positivity}), this transformation changes Eq.~(\ref{eq:86}) into

\be
\label{eq:90}
\bra{\vec{k}}\left(W_{i}' - P_M'\right)^{T_M}\ket{\vec{k}} \geq 0 \:,
\ee
which has to be shown for all vectors $\ket{\vec{k}}$ of the basis given by the transformed generators $g_i' = (A \otimes B) g_i (A^{\dagger} \otimes B^{\dagger})$, for all $i$ and for all $M$. Here, $W_i' = (A \otimes B) W_i (A^{\dagger} \otimes B^{\dagger})$ and $P_M' = (A \otimes B) P_M (A^{\dagger} \otimes B^{\dagger})$, where $A = \prod_{(j,l)} C_{j,l}$ and $B = \prod_{(j,l)} C_{j,l}$ are the unitary operators that correspond to the deletion of the edges in $M$ and in $\comp{M}$, respectively. Therefore, $A$ acts on qubits in $M$ and $B$ on qubits in $\comp{M}$ (which is not obvious from our above notation). Also, we have used that $W_i'$ and $P_M'$ are diagonal in the basis given by the vectors $\ket{\vec{k}}$.

Then, we construct a set $\AAA = \lbrace \alpha_i \rbrace$ in the following way.

\begin{enumerate}
\item Start with the empty set $\AAA = \lbrace \rbrace$.
\item Let $j=1$.
\item If all qubits $\beta_j^{(i)}$, $i = 1, \dots, m$, from the $M$ subsets $\BB_i$ are in the same partition, then add $\beta_j^{(1)}$ to the set $\AAA$. Otherwise, there exists a qubit $\beta_j^{(x)}$ that is in the opposite partition as $\beta_j^{(1)}$. Then, add both $\beta_j^{(x)}$ and $\beta_j^{(1)}$ to $\AAA$.
\item Increase $j$ by one. If $j\leq \vert \BB_1 \vert$, repeat the last step. Otherwise, the construction is finished. Note that any other set $\BB_i$ contains the same number of qubits as $\BB_1$.
\end{enumerate}

Step 3 is the crucial one and we note the following points: If all qubits $\beta_j^{(i)}$, $i = 1, \dots, m$, are in the same partition, we add $\beta_{j}^{(1)}$ from $\BB_1$ to $\AAA$. In principle, in this case one can instead add the $j^{\rm th}$ qubit $\beta_j^{(i)}$ from any other set $\BB_i$ to $\AAA$, since all of them have the same neighborhood even after the transformation $G \rightarrow G'$, as they are all in the same partition.

In the other case, there are two qubits $\beta_j^{(1)}$ and $\beta_j^{(x)}$ in opposite partitions. Then, both of them are added to $\AAA$. However, since they are in opposite partitions and had the same neighborhood in graph $G$, they cannot have a neighbor in common after the transformation $G \rightarrow G'$. Such a neighbor in common would have to be in the opposite partition as $\beta_j^{(1)}$, in order to be its neighbor in graph $G'$, but at the same time in the opposite partition as $\beta_j^{(x)}$. This is impossible since $\beta_j^{(1)}$ and $\beta_j^{(x)}$ are in opposite partitions. 

Together with the fact that any two qubits of the set $\BB_1$ do not have a neighbor in common according to the conditions of Lemma~\ref{lem:witcomb}, this shows that no two qubits in $\AAA$ have a neighbor in common (in graph $G'$). Also, there cannot be two qubits in $\AAA$ which are neighbors of each other, since these must have also been neighbors in $G$, which contradicts the conditions of Lemma~\ref{lem:witcomb}.

Now, we use the algorithm presented after Eq.~(\ref{eq:11}) to construct $P_M'$, but we apply it the qubits $\alpha_i$ in the set $\AAA$ instead of the qubits in set $\BB$ as in the original algorithm. Again, $P_M'$ is a sum over projectors onto graph basis states [cf. Eq.~(\ref{eq:formofP})].

Since $P_{+}^{(i)}$ is invariant under partial transposition, Eq.~(\ref{eq:89}) implies that $P_{+}'^{(i)} = (A \otimes B) P_{+}^{(i)}(A^{\dagger} \otimes B^{\dagger})$ is also invariant. Using the explicit form of $W_i'$, we can thus rewrite Eq.~(\ref{eq:90}) as

\be
\label{eq:44}
\frac{1}{2} - \frac{1}{2} \bra{\vec{k}}P_{+}'^{(i)}\ket{\vec{k}}  - \bra{\vec{k}} (\ketbra{G'} + P_M')^{T_M} \ket{\vec{k}} \geq 0 \: .
\ee

For a given $i$ and $M$, we consider two cases for the vectors $\ket{\vec{k}}$. Then, the reasoning is analogous to the two cases in Sec.~\ref{sec:proofarbgraph}.

\begin{enumerate}
\item $\boxed{\bra{\vec{k}} P_{+}'^{(i)} \ket{\vec{k}} = 1}$\\
\vskip 0.1cm

Since $P_{+}'^{(i)}$ has the form of Eq.~(\ref{eq:50}), but with the transformed generators $g_i'$, and is therefore a sum of projectors as in Eq.~(\ref{eq:formofP}), in this case there are two qubits $j$, $l$ $\in \BB_i$ with $k_j = k_l = 1$. Moreover, Eq.~(\ref{eq:44}) reduces to

\be
\label{eq:45}
-\bra{\vec{k}} ( \ketbra{G'} + P_M')^{T_M} \ket{\vec{k}} \geq 0 \:.
\ee

Per construction, $\AAA$ contains qubits $j$ and $l$ or qubits that have the same neighborhood as qubits $j$ and $l$. Therefore, the algorithm constructs an operator $P_M'$ that contains only projectors $\ketbra{\vec{a}}$ that obey $a_j = a_l = 0$. This can be seen in step 3 of the algorithm for the construction of $P_M'$, in which only qubits in the neighborhood of qubits in $\AAA$ are flipped. Note that $\ket{G'} = \ket{0 \dots 0}$ and therefore also here, the $j^{\rm th}$ and the $l^{\rm th}$ bit equal zero. For this reason, Lemma~\ref{lem:rule1} implies that 

\be
\label{eq:46}
-\bra{\vec{k}} ( \ketbra{G'} + P_M')^{T_M} \ket{\vec{k}} = 0
\ee

and that Eqs.~(\ref{eq:45}) and (\ref{eq:44}) hold.

\item $\boxed{\bra{\vec{k}} P_{+}'^{(i)} \ket{\vec{k}} = 0}$\\
\vskip 0.1cm

If $P_M' \neq 0$, it has $2^{r-1}$ terms, where $r$ is the number of qubits $\alpha_i$ that have a neighbor in graph $G'$. Since no two qubits  in $\AAA$ have a neighbor in common, one can invoke Lemmata~\ref{lem:rule2} and \ref{lem:rule4} to show that Eq.~(\ref{eq:44}) holds for $\ket{\vec{k}}$ with $\bra{\vec{k}} P_{+}'^{(i)}\ket{\vec{k}} = 0$. 

In the case $P_M = 0$, there must be at least one pair of qubits in $G'$ which are connected with each other and in opposite partitions. These can be transformed into a Bell pair via LOCC, such that Eq.~(\ref{eq:44}) holds.

\end{enumerate}

Thus, Eq.~(\ref{eq:86}) holds for all $i$ and the constructed operators $P_M = \left( A^{\dagger} \otimes B^{\dagger} \right) P_M' \left( A \otimes B \right)$.

{\it Second step ---} In the second step, we can now use the positive operators $P_M$ constructed in the last step to show that the operator $W$ of Eq.~(\ref{eq:22}) is a fully decomposable witness. In order to do so, we show that, for every $M$, the positive semidefinite operator $P_M$ of the last step fulfills $\left(W - P_M\right)^{T_M} \geq 0$. Since $W$ and the operators $P_M$ are graph-diagonal, it is enough to show the positivity of $\bra{\vec{k}} \left(W - P_M\right)^{T_M} \ket{\vec{k}}$ for all $\ket{\vec{k}}$.

We define
\be
\label{eq:85}
R_i = \max_{j} P_{+}^{(j)} - P_{+}^{(i)} \:.
\ee
Note that $R_i$ is invariant under partial transposition, as $P_{+}^{(i)}$ does not contain generators of neighboring qubits, and is therefore invariant, and $\max_{j} P_{+}^{(j)}$ is invariant according to Lemma ~\ref{lem:ptinvariance}. Moreover, for a given $\ket{\vec{k}}$, let $i_0$ be the value of $i$ that maximizes $\bra{\vec{k}} P_{+}^{(i)} \ket{\vec{k}}$. Then, we have

\begin{align}
&\: \bra{\vec{k}} \left(W - P_M\right)^{T_M} \ket{\vec{k}} \nonumber \\
= & \: \bra{\vec{k}} \left(W_{i_0} - \frac{1}{2} R_{i_0} - P_M\right)^{T_M} \ket{\vec{k}} \nonumber \\
= & \: \bra{\vec{k}} \left(W_{i_0}- P_M\right)^{T_M} \ket{\vec{k}} - \frac{1}{2} \bra{\vec{k}} R_{i_0} \ket{\vec{k}} \nonumber \\
\geq & \: 0 \: .
\end{align}

In the first line, we have employed the definitions in Eqs.~(\ref{eq:49b}), (\ref{eq:22}) and (\ref{eq:85}). In the second line, we have used the invariance of $R_{i}$ under partial transposition. Finally, for the positivity, we used Eq.~(\ref{eq:86}) and

\begin{align}
\bra{\vec{k}} R_{i_0} \ket{\vec{k}} = & \: \bra{\vec{k}} \max_{j} P_{+}^{(j)} \ket{\vec{k}} - \bra{\vec{k}}P_{+}^{(i_0)} \ket{\vec{k}} \nonumber \\
= & \: \bra{\vec{k}} P_{+}^{(i_0)} \ket{\vec{k}} - \bra{\vec{k}}P_{+}^{(i_0)} \ket{\vec{k}} \nonumber \\
= & \: 0 \: .
\end{align}

Thus, the operator $W$ of Eq.~(\ref{eq:combwit}) is a fully decomposable witness.
\end{proof}

\subsection{Fully PPT witnesses for arbitrary graph states (Lemma~\ref{lem:PPTwitarbgraphstates})}
\label{sec:PPTproofarbgraph}

\begin{proof}
The proof that we present here is similar to the proof of Lemma~\ref{lem:witarbgraphstates} (cf. Sec.~\ref{sec:proofarbgraph}). In fact, it is much shorter, since we do not have to provide a construction algorithm for the positive operators $P_M$, as these equal zero for fully PPT witnesses.

Here, we have to prove that
\be
\label{eq:13}
W^{T_M}_G \geq 0
\ee
holds for every strict subset $M$ of the set of all qubits [cf. Eq.~(\ref{eq:10}) in the proof of Lemma~\ref{lem:witarbgraphstates}]. Since $P_{+} =  \sum_{\vec{s}}\: \prod_{i \in \BB} \g{i}{s_i}$ is invariant under partial transposition and $W^{T_M}_G$ is graph-diagonal, one can plug Eq.~(\ref{eq:PPTwitarbgraphstates}) into Eq.~(\ref{eq:13}) to obtain

\be
\label{eq:14}
\frac{1}{2} - \bra{\vec{k}}\left(\ketbra{G}\right)^{T_M}\ket{\vec{k}} - \left(\frac{1}{2} - \frac{1}{2^{m(\vec{k})}}\right)  \bra{\vec{k}} P_{+} \ket{\vec{k}} \geq 0
\ee
which has to hold for all $M$ and all graph state basis vectors $\ket{\vec{k}}$ [cf. Eq.~(\ref{eq:diagonal_positivity2})]. Here, $m(\vec{k})$ denotes the number of ones in the binary vector $\vec{k}$ that are on qubits contained in $\BB$. These correspond to $-1$s in a sign vector $\vec{s}$ (and zeros in $\vec{k}$ correspond to $+1$s in $\vec{s}$). In formulas, $m(\vec{k})=\left(b-\sum_{i \in \BB} (-1)^{k_i} \right)/2$.

As before [cf. Eq.~(\ref{eq:12})], we transform graph $G$ into the graph $G'$ by deleting all edges that connect qubits in the same partition.
As in the proof of Lemma~\ref{lem:witarbgraphstates}, we now distinguish two cases.

\begin{enumerate}
\item $\boxed{\bra{\vec{k}} P_{+}' \ket{\vec{k}} = 0}$\\
\vskip 0.1cm

In this case, Eq.~(\ref{eq:14}) can be rewritten as

\be
\label{eq:15}
\bra{\vec{k}}\left(\ketbra{G'}\right)^{T_M}\ket{\vec{k}} \leq \frac{1}{2} \:.
\ee

This equation holds, as we have already argued after Eq.~(\ref{eq:3}), since there must at least be two neighboring qubits in opposite partitions.

\item $\boxed{\bra{\vec{k}} P_{+}' \ket{\vec{k}} \neq 0 \Leftrightarrow \bra{\vec{k}} P_{+}' \ket{\vec{k}} = 1}$\\
\vskip 0.1cm

Here, we need to prove that

\be
\label{eq:16}
\bra{\vec{k}}\left(\ketbra{G'}\right)^{T_M}\ket{\vec{k}}\leq \frac{1}{2^{m(\vec{k})}} \:,
\ee

where $m(\vec{k})$ is the number of ones in $\vec{k}$ on qubits in $\BB$. If there is a qubit $i \in \BB$ with $k_i = 1$ with only neighbors that are in the same partition as qubit $i$, then Lemma~\ref{lem:rule1} applies and the left-hand side of Eq.~(\ref{eq:16}) vanishes.

In the case in which no qubit $i \in \BB$ with $k_i = 1$ has only neighbors in the same partition, Lemmata~\ref{lem:rule2} and \ref{lem:rule4} imply that

\be
\label{eq:61}
\bra{\vec{k}}\left(\ketbra{G'}\right)^{T_M}\ket{\vec{k}}\leq \frac{1}{2^{b}} \:,
\ee

where $b = \vert \BB \vert$. Since $m(\vec{k}) \leq b$, Eq.~(\ref{eq:16}) holds.
\end{enumerate}
\end{proof}

\subsection{Extended construction of fully PPT witnesses (Lemma~\ref{lem:PPTwitcomb})}
\label{sec:PPTproofextconstr}

\begin{proof} 
Here, we prove that the operator $W$ of Eq.~(\ref{eq:PPTcombwit}) is a fully PPT witness. To this end, we write the given fully PPT witnesses $W_i$ as

\be
\label{eq:64}
W_i = \frac{1}{2} \eins - \ketbra{G} - P_{+}^{(i)} \:, 
\ee
with the definition

\be
\label{eq:65}
P_{+}^{(i)} = \sum_{\vec{s}}\: \left( \frac{1}{2} - \frac{1}{2^{m(\vec{s})}} \right) \: \prod_{j \in \BB_i} \g{j}{s_j} \: .
\ee

Here, $m(\vec{s})$ is the number of elements $s_j = -1$ in $\vec{s}$, i.e., $m(\vec{s})=\left(\vert \BB_i \vert-\sum_{j=1}^{\vert \BB_i \vert} s_j \right)/2$. 

As we did before, we now introduce the shorthand notation

\be
\label{eq:66}
\max_{i} P_{+}^{(i)} = \sum_{\vec{k} } \ketbra{\vec{k}} \max_{i=1,\dots, m} \bra{\vec{k}} P_{+}^{(i)} \ket{\vec{k}}\:,
\ee
and can thus write the operator of Eq.~(\ref{eq:PPTcombwit}) as

\be
\label{eq:67}
W = \frac{1}{2} \eins - \ketbra{G} - \max_{i} P_{+}^{(i)} \:.
\ee

Now, we proceed similarly to the proof of Lemma~\ref{lem:witcomb} (Sec.~\ref{sec:proofextconstr}), but, since fully PPT witnesses have $P_M = 0$, we do not need to construct such operators here. Therefore, the proof in this section is much shorter.

Again, we define
\be
\label{eq:68}
R_i = \max_{j} P_{+}^{(j)} - P_{+}^{(i)} \:,
\ee
which is invariant under any partial transposition due to Lemma~\ref{lem:ptinvariance}. For a given $\ket{\vec{k}}$, let $i_0$ be the value of $i$ that maximizes $\bra{\vec{k}} P_{+}^{(i)} \ket{\vec{k}}$. Then, we have

\begin{align}
\bra{\vec{k}} W ^{T_M} \ket{\vec{k}} = & \: \bra{\vec{k}} \left(W_{i_0} -  R_{i_0}\right)^{T_M} \ket{\vec{k}} \nonumber \\
= & \: \bra{\vec{k}} W_{i_0}^{T_M} \ket{\vec{k}} - \bra{\vec{k}} R_{i_0} \ket{\vec{k}} \nonumber \\
\geq & \: 0 \: .
\end{align}

In the first line, we plugged in the definitions in Eqs.~(\ref{eq:64}), (\ref{eq:67}) and (\ref{eq:68}). In the next step, we used that $R_{i}$ is invariant under partial transposition. Finally, for the positivity, we used that the operators $W_i$ are fully PPT witnesses and that

\begin{align}
\bra{\vec{k}} R_{i_0} \ket{\vec{k}} = & \: \bra{\vec{k}} \max_{j} P_{+}^{(j)} \ket{\vec{k}} - \bra{\vec{k}}P_{+}^{(i_0)} \ket{\vec{k}} \nonumber \\
= & \: \bra{\vec{k}} P_{+}^{(i_0)} \ket{\vec{k}} - \bra{\vec{k}}P_{+}^{(i_0)} \ket{\vec{k}} \nonumber \\
= & \: 0 \: .
\end{align}

Thus, the operator $W$ of Eq.~(\ref{eq:PPTcombwit}) is a fully PPT witness.

\end{proof}

\subsection{Fully PPT witness for the 2D cluster state (Lemma~\ref{lem:PPTwitCl4x4})}
\label{sec:PPTwitCl4x4}

\begin{proof} In order to show that the operator of Eq.~(\ref{eq:PPTwitCl4x4}) is a fully PPT witness, we provide two lemmata first. The first one specifies some conditions, under which the overlap of the partially transposed 2D cluster state with another basis vector vanishes for certain bipartitions. The second lemma provides an upper bound for the largest Schmidt coefficient of the 2D cluster state for bipartitions in which no partition contains less than two qubits. Note that both lemmata hold for 2D cluster states of $n \times n$ qubits with $n > 2$ and it is therefore straightforward to see that the proof presented here also holds for more than 16 qubits.

\begin{lemma}
\label{lem:rule1b}
Given a 2D cluster $\ket{\rm Cl_{n \times n}}$ of $n^2$ qubits. Consider an arbitrary qubit $q$ of these. Let $\ket{\vec{a}}$ be a state of the corresponding graph state basis. If there is a qubit $i \neq q$ with $a_i = 1$ and there is a qubit $j \in \NN(q)$ with $a_j = 0$, then
\be
\label{eq:rule1b}
\bra{\vec{a}} \left( \ketbra{\rm Cl_{n \times n}}\right)^{T_q} \ket{\vec{a}} = 0 \: .
\ee
\end{lemma}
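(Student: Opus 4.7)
The plan is to expand $\ketbra{Cl_{n\times n}}$ in its stabilizer form, apply the partial transpose $T_q$ term by term, and then read off the diagonal matrix element on $\ket{\vec{a}}$ using the simultaneous-eigenvalue property of graph-basis vectors. I expect the calculation to collapse to a sum of just two characters, each of which forces a very restrictive support pattern on $\vec{a}$; the hypotheses of the lemma are tailored precisely to kill both.

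Concretely, I would first write $\ketbra{G} = 2^{-n^2}\sum_{\vec{x}\in\{0,1\}^{n^2}} S_{\vec{x}}$ with $S_{\vec{x}} = \prod_i g_i^{x_i}$ and analyse the Pauli content of each $S_{\vec{x}}$ on qubit $q$. Collecting factors, the operator on qubit $q$ is proportional to $X_q^{x_q} Z_q^{y_q(\vec{x})}$ with $y_q(\vec{x}) = \bigoplus_{k\in\NN(q)} x_k$. Since only $Y$ changes sign under transposition whereas $\eins$, $X$, $Z$ are invariant, one obtains $S_{\vec{x}}^{T_q} = \epsilon_q(\vec{x})\, S_{\vec{x}}$ with $\epsilon_q(\vec{x}) = (-1)^{x_q\, y_q(\vec{x})}$.

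Second, using $S_{\vec{x}}\ket{\vec{a}} = (-1)^{\vec{a}\cdot\vec{x}}\ket{\vec{a}}$, the overlap collapses to the Fourier-type sum $\bra{\vec{a}}(\ketbra{G})^{T_q}\ket{\vec{a}} = 2^{-n^2}\sum_{\vec{x}}\epsilon_q(\vec{x})\,(-1)^{\vec{a}\cdot\vec{x}}$. Splitting the sum on $x_q\in\{0,1\}$, the $x_q=0$ piece factorises over $i\neq q$ into $\prod_{i\neq q}(1+(-1)^{a_i})$, which vanishes unless $a_i=0$ for every $i\neq q$; the $x_q=1$ piece similarly factorises into $(-1)^{a_q}\prod_{i\neq q}(1+(-1)^{a_i\oplus[i\in\NN(q)]})$ and vanishes unless $a_i = [i\in\NN(q)]$ for every $i\neq q$, i.e.\ unless the support of $\vec{a}$ on the complement of $q$ coincides exactly with $\NN(q)$.

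Finally, both surviving cases are excluded by the hypotheses of the lemma: the first is ruled out by the existence of some $i\neq q$ with $a_i=1$, the second by the existence of some $j\in\NN(q)$ with $a_j=0$. Hence the overlap vanishes. The only delicate step I anticipate is tracking that the overall Pauli-group sign attached to the string $S_{\vec{x}}$ plays no role in $\epsilon_q$: since $T_q$ is linear, any scalar prefactor multiplies out identically on both sides and only the parity of $Y_q$ in the resolved Pauli string matters, which is precisely what the formula for $\epsilon_q(\vec{x})$ records.
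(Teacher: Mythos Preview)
Your proof is correct and shares the paper's starting point---expanding $\ketbra{G}$ in stabilizers and reducing the overlap to the character sum $2^{-n^2}\sum_{\vec{x}}(-1)^{\vec{a}\cdot\vec{x}+f(\vec{x})}$---but you evaluate this sum differently. The paper first disposes of the case $i\notin\NN(q)$ by invoking Lemma~\ref{lem:rule1}, then for $i,j\in\NN(q)$ observes that $g_ig_j$ acts trivially on qubit $q$, so the involution $(x_i,x_j)\mapsto(x_i\oplus1,x_j\oplus1)$ preserves $f$ while flipping $(-1)^{\vec{a}\cdot\vec{x}}$ (since $a_i+a_j=1$), and the sum cancels in pairs. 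You instead compute $f(\vec{x})=x_q\bigoplus_{k\in\NN(q)}x_k$ explicitly, split on $x_q$, and fully factorise both halves; this pins down the only two patterns of $\vec{a}|_{\{q\}^c}$ for which the overlap could survive, namely the all-zero string and the indicator of $\NN(q)$, each excluded by one hypothesis. Your route is self-contained (no appeal to Lemma~\ref{lem:rule1}) and extracts a bit more structure; the paper's pairing argument is terser once the case distinction is made.
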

\begin{proof}
First, note that $\ket{\rm Cl_{n \times n}}$ can be written as $\ket{0 \dots 0}$ in its graph basis. Thus, according to Lemma~\ref{lem:rule1}, Eq.~(\ref{eq:rule1b}) holds if $i \notin \NN(q)$, independent of the condition on qubit $j$.\\
Thus, it remains to show Eq.~(\ref{eq:rule1b}) for the case $i \in \NN(q)$. Due to Eq.~(\ref{eq:projectorform}), we can write Eq.~(\ref{eq:rule1b}) as

\begin{align}
\label{eq:51}
&\bra{\vec{a}} \left( \ketbra{0 \dots 0}\right)^{T_q} \ket{\vec{a}} \nonumber \\
= &\: \trace \left\lbrace  \left[\prod_{l}\frac{1}{2} \left( \eins + g_l\right) \right]^{T_q} \prod_{k} \frac{1}{2} \left[ \eins + (-1)^{a_k} g_k\right] \right\rbrace \: .
\end{align}

To simplify this expression, we note that 

\be
\label{eq:52}
\prod_{k} \frac{1}{2} \left[ \eins + (-1)^{a_k} g_k\right] = \sum_{\vec{x}} (-1)^{\vec{a} \vec{x}} \prod_{k} g_k^{x_k} \: ,
\ee
where the sum runs over all binary vectors $\vec{x}$ of length $n^2$.

Moreover, we define a boolean function $f$ that characterizes the action of the partial transposition on products of generators in the following way:

\begin{align}
f: \lbrace 0,1 \rbrace^{n^2} &\rightarrow \lbrace 0,1 \rbrace\\
\vec{x} &\mapsto f(\vec{x}) =
\begin{cases}
0, & \mbox{if} \: \left(\prod \limits_{i=1}^{n} g_i^{x_i}\right)^{T_q} = \prod \limits_{i=1}^{n} g_i^{x_i}\\
1, & \mbox{if} \: \left(\prod \limits_{i=1}^{n} g_i^{x_i}\right)^{T_q} = - \prod \limits_{i=1}^{n} g_i^{x_i}
\end{cases}
\nonumber
\:.
\end{align}
Note that $f$ depends on $q$. With these definitions, we can write

\be
\label{eq:53}
\left(\prod_{l} g_l^{y_l}\right)^{T_q} = (-1)^{f(\vec{y})} \prod_{l} g_l^{y_l} \: .
\ee

Applying Eqs.~(\ref{eq:52}) and (\ref{eq:53}) to simplify Eq.~(\ref{eq:51}) results in

\begin{align}
\label{eq:54}
&\bra{\vec{a}} \left( \ketbra{0 \dots 0}\right)^{T_q} \ket{\vec{a}} \nonumber \\
= &\: 4^{-n^2} \trace \left\lbrace \left[ \sum_{\vec{y}} (-1)^{f(\vec{y})} \left(\prod_{k} g_k^{y_k}\right) \right] \left[\sum_{\vec{x}} (-1)^{\vec{a} \vec{x}} \prod_{l} g_l^{x_l}\right] \right\rbrace \nonumber \\
= & \: 2^{-n^2} \sum_{\vec{x} \in \lbrace 0,1 \rbrace^{n^2}} (-1)^{\vec{a} \vec{x} + f(\vec{x})} \:.
\end{align}

In the last step, we have used $\trace ( \prod_{\vec{k}} g^{y_k} \prod_{\vec{l}} g^{x_l}) = 2^{n} \delta_{\vec{x}, \vec{y}}$ vanishes if $\vec{x} \neq \vec{y}$. Since both generators $g_i$ and $g_j$ have $Z$ operator on qubit $q$, their product $g_i g_j$ acts trivially on qubit $q$. Therefore,

\begin{align}
&f(x_1, \dots ,x_i, \dots, x_j ,\dots, x_{n^2}) \nonumber \\
=\:& f(x_1 ,\dots ,x_i \oplus 1 ,\dots ,x_j \oplus 1, \dots ,x_{n^2}) \: .
\end{align}
Furthermore, since $a_i = 1$, $a_j = 0$, a term in the sum of Eq.~(\ref{eq:54}) with $x_i = 1$, $x_j = 1$ will have the opposite as the same term with $x_i$ and $x_j$ flipped. Also, flipping $x_i = 1$, $x_j = 0$ to $x_i = 0$, $x_j = 1$ changes the sign of the corresponding term. Thus, the sum in Eq.~(\ref{eq:54}) vanishes.

\end{proof}

\begin{lemma}
\label{lem:2dbellpairs} Given a 2D cluster state $\ket{Cl_{n \times n}}$ with periodic boundary conditions, $n > 2$ and a bipartition $M \vert \comp{M}$. Let $\lambda_{i}$ be the Schmidt coefficients of $\ket{Cl_{n \times n}}$ with respect to this bipartition. If $\vert M \vert \geq 2$ and $\vert \comp{M} \vert \geq 2$, then

\be
\label{eq:55}
\max_{i} \lambda_{i}^2 \leq \frac{1}{4} \: .
\ee

\end{lemma}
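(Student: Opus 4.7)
The plan is to reduce the bound to an application of Lemma~\ref{lem:rule4} with $|\BB|=2$, which immediately gives $\max_i \lambda_i^2 \leq 2^{-2}=1/4$.

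First I would observe that local unitaries preserve the Schmidt coefficients across $M|\comp{M}$, and that a controlled-$Z$ between two qubits on the same side of the cut is such a local unitary and simply deletes the corresponding edge of the graph. Applying these gates for every intra-partition lattice edge produces a graph state with the same Schmidt spectrum whose underlying graph $G'$ is bipartite with respect to $M|\comp{M}$: only the torus edges crossing the cut survive. Adjacency and common-neighbor relations in $G'$ are thus inherited directly from the torus lattice restricted to cross-cut edges.

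Next I would exhibit a subset $\BB=\{u,v\}$ of size two in $G'$ satisfying the hypotheses of Lemma~\ref{lem:rule4}: each of $u,v$ must have at least one neighbor in $G'$, they must not be adjacent in $G'$, and they must share no common neighbor in $G'$. Because $G'$ inherits its edge set from the torus, these three conditions reduce to choosing two qubits at torus distance at least $3$, each possessing at least one torus-neighbor on the opposite side of the cut. For $n\geq 4$ the ball of torus radius $2$ around any site contains at most $1+4+8=13$ qubits, which is strictly less than $n^2 \geq 16$; combined with connectivity of the torus and the hypothesis $|M|,|\comp{M}|\geq 2$, a simple pigeonhole argument then produces such a pair $u,v$, so Lemma~\ref{lem:rule4} applies.

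The main obstacle is the case $n=3$, where the torus has diameter $2$ and no pair of qubits at lattice distance $\geq 3$ exists, so Lemma~\ref{lem:rule4} cannot be invoked with $|\BB|=2$ in the clean way above. For this case I would argue directly using the standard fact that for any graph state the Schmidt decomposition across $M|\comp{M}$ has Schmidt rank $2^{r}$ with all nonzero Schmidt coefficients equal to $2^{-r/2}$, where $r$ is the GF$(2)$-rank of the biadjacency submatrix between $M$ and $\comp{M}$. The desired bound $\max_i \lambda_i^2 \leq 1/4$ is then equivalent to $r\geq 2$. Assuming for contradiction that $r\leq 1$ forces every row of that submatrix to be either zero or equal to one common vector $\vec{v}$, so all qubits in $M$ with a cross-cut neighbor share a single common cross-cut neighborhood $N\subseteq \comp{M}$; then using $|M|,|\comp{M}|\geq 2$, the fact that each torus vertex has four distinct neighbors, and that distinct vertices of the $3\times 3$ torus share at most two common neighbors, a short counting argument (e.g., any $u\in M$ without a cross-cut neighbor would have all four of its neighbors in $M$, constraining $|M|$) yields the required contradiction.
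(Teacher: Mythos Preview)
Your strategy is genuinely different from the paper's. The paper never invokes Lemma~\ref{lem:rule4}; instead it directly exhibits an LOCC protocol that extracts two disjoint Bell pairs across $M|\comp{M}$, running a short case analysis on how the four chosen qubits $i,j,k,l$ can be connected in the torus (Fig.~\ref{fig:2dcltwobell}). Your idea of passing to the bipartite graph $G'$ and then feeding a two-element set $\BB$ into Lemma~\ref{lem:rule4} is cleaner in spirit, and your $n=3$ fallback via the $\mathrm{GF}(2)$-rank of the biadjacency block is in fact the abstract version of the paper's Bell-pair count (for graph states all nonzero Schmidt coefficients equal $2^{-r/2}$, so $r\geq 2$ is exactly the statement that two Bell pairs can be distilled).

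There is, however, a real gap in your $n\geq 4$ step. The observation that a radius-$2$ ball on the torus contains at most $13$ vertices only tells you that \emph{some} vertex lies at distance $\geq 3$ from any given vertex; it does not by itself guarantee that the set $S$ of vertices possessing a cross-cut neighbour has diameter $\geq 3$. You need two vertices of $S$ that are far apart, and $S$ can in principle be quite small (e.g.\ when $|M|=2$). So ``a simple pigeonhole argument'' is not enough as stated: you would still have to argue, from $|M|,|\comp{M}|\geq 2$ and the $4$-connectivity of the torus, that $S$ cannot be confined to a single radius-$2$ ball. That argument exists, but it is essentially the same kind of boundary case analysis the paper carries out, so you have not actually bypassed it. Likewise, your $n=3$ contradiction is only gestured at: ruling out $r=1$ needs more than the single observation that an interior vertex of $M$ forces $|M|\geq 5$, since you must also handle the case where every vertex of $M$ lies on the boundary.

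A cleaner way to close both gaps at once is to drop the distance-$3$ sufficient condition altogether and argue the rank bound uniformly: show directly that the $M\times\comp{M}$ biadjacency block of the torus has $\mathrm{GF}(2)$-rank at least $2$ whenever $|M|,|\comp{M}|\geq 2$. That is a single combinatorial statement about the torus and subsumes both of your cases; it is also what the paper's two-Bell-pair construction is implicitly proving.
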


\begin{proof} To prove this claim, we provide an LOCC protocol for every possible case which results in two disconnected Bell pairs. Since LOCC does not decrease the largest Schmidt coefficient and a single Bell pair has a Schmidt coefficient of $1/\sqrt{2}$, the upper bound of Eq.~(\ref{eq:55}) follows from it.

Due to the assumptions, there are at least two qubits $i,k \in M$ which each must have a neighbor in $\comp{M}$, say $j \in \NN(i)$, $l \in \NN(k)$ and $j \neq l$. Let us now describe how one can create a Bell pair between $i$ and $j$ and one between $k$ and $l$, both of which are disconnected from the rest of the graph.

If $i$ and $k$ can be chosen in such a way that the qubits $i,j,k,l$ are not connected with each other except for the two edges between $i$ and $j$, $k$ and $l$, measuring out all qubits besides $i, j, k$ and $l$ results in the desired two Bell pairs.

Now, consider the case that the qubits $i,j,k,l$ have more connections amongst each other than the two connections that will be used for the Bell pairs. Then, we first delete all edges that connect qubits of the same partition, which is an LU operation. Then, there are four possible situations as shown in Fig.~\ref{fig:2dcltwobell}. Note that edges that connect the four qubits with other qubits are drawn dashed and in gray, since they might have been deleted by the last operation (and are not needed for the protocol anyway). Morever, situations a) and b) are equivalent to a number of other ones that we did not explicitly draw, in which all the qubits $i,j,k,l$ form a one-dimensional chain [and not a square as in c) and d)]. Note that, if $i$ and $k$ are disconnected in d), $j$ and $l$ must also be disconnected, as $i$ and $k$ being in the same partition implies that also $j$ and $l$ are in the same one.

In cases a) and d), simply measuring out all qubits besides $i,j,k,l$ results in the two Bell pairs.

For cases b) and c), the procedure is slightly more complicated. In case b), we create an edge between the qubits $j$ and $l$, which are in the same partition, via an LU operation. Then, local complementation on $l$ deletes the unwanted edge between $j$ and $k$. Afterwards, we can delete the edge between $j$ and $l$ again. Note that the local complementation possibly also creates (or deletes) other edges between neighbors of $l$. Note that, however, it does not delete the important edges between $i$ and $j$, $k$ and $l$. Moreover, our last step is to measure all qubits besides $i,j,k,l$ which also deletes any such edges that might have been created.

Finally, consider case c). Here, it is not enough to simply consider the qubits $i,j,k,l$, since the four-qubit ring cluster that they build (disregarding any connections to other qubits) is LU-equivalent with a single Bell pair. However, a closer look at situation c) shows that it actually implies that there are four qubits as in b) or in d).

Assume that there is any qubit that neighbors any of the qubits $i,j,k$ or $l$ --- say $k$ --- and is in the opposite partition as $k$. Note that this qubit could not be, in the case of a $3 \times 3$-cluster, a neighbor of $j$ or $l$, since these qubits also lie in the opposite partition as $k$ (and edges between qubits that are in the same partition have been deleted). Thus, we have a situation as in b).

Assume now that there is no neighbor of any of the qubits $i,j,k,l$ that is in the opposite partition as the qubit it neighbors. In other words, each of the four qubits has only neighbors in the same partition. Then, there is another pair of qubits which is in opposite partitions, namely a neighbor of $k$ and one of $l$. This means that there we have a situation as in d).

Thus, we always obtain two Bell pairs and the proof is finished. 

\begin{figure}
\includegraphics[width=0.6\columnwidth]{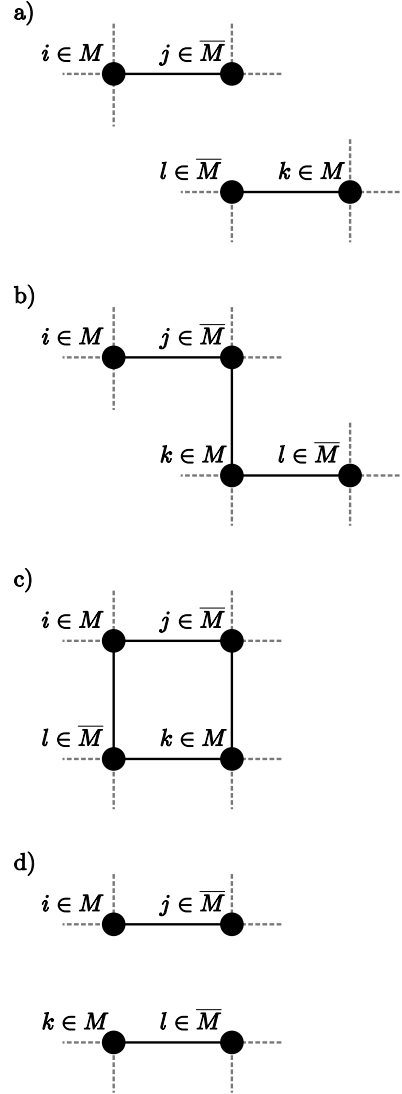}
\caption{\label{fig:2dcltwobell}In a 2D cluster state and for any bipartition $M \vert \comp{M}$, where there are at least two qubits in each partition, one can always obtain two Bell pairs via LOCC operations. Here, we illustrate all possible cases, in which the two Bell pairs are connected to each other in the original 2D cluster state. Note that edges between qubits in the same partition have been deleted. Also, edges that lead to qubits which are not part of the Bell pairs are shown as dashed lines.}
\end{figure}

\end{proof}

Now, we return to the proof of Lemma~\ref{lem:PPTwitCl4x4} which is easy to prove having the last two lemmata in mind. We consider a 2D cluster state of $n \times n$ qubits with $n \geq 3$. As in the proofs before, we write the witness in the form

\be
\label{eq:56}
W_{n \times n} = \frac{1}{2} \eins - \ketbra{Cl_{n \times n}} - \frac{1}{4} P_{+} \:,
\ee
where we defined $P_{+} = \sum_{\vec{k}} \ketbra{\vec{k}} \max_{(i,j)}\: \bra{\vec{k}} D_{(i,j)} \ket{\vec{k}}$ with the operators $D_{(i,j)}$ of Eq.~(\ref{eq:ddef}).

Every generator in $P_{+}$ is neighbored by either two or no other generator. Therefore, $P_{+}$ does not have a $Y$ on any qubit in $M$. Thus, $P_{+}$ is invariant under any partial transposition. Since the witness of Eq.~(\ref{eq:56}) is diagonal in the graph state basis, we need to prove that

\be
\label{eq:57}
\frac{1}{2} - \bra{\vec{k}}\left(\ketbra{Cl_{n \times n}}\right)^{T_M}\ket{\vec{k}} - \frac{1}{4}\bra{\vec{k}}P_{+} \ket{\vec{k}} \geq 0
\ee
holds for all partitions $M$ and all graph basis vectors $\ket{\vec{k}}$.

In the case of a vector $\ket{\vec{k}}$ with $\bra{\vec{k}} P_{+} \ket{\vec{k}} = 0$, we need to show, according to Lemma~\ref{lem:rule2}, that the largest Schmidt coefficient of $\ket{\rm Cl_{n \times n}}$ with respect to $M\vert \comp{M}$ is smaller than (or equal to) $1/\sqrt{2}$. This is trivial, since every connected graph state can be distilled to at least one Bell pair via LOCC operations.

In the case of a vector $\ket{\vec{k}}$ with $\bra{\vec{k}} P_{+} \ket{\vec{k}} = 1$, we have to prove that

\be
\label{eq:58}
\bra{\vec{k}} \left( \ketbra{\rm Cl_{n \times n}} \right)^{T_M} \ket{\vec{k}} \leq \frac{1}{4} \: .
\ee

If $\vert M \vert \geq 2$ and $\vert \comp{M} \vert \geq 2$, we can apply Lemma~\ref{lem:2dbellpairs} (and Lemma~\ref{lem:rule2}) to show this.

Thus, it remains to show Eq.~(\ref{eq:58}) for the case that $\vert M \vert = 1$ or $\vert \comp{M} \vert = 1$. Since $W^{T_M} \geq 0 \Leftrightarrow W^{T_{\comp{M}}} \geq 0$, we can assume w.l.o.g. that $\vert M \vert =1$ and write $M = \lbrace q \rbrace$. Moreover, $\bra{\vec{k}} P_{+} \ket{\vec{k}} = 1$ together with the form of $P_{+}$ [cf. Eq. (\ref{eq:PPTwitCl4x4})] implies that there are two diagonals which we denote by $\DD_{/}^{(x)}$ and $\DD_{\backslash}^{(y)}$ here (cf. Fig.~\ref{fig:2dcases}), on which $\ket{\vec{k}}$ has an odd number of ones. In formulas, 

\begin{figure}
\includegraphics[width=0.9\columnwidth]{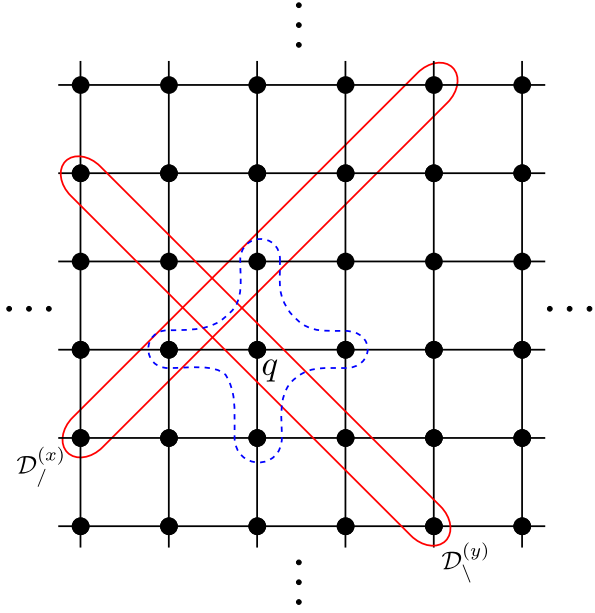}
\caption{\label{fig:2dcases} The proof of the fully PPT witness for an $n \times n$ 2D cluster state considers a one-particle partition $M = \lbrace q \rbrace$ and distinguishes different cases as depicted here. In red, we mark the diagonals $\DD_{/}$ and $\DD_{\backslash}$ mentioned in the text. For more details, see text.}
\end{figure}

\begin{align}
\label{eq:parityprop1}
\prod_{l \in \DD_{/}^{(x)}} g_l \ket{\vec{k}} = \: &- \ket{\vec{k}}, \\
\label{eq:parityprop2}
\prod_{l \in \DD_{\backslash}^{(y)}} g_l \ket{\vec{k}} =  \: & - \ket{\vec{k}} \: .
\end{align}

Note that Lemma~\ref{lem:rule1b} can be used to show that Eq.~(\ref{eq:58}) holds if there exists another qubit $i \neq q$ with $k_i = 1$ and a qubit $j$ in the neighborhood of $q$ with $k_j = 0$. 

Since it is impossible that all qubits $i \neq q$ are zero as this would contradict Eqs.~(\ref{eq:parityprop1}), (\ref{eq:parityprop2}) and the fact that $\DD_{/}^{(x)}$ and $ \DD_{\backslash}^{(y)}$ have non qubit in common, the only other case is that all qubits in the neighborhood of $q$ equal one.

In this case, we only need to consider the five qubits in $\NNN(q) = \NN(q) \cup q$ (marked by a blue, dashed line in Fig.~\ref{fig:2dcases}). Note that the following argumentation is independent from the value of $k_q$ itself. Since $\DD_{/}^{(x)}$ and $ \DD_{\backslash}^{(y)}$ have no qubit in common, it is impossible to choose $q$ in such a way that both the intersection of $\NNN(q)$ with $\DD_{/}^{(x)}$ and the intersection of $\NNN(q)$ with $\DD_{\backslash}^{(y)}$ consist of an odd number of qubits. One of the two intersections always has two or zero qubits. Without loss of generality, we assume that the intersection of $\NNN(q)$ with $\DD_{/}^{(x)}$ has an even number of qubits. An example for this situation is given in Fig.~\ref{fig:2dcases}. Then, due to Eq.~(\ref{eq:parityprop1}), there is a qubit in $\DD_{/}^{(x)}$ which equals one and to which therefore Lemma~\ref{lem:rule1b} can be applied.

This shows that Eq.~(\ref{eq:58}) holds in all cases and finishes the proof.
\end{proof}

\subsection{Entanglement monotone (Lemma~\ref{lem:equalsnegativity})}
\label{sec:proofentmeas}

\begin{proof} 
For the proof that $N(\varrho)$ is indeed an entanglement monotone, we refer to Ref.~\cite{ourpaper}. Moreover, we can rewrite

\begin{align}
\trace (W \vr) = & \: \trace(P_A \vr) + \trace(Q_A^{T_A} \vr ) \nonumber \\
=& \: \trace(P_A \vr) + \trace(Q_A \vr^{T_A} ) \:,
\end{align}
where we used $\trace(C^{T_A} D) = \trace(C D^{T_A})$. This expression is minimized under the constraints $ \eins \geq P_A, Q_A \geq 0$ by letting $P_A = 0$ and $Q_A = \sum_{i} \ketbra{\phi_i}$, where $\ket{\phi_i}$ are the eigenvectors of $\vr^{T_A}$ that correspond to negative eigenvalues. The trace then sums over all negative eigenvalues of $\vr^{T_A}$, which equals the definition of the negativity \cite{negativity}.
\end{proof}

\subsection{Values of the entanglement monotone for graph states (Lemma~\ref{lem:onehalf})}

\label{sec:proofentmeas2}

\begin{proof} {\it (Lemma~\ref{lem:onehalf}) ---}
We define the set of all appropriately normalized witnesses that are decomposable with respect to bipartition $M \vert \comp{M}$ as

\begin{align}
&\mathcal{W}_{M}= \left\{ W \big| \exists\; P, Q \: \mbox{such that} \right.  \nonumber \\
& \qquad \qquad \left. 0 \leq P, Q \leq \mathbbm{1}  \: {\rm and} \: W=P + Q^{T_M} \right\}\!, 
\end{align}
such that the set of all similarly normalized, fully decomposable witnesses $\mathcal{W}$ of Eq.~(\ref{eq:setfullydec}) obeys $\W = \operatorname*{\cap}_{M} \W_M$. Since $\W \subseteq \W_{M_0}$ for any fixed bipartition $M_0 \vert \comp{M_0}$, we have

\be
N(\vr) \leq -\min_{W \in \W_{M_0}} \trace(W \vr) \:.
\ee

According to Lemma~\ref{lem:equalsnegativity}, $\min_{W \in \W_{M_0}} \trace(W \vr)$ equals the negativity with respect to the bipartition $M_0
 \vert \comp{M_0}$. If all particles are qubits, we now choose any bipartition $M_0 \vert \comp{M_0}$ in which $M_0$ only contains one particle, e.g. the bipartition $A \vert B C D \dots$. Then,

\begin{align}
\label{eq:83}
N(\vr) \leq & \: -\min_{W \in \W_{M_0}} \trace(W \vr) \nonumber \\
\leq & \: \max_{\ket{\varphi}} \left(-\min_{W \in \W_{M_0}} \trace(W \ketbra{\varphi})\right) \nonumber \\
= & \: \frac{1}{2} \: .
\end{align}

Here, we use that the expectation value is a linear function and must therefore attain its maximum on a pure state $\ket{\varphi}$. Moreover, the last equality stems from the fact that the negativity with respect to a bipartition $A \vert B C D \dots$, where $A$ is a single qubit, can maximally take on the value one half. This maximum is obtained for the Bell state $\ket{\psi^{+}} =  \left( \ket{00} + \ket{11}\right)/\sqrt{2}$.

If not all particles are qubits, one chooses $M_0$ to consist of a particle that has the smallest dimension of all occuring particles. For example, if $A$ and $B$ are four-level particles and $C$ and $D$ are qutrits, then $M_0 = C$ is a valid choice. In this case, the value of one half in the last line of Eq.~(\ref{eq:83}) must be replaced by $(d_{\rm min}-1)/2$, where $d_{\rm min}$ is the dimension of the particle with lowest dimension. This maximum is obtained for the state $\ket{\psi} = \sum_{i=0}^{d-1} \ket{ii}^{\otimes n}/\sqrt{d_{\rm min}}$. It is the maximal value for the negativity with respect to the given bipartition as can be easily seen using the Schmidt decomposition and the fact that $\ketbra{\psi}^{T_{M_0}}$ has $d_{\rm min} (d_{\rm min}-1) /2$ negative eigenvalues that all equal $-1/d_{\rm min}$.

We now know that the entanglement measure is upper-bounded by one half for states that consist only of qubits. Let us now show that, for graph states, the lower bound is also one half. This is easy to see, since we only have to pick one witness $W_G \in \W$ for the given graph state $\ketbra{G}$. Such a witness is the projector witness

\be
W_G = \frac{1}{2} \eins - \ketbra{G}
\ee
which is even a fully PPT witness. It remains to show that this witness also obeys

\be
\label{eq:93}
\eins \geq W_G^{T_M} = \frac{1}{2} \eins - \left(\ketbra{G}\right)^{T_M} \geq 0 \:.
\ee

Here, positivity follows from the projector witness being a fully PPT witness. For the inequality on the left, we need to show that

\be
\label{eq:94}
\bra{\vec{k}} \left( \ketbra{G}\right)^{T_M} \ket{\vec{k}} \geq - \frac{1}{2}
\ee
holds for all $\ket{\vec{k}}$, since the partial transpose of $\ketbra{G}$ is again graph-diagonal.

In order to prove Eq.~(\ref{eq:94}), we use the Schmidt decomposition $\ket{G} = \sum_{i=1} \lambda_i \ket{\mu_i} \otimes \ket{\nu_i}$ with respect to bipartition $M|\comp{M}$ with positive and real Schmidt coefficients $\lambda_i$. Performing the partial transpose in the basis $\ket{\mu_i} \otimes \ket{\nu_j}$ allows to derive a lower bound on $\bra{\vec{k}} \left( \ketbra{G}\right)^{T_M} \ket{\vec{k}}$ in terms of the Schmidt coefficients in the following way:

\begin{align}
\bra{\vec{k}} \left( \ketbra{G}\right)^{T_M} \ket{\vec{k}} \geq \: & \min_{i \neq j} (- \lambda_i \lambda_j)\nonumber \\
\geq \: & \min_{i} (- \lambda_i \sqrt{1-\lambda_i^2}) \nonumber \\
\geq \: & - \frac{1}{2} \: .
\end{align}
In the second line, we used that, as an entangled state, $\ket{G}$ has at least two non-zero Schmidt coefficients and that the squares of all coefficients must sum up to one. The last line follows from the fact that $0 < \lambda_i < 1$.

Consequently, Eq.~(\ref{eq:93}) holds and $W_G$ lies in $\W$. Therefore,

\be
N(\ketbra{G}) \geq - \trace(W_G \ketbra{G}) = \frac{1}{2} \: .
\ee

Therefore, when considering the entanglement measure of Eq.~(\ref{eq:monotone}), the connected graph states are the maximally entangled states. For them, the measure equals one half.
\end{proof}

\section{Witnesses}
\label{sec:graphstatewit}
Note that all witnesses are presented in their graph state basis. As before, we defined $\g{i}{\pm} = \frac{\eins \pm g_i}{2}$. Since all witnesses are diagonal in the graph basis, we use the shorter notation $\ketbradot{\vec{k}} = \ketbra{\vec{k}}$. Moreover, for some states one can make use of their translational symmetry. In these cases, $\TT(\vec{k})$ denotes all translations of the bit string $\vec{k} = k_1 \dots k_n$. For example, 
\begin{align}
\ketbradot{k_1 k_2 \TT(k_3 k_4 k_5 k_6)} = \: & \ketbradot{k_1 k_2 k_3 k_4 k_5 k_6} \nonumber \\
&+\ketbradot{k_1 k_2 k_6 k_3 k_4 k_5} \nonumber \\
&+ \ketbradot{k_1 k_2 k_5 k_6 k_3 k_4} \nonumber \\
&+\ketbradot{k_1 k_2 k_4 k_5 k_6 k_3}
\end{align}

\noindent \textbf{No. 1, Bell state}
\be
W = \frac{\eins}{2} - \ketbra{G} \nonumber
\ee
\hrule
\vskip 10pt
\noindent \textbf{No. 2, ${\rm GHZ_3}$}
\be
W = \frac{\eins}{2} - \ketbra{G} \nonumber
\ee
\hrule
\vskip 10pt
\noindent \textbf{No. 3, ${\rm GHZ_4}$}
\be
W = \frac{\eins}{2} - \ketbra{G} \nonumber
\ee
\hrule
\vskip 10pt
\noindent \textbf{No. 4, ${\rm Cl_4}$}
\be
W = \frac{\eins}{2} - \ketbra{G} - \frac{1}{2} \g{1}{-} \g{4}{-} \nonumber
\ee
\hrule
\vskip 10pt
\noindent \textbf{No. 5, ${\rm GHZ_5}$}
\be
W = \frac{\eins}{2} - \ketbra{G} \nonumber
\ee
\hrule
\vskip 10pt
\noindent \textbf{No. 6, ${\rm Y_5}$} 
\be
W = \frac{\eins}{2} - \ketbra{G} - \frac{1}{2} \g{1}{-} \g{4}{-} - \frac{1}{2}  \g{1}{+} \g{4}{-} \g{5}{-}
\ee
\hrule
\vskip 10pt
\noindent \textbf{No. 7, ${\rm Cl_5}$}
\begin{align*} 
W = &\frac{\eins}{2} - \ketbra{G} - \frac{1}{2} \g{1}{-} \g{5}{-} \\ 
&- \frac{1}{4} \g{1}{+} \g{2}{-} \g{5}{-} - \frac{1}{4} \g{1}{-} \g{4}{-} \g{5}{+} 
\end{align*}
\hrule
\vskip 10pt
\noindent \textbf{No. 8, ${\rm R_5}$}
\begin{align*}
W = & \: 3 \Big[-  \ketbra{G} +  \ketbradot{\TT(00001)}\Big.\\
&+  \Big. \ketbradot{\TT(00101)}+\ketbradot{\TT(00111)}  \Big]\\
&- \ketbradot{11111} + \ketbradot{\TT(11110)}  \\
&+  \ketbradot{\TT(11010)}+\ketbradot{\TT(11000)} 
\end{align*}
\hrule
\vskip 10pt
\noindent \textbf{No. 9, ${\rm GHZ_6}$}
\be
W = \frac{\eins}{2} - \ketbra{G} \nonumber
\ee
\hrule
\vskip 10pt
\noindent \textbf{No. 10}
\begin{align*} 
W = &\frac{\eins}{2} - \ketbra{G} - \frac{1}{2}\g{1}{-} \g{4}{-}   \\ 
&- \frac{1}{2} \g{1}{+} \g{2}{-} \g{4}{-} - \frac{1}{2} \g{1}{+} \g{2}{+} \g{3}{-} \g{4}{-}
\end{align*}
\hrule
\vskip 10pt
\noindent \textbf{No. 11, ${\rm H_6}$}
\begin{align*} 
W = &\frac{\eins}{2} - \ketbra{G}- \frac{1}{2}\g{1}{-} \g{4}{-}  - \frac{1}{2} \g{1}{+} \g{2}{-} \g{4}{-} \\ 
&- \frac{1}{2}\g{2}{-} \g{3}{-} \g{4}{+} - \frac{1}{2} \g{1}{-} \g{2}{+} \g{3}{-} \g{4}{+} 
\end{align*}
\hrule
\vskip 10pt
\noindent \textbf{No. 12, ${\rm Y_6}$}
\begin{align*}
W = &\frac{\eins}{2} - \ketbra{G}- \frac{1}{2}\g{1}{-} \g{5}{-} - \frac{1}{2} \g{1}{-} \g{4}{-} \g{5}{+}\\
&- \frac{1}{2} \g{1}{+} \g{4}{-} \g{6}{-} - \frac{1}{2}\g{1}{+} \g{4}{+}\g{5}{-} \g{6}{-}
\end{align*}
\hrule
\vskip 10pt
\noindent \textbf{No. 13, ${\rm E_6}$}
\begin{align*} 
W = &\frac{\eins}{2} - \ketbra{G}- \frac{1}{2} \g{1}{-} \g{5}{-} \\ 
&- \frac{1}{2}  \g{1}{-} \g{5}{+} \g{6}{-} - \frac{1}{2}  \g{1}{+} \g{5}{-} \g{6}{-}  \\
&- \frac{1}{4} \g{1}{+} \g{2}{-} \g{5}{-} \g{6}{+}- \frac{1}{4} \g{1}{-} \g{4}{-} \g{5}{+} \g{6}{+}
\end{align*}
\hrule
\vskip 10pt
\noindent \textbf{No. 14, ${\rm Cl_6}$}
\begin{align*}
W = &\frac{\eins}{2} - \ketbra{G}- \frac{1}{2} \g{1}{-} \g{4}{-} \\ 
&- \frac{1}{2} \g{1}{+} \g{3}{-} \g{6}{-} - \frac{1}{2} \g{1}{-} \g{4}{+} \g{6}{-}\\
& - \frac{1}{4} \g{1}{+} \g{2}{-} \g{3}{+} \g{6}{-} - \frac{1}{4} \g{1}{-} \g{4}{+} \g{5}{-} \g{6}{+}\\
& - \frac{1}{4} \ketbradot{011110}
\end{align*}
\hrule
\vskip 10pt
\noindent \textbf{No. 15}
\begin{align*} 
W = &\frac{\eins}{2} - \ketbra{G} - \frac{1}{2}\g{1}{-} \g{2}{-} \g{3}{+} \g{5}{-} - \frac{1}{2} \g{1}{-} \g{2}{-} \g{3}{-} \g{5}{+} \\ 
&- \frac{1}{3} \Big[ \ketbradot{00 \: \TT(0011)} + \ketbradot{01 \: \TT(0011)} \Big. \\
& + \ketbradot{10 \: \TT(0011)} + \ketbradot{010001} + \ketbradot{010010} \\
& + \ketbradot{010101} + \ketbradot{010111} + \ketbradot{011000} \\
& + \ketbradot{011011} + \ketbradot{011101} + \ketbradot{011111} \\
& + \ketbradot{100010} + \ketbradot{100100} + \ketbradot{100101} \\
& + \ketbradot{100111} + \ketbradot{101000} + \ketbradot{101101} \\
& + \ketbradot{101110} + \ketbradot{101111} + \ketbradot{110000} \\
& + \ketbradot{110001} + \ketbradot{110100} + \ketbradot{110101} \\
& + \ketbradot{111010} + \ketbradot{111011} + \ketbradot{111110} \\
& \Big.+ \ketbradot{111111} \Big]
\end{align*}
\hrule
\vskip 10pt
\noindent \textbf{No. 16}
\begin{align*} 
W = \: &\frac{\eins}{2} - \ketbra{G}- \frac{1}{2}\g{1}{-} \g{5}{-} \\ 
&- \frac{1}{2} \g{1}{-} \g{5}{+} \g{6}{-}  - \frac{1}{2} \g{1}{+} \g{5}{-}  \g{6}{-} \\ 
&-\frac{1}{4}  \g{1}{+} \g{2}{+} \g{4}{-} \g{5}{+} \g{6}{-}  -\frac{1}{4}  \g{1}{+} \g{2}{-} \g{4}{+} \g{5}{+} \g{6}{-}  \\
&-\frac{1}{4}  \g{1}{+} \g{2}{+} \g{3}{-} \g{5}{-} \g{6}{+} -\frac{1}{4}  \g{1}{+} \g{2}{-} \g{3}{+} \g{5}{-} \g{6}{+}  \\
&-\frac{1}{4}  \g{1}{-} \g{3}{+} \g{4}{-} \g{5}{+} \g{6}{+}  -\frac{1}{4}  \g{1}{-} \g{3}{-} \g{4}{+} \g{5}{+} \g{6}{+}
\end{align*}
\hrule
\vskip 10pt
\noindent \textbf{No. 17}
\begin{align*}
W = \: &\frac{1}{2} \eins - \ketbra{G} - \frac{1}{2} \left(\g{2}{+} \g{5}{-} +\g{2}{-} \g{5}{+}\right) \g{6}{-}\\
& - \frac{1}{2} \Big[ \ketbradot{001101} +\ketbradot{010110}+\ketbradot{011010} \Big. \\
& \qquad+\ketbradot{011110} +\ketbradot{011111}+\ketbradot{101101}\\
& \qquad+\ketbradot{110110} +\ketbradot{111010}+\ketbradot{111110}\\
& \qquad\Big.+\ketbradot{111111} \Big]\\
& - a  \left(\g{2}{+} \g{5}{+} +\g{2}{-} \g{5}{-}\right)\left(\g{3}{+} \g{4}{-} +\g{3}{-} \g{4}{+}\right) \g{6}{-}\\
& - b \Big[ \ketbradot{000110} + \ketbradot{011000} + \ketbradot{100010} \Big. \\
& \qquad+ \ketbradot{101010} + \ketbradot{101110} + \ketbradot{110000} \\
& \qquad\Big. + \ketbradot{110100} + \ketbradot{111100} \Big] \:, \\
a \approx \: & 0.336, \: b \approx \: 0.163
\end{align*}
\hrule
\vskip 10pt
\noindent \textbf{No. 18, ${\rm R_6}$} 
\begin{align*}
W = \: & \frac{1}{2} \eins - \ketbra{G} \nonumber \\
& - \frac{1}{3} \Big[ \ketbradot{\TT(000011)} + \ketbradot{\TT(001011)} \Big. \nonumber \\
& \qquad \Big. + \ketbradot{\TT(001101)} + \ketbradot{\TT(001111)} \Big] \nonumber \\
& - a \ketbradot{111111} \nonumber \\
& - b \Big[ \ketbradot{\TT(011111)}  + \ketbradot{\TT(010111)} \Big]\\
& - c \Big[ \ketbradot{\TT(001001)}  + \ketbradot{\TT(011011)} \Big. \nonumber \\
& \qquad + \Big. \ketbradot{\TT(010101)}\Big] \:, \\
a \approx \: & 0.455,\: b \approx \: 0.363, \:c \approx \: 0.272
\end{align*}
Note that expressions like $\TT(001001)$ only sum over distinct translations, i.e. $\ketbradot{\TT(011011)} = \ketbradot{011011} +\ketbradot{101101} + \ketbradot{110110}$.
\vskip 10pt
\hrule
\vskip 10pt
\noindent \textbf{No. 19}
\begin{align*}
W =\:&  \frac{1}{2} \eins - \ketbra{G} \nonumber \\
& - \frac{1}{3} \Big[ + \ketbradot{\TT(111110)} + \ketbradot{\TT(000011)} \Big. \nonumber \\
& \qquad + \ketbradot{\TT(000101)} + \ketbradot{\TT(000111)}  \\
& \qquad + \ketbradot{\TT(001001)} + \ketbradot{\TT(011011)} \nonumber \\
& \qquad + \ketbradot{001101} + \ketbradot{010011} + \ketbradot{010110} \nonumber \\
& \qquad + \ketbradot{011010} + \ketbradot{011101} + \ketbradot{011110} \nonumber \\
& \qquad + \ketbradot{100101} + \ketbradot{101001} + \ketbradot{101011} \nonumber \\
& \qquad + \ketbradot{101100} + \ketbradot{101110} + \ketbradot{110010} \nonumber \\
& \qquad \Big. + \ketbradot{110011} + \ketbradot{110101} + \ketbradot{111111} \Big] \\
\end{align*} 
Again, expressions like $\TT(001001)$ only sum over distinct translations.

\bibliographystyle{apsrev}

\end{document}